\DeclareMathOperator{\tr}{Tr}
\DeclareMathOperator{\diag}{diag}
\newcommand{\bs}{\boldsymbol}
\newcommand{\N}{\mathbb N}
\newcommand{\R}{\mathbb R}
\newcommand{\Z}{\mathbb Z}
\newcommand{\C}{\mathbb{C}}
\newcommand{\E}{\mathbb{E}}
\newcommand{\PP}{\mathbb{P}}
\newcommand{\norme}[1]{\left\Vert #1\right\Vert}
\newtheorem{lemma}{Lemma}
\newtheorem{theorem}{Theorem}
\newtheorem{remark}{Remark}
\newtheorem{assumption}{Assumption}
\begin{document}

\title{Estimation of Toeplitz Covariance Matrices in \\ Large Dimensional Regime with Application\\ to Source Detection}

\author{Julia Vinogradova, Romain Couillet, and Walid Hachem
\thanks{The first and third authors are with CNRS LTCI; Telecom ParisTech, 
France (e-mail: julia.vinogradova@telecom-paristech.fr; walid.hachem@telecom-paristech.fr). The second author is with Supelec, France (e-mail: romain.couillet@supelec.fr). 
This work is supported in part by the French Ile-de-France region, DIM LSC fund, Digiteo project DESIR, and in part by the ANR-12-MONU-OOO3 DIONISOS.
}}

\date{\today} 
  
\maketitle

\def\herm{{\sf H}}
\def\asto{\overset{\rm a.s.}{\longrightarrow}}

\begin{abstract}
In this article, we derive concentration inequalities for the spectral norm of two classical sample estimators of large dimensional Toeplitz covariance matrices, demonstrating in particular their asymptotic almost sure consistence. The consistency is then extended to the case where the aggregated matrix of time samples is corrupted by a rank one (or more generally, low rank) matrix. As an application of the latter, the problem of source detection in the context of large dimensional sensor networks within a temporally correlated noise environment is studied. As opposed to standard procedures, this application is performed online, \textit{i.e.} without the need to possess a learning set of pure noise samples. 
\end{abstract}
\vspace{.3mm}
\begin{keywords}
Covariance matrix, concentration inequalities, correlated noise, source detection.
\end{keywords} 

\section{Introduction}

Let $(v_{t})_{t\in\Z}$ be a complex circularly symmetric Gaussian stationary process with zero mean and covariance function $(r_k)_{k\in\Z}$ with $r_k=\E[v_{t+k}v^*_{t}]$ and $r_k\to0$ as $k\to\infty$. We observe $N$ independent copies of $(v_t)_{t\in\Z}$ over the time window $t\in\{0,\ldots,T-1\}$, and stack the observations in a matrix  $V_T=[ v_{n,t} ]_{n,t = 0}^{N-1, T-1}$. This matrix can be written as $V_T=W_TR_T^{1/2}$, where $W_T\in\C^{N\times T}$ has independent $\mathcal{CN}(0,1)$ (standard circularly symmetric complex Gaussian) entries and $R_T^{1/2}$ is any square root of the Hermitian nonnegative definite Toeplitz $T\times T$ matrix
\begin{equation*}
R_T \triangleq \left[ r_{i-j} \right]_{0\leq i,j\leq T-1} = \begin{bmatrix}
r_0 & r_{1} & \ldots & r_{T-1} \\ 
r_{-1} & \ddots & \ddots & \vdots \\ 
\vdots  & \ddots & \ddots & r_{1}\\ 
r_{1-T} & \ldots & r_{-1} & r_0
\end{bmatrix}.
\end{equation*}
A classical problem in signal processing is to estimate $R_T$ from the observation of $V_T$. 
With the growing importance of multi-antenna array processing, there has recently been a renewed interest for this estimation problem in the regime of large system dimensions, {\it i.e.} for both $N$ and $T$ large. 

At the core of the various estimation methods for $R_T$ are the biased and unbiased estimates $\hat{r}_{k,T}^b$ and $\hat{r}_{k,T}^u$ for $r_k$, respectively, defined by
\begin{align*}
	\hat{r}_{k,T}^b &= \frac{1}{NT}\sum_{n=0}^{N-1}\sum_{t=0}^{T-1} v_{n,t+k} v_{n,t}^* \mathbbm{1}_{0 \leq t+k \leq T-1} \\
	\hat{r}_{k,T}^u &= \frac{1}{N(T-|k|)} \sum_{n=0}^{N-1} \sum_{t=0}^{T-1} v_{n,t+k} v_{n,t}^* \mathbbm{1}_{0 \leq t+k \leq T-1}
\end{align*}
where $\mathbbm{1}_A$ is the indicator function on the set $A$.
Depending on the relative rate of growth of $N$ and $T$, the matrices $\widehat{R}_{T}^b = [\hat{r}_{i-j,T}^b]_{0 \leq i,j \leq T-1}$ and $\widehat{R}_{T}^u = [\hat{r}_{i-j,T}^u]_{0 \leq i,j \leq T-1}$ may not satisfy $\Vert R_T - \widehat{R}_{T}^b \Vert \asto 0$ or $\Vert R_T - \widehat{R}_{T}^u \Vert \asto 0$. An important drawback of the biased entry-wise estimate lies in its inducing a general asymptotic bias in $\widehat{R}_{T}^b$; as for the unbiased entry-wise estimate, it may induce too much inaccuracy in the top-right and bottom-left entries of $\widehat{R}_{T}^u$. The estimation paradigm followed in the recent literature generally consists instead in building banded or tapered versions of $\widehat{R}_{T}^b$ or $\widehat{R}_{T}^u$ ({\it i.e.} by weighting down or discarding a certain number of entries away from the diagonal), exploiting there the rate of decrease of $r_k$ as $k\to\infty$ \cite{WuPour'09,BickLev'08a,LamFan'09,XiaoWu'12,CaiZhangZhou'10,CaiRenZhou'13}.
Such estimates use the fact that $\Vert R_T-R_{\gamma(T),T}\Vert \to 0$ with $R_{\gamma,T} = [ [R_T]_{i,j} \mathbbm{1}_{|i-j| \leq \gamma}]$ for some well-chosen functions $\gamma(T)$ (usually satisfying $\gamma(T)\to \infty$ and $\gamma(T)/T\to 0$) and restrict the study to the consistent estimation of $R_{\gamma(T),T}$. The aforementioned articles concentrate in particular on choices of functions $\gamma(T)$ that ensure optimal rates of convergence of $\Vert R_T-\widehat{R}_{\gamma(T),T}\Vert$ for the banded or tapered estimate $\widehat{R}_{\gamma(T),T}$.
These procedures, although theoretically optimal, however suffer from several practical limitations. First, they assume the {\it a priori} knowledge of the rate of decrease of $r_k$ (and restrict these rates to specific classes). Then, even if this were indeed known in practice, being asymptotic in nature, the results do not provide explicit rules for selecting $\gamma(T)$ for practical finite values of $N$ and $T$. Finally, the operations of banding and tapering do not guarantee the positive definiteness of the resulting covariance estimate.

In the present article, we consider instead that the only constraint about $r_k$ is $\sum_{k=-\infty}^\infty |r_k|<\infty$ and estimate $R_T$ from the standard (non-banded and non-tapered) estimates $\widehat{R}_T^b$ and $\widehat{R}_T^u$. The consistence of these estimates, in general invalid, shall be enforced here by the choice $N,T\to\infty$ with $N/T\to c\in(0,\infty)$. This setting is more practical in applications as long as both the finite values $N$ and $T$ are sufficiently large and of similar order of magnitude.
Another context where a non banded Toeplitz rectification of the estimated 
covariance matrix leads to a consistent estimate in the spectral norm is 
studied in \cite{val-loub-icassp14}. 

Our specific contribution lies in the establishment of concentration inequalities for the random variables $\Vert R_T-\widehat{R}_T^b\Vert$ and $\Vert R_T-\widehat{R}_T^u\Vert$. It is shown specifically that, for all $x>0$, $-\log \mathbb{P}[\Vert R_T-\widehat{R}_T^b \Vert> x ]= O(T)$ and $-\log \mathbb{P}[\Vert R_T-\widehat{R}^u_T\Vert > x ]= O(T/ \log T)$. Aside from the consistence in norm, this implies as a corollary that, as long as $\limsup_T\Vert R_T^{-1}\Vert<\infty$, for $T$ large enough, $\widehat{R}^u_T$ is positive definite with outstanding probability ($\widehat{R}^b_T$ is nonnegative definite by construction).

For application purposes, the results are then extended to the case where $V_T$ is changed into $V_T+P_T$ for a rank-one matrix $P_T$. Under some conditions on the right-eigenspaces of $P_T$, we show that the concentration inequalities hold identically. The application is that of a single source detection (modeled through $P_T$) by an array of $N$ sensors embedded in a temporally correlated noise (modeled by $V_T$). To proceed to detection, $R_T$ is estimated from $V_T+P_T$ as $\widehat{R}_T^b$ or $\widehat{R}_T^u$, which is used as a whitening matrix, before applying a generalized likelihood ratio test (GLRT) procedure on the whitened observation. Simulations corroborate the theoretical consistence of the test. 

The remainder of the article is organized as follows. The concentration inequalities for both biased and unbiased estimates are exposed in Section~\ref{unperturbed}. The generalization to the rank-one perturbation model is presented in Section~\ref{sig+noise} and applied in the practical context of source detection in Section~\ref{detect}.

{\it Notations:} The superscript $(\cdot)^{\sf H}$ denotes Hermitian transpose, $\left\| X \right\|$ stands for the spectral norm for a matrix and Euclidean norm for a vector, and $\| \cdot \|_\infty$ is the $\sup$ norm of a function. The notations ${\cal N}(a,\sigma^2)$ and ${\cal CN}(a,\sigma^2)$ represent the real and complex circular Gaussian distributions with mean $a$ and variance $\sigma^2$. For $x \in \C^{m}$, $D_x=\diag (x)=\diag (x_0, \ldots, x_{m-1} )$ is the diagonal matrix having on its diagonal the elements of the vector $x$.
For $x=[x_{-(m-1)},\ldots,x_{m-1}]^{\sf T} \in \C^{2m+1}$, the matrix ${\cal T}(x) \in \C^{m \times m}$ is the Toeplitz matrix built from $x$ with entries $[{\cal T}(x)]_{i,j}=x_{j-i}$.
The notations $\Re(\cdot)$ and $\Im(\cdot)$ stand for the real and the imaginary parts respectively.

\section{Performance of the covariance matrix estimators} 
\label{unperturbed} 

\subsection{Model, assumptions, and results}
\label{subsec-model} 

Let $(r_k)_{k\in\Z}$ be a doubly 
infinite sequence of covariance coefficients. For any $T \in \N$, let 
$R_T = \mathcal T( r_{-(T-1)},\ldots, r_{T-1})$, a Hermitian nonnegative definite matrix. 
Given $N = N(T) > 0$, consider the 
matrix model 
\begin{equation}
V_T = [ v_{n,t} ]_{n,t = 0}^{N-1, T-1}  = W_T R_T^{1/2}
\label{model1}
\end{equation}
where $W_T = [ w_{n,t} ]_{n,t = 0}^{N-1, T-1}$ has independent ${\cal CN}(0,1)$ entries. 
It is clear that $r_k = \E [v_{n,t+k} v_{n,t}^*]$  for any $t$, $k$, and $n \in \{0,\ldots, N-1\}$.

In the following, we shall make the two assumptions below.
\begin{assumption} 
\label{ass-rk} 
The covariance coefficients $r_k$ are absolutely summable and $r_0 \neq 0$. 
\end{assumption} 
With this assumption, the covariance function 
\[
{\bs\Upsilon}(\lambda) \triangleq 
\sum_{k=-\infty}^\infty r_k e^{-\imath k\lambda}, \quad 
\lambda \in [0, 2\pi) 
\] 
is continuous on the interval $[0, 2\pi]$. Since $\| R_T \| \leq \| \bs\Upsilon \|_\infty$ (see \emph{e.g.} \cite[Lemma 4.1]{Gray'06}), Assumption \ref{ass-rk} 
implies that $\sup_T \| R_T \| < \infty$.  

We assume the following asymptotic regime which will be simply denoted as ``$T\rightarrow\infty$'':
\begin{assumption} 
\label{ass-regime} 
$T \rightarrow \infty$ and $N/T \rightarrow c > 0$.
\end{assumption} 


Our objective is to study the performance of two estimators of the 
covariance function frequently considered in the literature. These estimators are defined as 
\begin{align}
\label{est-b} 
\hat{r}_{k,T}^b&= 
\frac{1}{NT} \sum_{n=0}^{N-1}\sum_{t=0}^{T-1}
v_{n,t+k} v_{n,t}^* \mathbbm{1}_{0 \leq t+k \leq T-1} \\
\label{est-u} 
\hat{r}_{k,T}^u&=\frac{1}{N(T-|k|)}
\sum_{n=0}^{N-1}\sum_{t=0}^{T-1} v_{n,t+k} v_{n,t}^* 
\mathbbm{1}_{0 \leq t+k \leq T-1}.
\end{align}
Since  $\E \hat{r}_{k,T}^b = ( 1 - |k|/T ) r_k$ and $\E \hat{r}_{k,T}^u = r_k$,
the estimate $\hat{r}_{k,T}^b$ is biased while $\hat{r}_{k,T}^u$ is unbiased. 
Let also 
\begin{align}
	\label{est-Rb}
\widehat R^b_T &\triangleq {\cal T}\left( \hat{r}_{-(T-1),T}^b, \ldots, 
\hat{r}_{(T-1),T}^b \right) \\
	\label{est-Ru}
\widehat R^u_T &\triangleq {\cal T} \left( \hat{r}_{-(T-1),T}^u, \ldots, 
\hat{r}_{(T-1),T}^u \right).
\end{align}
A well known advantage of $\widehat R^b_T$ over $\widehat R^u_T$ as an estimate of $R_T$ is its structural nonnegative definiteness. 
In this section, results on the spectral behavior of these matrices are provided under the form of concentration inequalities on $\| \widehat R^b_T - R_T \|$ and $\| \widehat R^u_T - R_T \|$: 
\begin{theorem}
\label{th-biased} 
Let Assumptions~\ref{ass-rk} and \ref{ass-regime} hold true and let $\widehat R^b_T$ be defined as in \eqref{est-Rb}.
Then, for any $x>0$,
\begin{equation*}
\mathbb{P} \left[\norme{\widehat R_T^b - R_T} > x \right] \leq
\exp \left( -cT \left( \frac{x}{\| \bs\Upsilon \|_\infty} - 
 \log \left( 1 + \frac{x}{\| \bs\Upsilon \|_\infty} \right) + o(1) \right)
\right)
\end{equation*}
where $o(1)$ is with respect to $T$ and depends on $x$.
\end{theorem}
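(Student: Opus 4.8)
The plan is to bound $\norme{\widehat R_T^b-R_T}$ by the sup-norm of the symbol of this Toeplitz matrix, to identify that symbol at each frequency with a scaled Gamma variable plus a deterministic $o(1)$ bias, and finally to promote the resulting pointwise Chernoff estimate to a uniform one. First, since $\mathcal{T}(\cdot)$ is linear, $\widehat R_T^b-R_T$ is the $T\times T$ Toeplitz matrix with $k$-th diagonal $\hat r_{k,T}^b-r_k$ for $|k|\le T-1$; applying \cite[Lemma~4.1]{Gray'06} to its symbol $\widehat\phi_T(\lambda):=\sum_{|k|\le T-1}(\hat r_{k,T}^b-r_k)e^{-\imath k\lambda}$ gives $\norme{\widehat R_T^b-R_T}\le\sup_{\lambda}|\widehat\phi_T(\lambda)|$. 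Substituting $s=t+k$ in \eqref{est-b}, the random part telescopes into a periodogram, $\sum_{|k|\le T-1}\hat r_{k,T}^be^{-\imath k\lambda}=\frac{1}{NT}\norme{V_Te_\lambda}^2$ with $e_\lambda=(1,e^{-\imath\lambda},\dots,e^{-\imath(T-1)\lambda})^{\sf T}$, so that $\widehat\phi_T(\lambda)=\frac{1}{NT}\norme{V_Te_\lambda}^2-\sigma_T(\lambda)$ where $\sigma_T(\lambda)=\sum_{|k|\le T-1}r_ke^{-\imath k\lambda}$.

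Next I would fix $\lambda$. Writing $V_Te_\lambda=W_Tu_\lambda$ with $u_\lambda=R_T^{1/2}e_\lambda$ (using \eqref{model1}), the $N$ entries of $V_Te_\lambda$ are independent $\mathcal{CN}(0,\norme{u_\lambda}^2)$, hence $\norme{V_Te_\lambda}^2=\norme{u_\lambda}^2G_\lambda$ with $G_\lambda$ a sum of $N$ independent exponential random variables of unit mean. Two elementary observations isolate the bias: $\norme{u_\lambda}^2=e_\lambda^{\sf H}R_Te_\lambda\le T\norme{R_T}\le T\| \bs\Upsilon \|_\infty$, and $\frac{1}{T}\norme{u_\lambda}^2-\sigma_T(\lambda)=-\sum_{|k|\le T-1}\frac{|k|}{T}r_ke^{-\imath k\lambda}$ has modulus at most $\rho_T:=\frac{1}{T}\sum_{|k|\le T-1}|k|\,|r_k|$, and $\rho_T\to 0$ by dominated convergence under Assumption~\ref{ass-rk}. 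Thus $|\widehat\phi_T(\lambda)|\le\| \bs\Upsilon \|_\infty\,|G_\lambda/N-1|+\rho_T$, and the Cram\'er bound for the Gamma law (its lower tail being dominated by the upper one) gives, for $T$ large enough that $x>\rho_T$,
\begin{equation*}
\mathbb{P}\big[|\widehat\phi_T(\lambda)|>x\big]\le 2\exp\!\left(-N\left(\frac{x-\rho_T}{\| \bs\Upsilon \|_\infty}-\log\Big(1+\frac{x-\rho_T}{\| \bs\Upsilon \|_\infty}\Big)\right)\right).
\end{equation*}

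To pass from fixed $\lambda$ to the supremum I would discretise. The function $\widehat\phi_T$ is a trigonometric polynomial of degree at most $T-1$; on the high-probability event $\mathcal{E}_T=\{\norme{W_T}\le T^{2/3}\}$, whose complement has probability $e^{-\Omega(T^{4/3})}$ by Gaussian concentration of the largest singular value, one has $\sup_\lambda|\widehat\phi_T(\lambda)|\le\frac{\| \bs\Upsilon \|_\infty}{N}\norme{W_T}^2+\sum_{k}|r_k|$, a deterministic bound polynomial in $T$, hence by Bernstein's inequality for trigonometric polynomials $\widehat\phi_T$ is Lipschitz on $\mathcal{E}_T$ with a deterministic constant $L_T$ polynomial in $T$. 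Covering $[0,2\pi)$ by $M_T=O(L_T\log T)$ equispaced points, on $\mathcal{E}_T$ one has $\sup_\lambda|\widehat\phi_T(\lambda)|\le\max_j|\widehat\phi_T(\lambda_j)|+1/\log T$, so a union bound with the previous display yields
\begin{equation*}
\mathbb{P}\big[\norme{\widehat R_T^b-R_T}>x\big]\le\mathbb{P}[\mathcal{E}_T^c]+2M_T\exp\!\left(-N\left(\frac{x-\rho_T-1/\log T}{\| \bs\Upsilon \|_\infty}-\log\Big(1+\frac{x-\rho_T-1/\log T}{\| \bs\Upsilon \|_\infty}\Big)\right)\right).
\end{equation*}
Since $\log M_T=O(\log T)=o(T)$, $\mathbb{P}[\mathcal{E}_T^c]$ is negligible at the scale $e^{-\Theta(T)}$, $\rho_T+1/\log T\to 0$, the map $t\mapsto t-\log(1+t)$ is continuous at $x/\| \bs\Upsilon \|_\infty>0$, and $N/T\to c$ by Assumption~\ref{ass-regime}, the right-hand side reduces to $\exp\big(-cT(x/\| \bs\Upsilon \|_\infty-\log(1+x/\| \bs\Upsilon \|_\infty)+o(1))\big)$, which is the claim.

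I expect the third step to be the only genuine difficulty. The pointwise Chernoff bound is routine, but the supremum over the continuum of frequencies has to be recovered, and the random trigonometric polynomial $\widehat\phi_T$ admits only a polynomial-in-$T$ Lipschitz control, and only on a high-probability event. One must therefore verify with some care that both the cardinality of the net ($e^{O(\log T)}$ terms) and the probability of the exceptional event are negligible compared with the target rate $e^{-cT(\cdot)}$; this is also what forces the $o(1)$ in the statement to be allowed to depend on $x$.
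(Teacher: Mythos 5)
Your proof is correct and follows essentially the same route as the paper: bound the spectral norm by the sup of the Toeplitz symbol (Gray's lemma), represent the symbol as a periodogram-type quadratic form so that at each frequency the fluctuation is a scaled centered sum of $N$ i.i.d.\ exponentials amenable to a Chernoff bound with rate $N\left(y-\log(1+y)\right)$, separate the deterministic bias $\frac1T\sum_{|k|\le T-1}|k\,r_k|\to 0$, and recover the supremum by a polynomial-size discretization plus union bound whose $O(\log T)$ cost is absorbed into the $o(1)$. The only (harmless) deviation is in controlling the oscillation between grid points: the paper uses the Lipschitz estimate on $d_T(\lambda)$ together with a Chernoff bound on $\tr(W_T^{\sf H}W_T)$ over a grid of $\lfloor T^{\beta}\rfloor$ points with $\beta>2$ (its terms $\chi_1$ and $\chi_3$), whereas you truncate on the event $\{\|W_T\|\le T^{2/3}\}$ and invoke Bernstein's inequality for trigonometric polynomials; both devices produce errors that are negligible at the scale $e^{-\Theta(T)}$, so the final assembly is the same.
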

\begin{theorem}
\label{th-unbiased} 
Let Assumptions~\ref{ass-rk} and \ref{ass-regime} hold true and let $\widehat R^u_T$ be defined as in \eqref{est-Ru}.
Then, for any $x>0$,
\begin{equation*}
\mathbb{P} \left[\norme{\widehat R_T^u - R_T} > x \right] 
\leq 
\exp \left(- \frac{cTx^2}{4\norme{\bs\Upsilon}_{\infty}^2\log T} (1 + o(1)) \right) 
\end{equation*}
where $o(1)$ is with respect to $T$ and depends on $x$.
\end{theorem}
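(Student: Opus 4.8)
\emph{Proof strategy.} The plan is to dominate $\norme{\widehat R_T^u-R_T}$ by the supremum, over a \emph{single} real parameter, of a random trigonometric polynomial, then discretize that parameter, and finally control each fixed frequency by a Cram\'er--Chernoff bound for an average of Gaussian quadratic forms. Since $\widehat R_T^u-R_T=\mathcal T\bigl(\hat r^u_{-(T-1),T}-r_{-(T-1)},\ldots,\hat r^u_{(T-1),T}-r_{(T-1)}\bigr)$ is a Hermitian Toeplitz matrix, the standard majorization of a Hermitian Toeplitz matrix by the sup-norm of its (finite) symbol — the same circle of ideas that gave $\norme{R_T}\le\norme{\bs\Upsilon}_\infty$ above — yields
\[
\norme{\widehat R_T^u-R_T}\ \le\ \sup_{\lambda\in[0,2\pi)}|h_T(\lambda)|,\qquad
h_T(\lambda)\triangleq\sum_{|k|\le T-1}\bigl(\hat r^u_{k,T}-r_k\bigr)e^{-\imath k\lambda},
\]
and $h_T(\lambda)\in\R$ because $\hat r^u_{-k,T}=\overline{\hat r^u_{k,T}}$ and $r_{-k}=\overline{r_k}$. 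This reduction is the crucial point: it trades the unit sphere of $\C^T$, which only admits $\varepsilon$-nets of cardinality exponential in $T$, for the interval $[0,2\pi)$, on which polynomial-size nets suffice — and the exponent we are after is only of order $T/\log T$, so an exponential loss would be fatal. Substituting \eqref{est-u} and reindexing $(t,t+k)\leftrightarrow(a,b)$ gives the quadratic-form representation
\[
h_T(\lambda)=\frac1N\sum_{n=0}^{N-1}\Bigl(v_n^{\herm}\Phi_\lambda v_n-\E\bigl[v_n^{\herm}\Phi_\lambda v_n\bigr]\Bigr),
\qquad [\Phi_\lambda]_{a,b}=\frac{e^{-\imath(b-a)\lambda}}{T-|b-a|},
\]
where $v_n\in\C^T$ denotes the $n$-th row of $V_T$ and $\Phi_\lambda$ is Hermitian; note for later that
\[
\norme{\Phi_\lambda}_F^{\,2}=\sum_{a,b}\frac1{(T-|b-a|)^2}=\sum_{|k|\le T-1}\frac1{T-|k|}=2\log T+O(1).
\]

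Because $h_T$ is a trigonometric polynomial of degree at most $T-1$, Bernstein's inequality $\norme{h_T'}_\infty\le(T-1)\norme{h_T}_\infty$ gives $\sup_\lambda|h_T(\lambda)|\le(1+o(1))\max_{\lambda\in\Lambda_T}|h_T(\lambda)|$ for any equispaced net $\Lambda_T$ of spacing $T^{-2}$ (so $|\Lambda_T|=O(T^2)$); hence it suffices to bound $\mathbb{P}\bigl[h_T(\lambda)>x(1-o(1))\bigr]$ at a fixed $\lambda$ — the bound for $-h_T(\lambda)$ being identical, replacing $\Phi_\lambda$ by $-\Phi_\lambda$ — and to multiply by $|\Lambda_T|$. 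Fix $\lambda$. Each $v_n$ is a centred complex Gaussian vector whose covariance matrix equals $\overline{R_T}$, in particular has spectral norm $\norme{R_T}\le\norme{\bs\Upsilon}_\infty$, so $h_T(\lambda)$ is a centred average of $N$ i.i.d.\ Gaussian quadratic forms.

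The heart of the argument is the Cram\'er bound for this average. Writing $\nu_1,\ldots,\nu_T\in\R$ for the eigenvalues of $\overline{R_T}\,\Phi_\lambda$, the moment generating function of a complex Gaussian quadratic form gives, for $0<\tfrac{\theta}{N}\max_i\nu_i<1$,
\[
\log\E\exp\bigl(\theta\,h_T(\lambda)\bigr)=N\sum_{i=1}^T\Bigl(-\tfrac{\theta}{N}\nu_i-\log\bigl(1-\tfrac{\theta}{N}\nu_i\bigr)\Bigr).
\]
Everything then rests on two estimates, \emph{uniform in $\lambda$}: using $\tr\bigl((\overline{R_T}\,\Phi_\lambda)^2\bigr)=\norme{\overline{R_T}^{1/2}\Phi_\lambda\overline{R_T}^{1/2}}_F^{\,2}$ and $\norme{ABC}_F\le\norme{A}\,\norme{B}_F\,\norme{C}$,
\[
\sum_i\nu_i^2=\tr\bigl((\overline{R_T}\,\Phi_\lambda)^2\bigr)\le\norme{R_T}^2\,\norme{\Phi_\lambda}_F^{\,2}\le 2\norme{\bs\Upsilon}_\infty^2\log T\,(1+o(1)),\qquad
\max_i|\nu_i|\le\norme{R_T}\,\norme{\Phi_\lambda}_F=O\bigl(\norme{\bs\Upsilon}_\infty\sqrt{\log T}\bigr).
\]
Choosing $\theta=Nx/\bigl(2\norme{\bs\Upsilon}_\infty^2\log T\bigr)$, the increments $\tfrac{\theta}{N}\nu_i$ are $O(x/\sqrt{\log T})=o(1)$ uniformly in $i$ and $\lambda$, so $-u-\log(1-u)=\tfrac{u^2}{2}(1+o(1))$ applies term by term and the cumulant above equals $\tfrac{\theta^2}{2N}\sum_i\nu_i^2(1+o(1))\le\tfrac{\theta^2}{N}\norme{\bs\Upsilon}_\infty^2\log T\,(1+o(1))$. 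Plugging this into $\mathbb{P}[h_T(\lambda)>y]\le\exp\bigl(\theta y-\log\E e^{\theta h_T(\lambda)}\bigr)$ with $y=x(1-o(1))$ gives
\[
\mathbb{P}\bigl[h_T(\lambda)>y\bigr]\le\exp\Bigl(-\tfrac{Nx^2}{4\norme{\bs\Upsilon}_\infty^2\log T}\,(1+o(1))\Bigr).
\]

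Finally, a union bound over the $O(T^2)$ points of $\Lambda_T$ together with $N=cT(1+o(1))$ from Assumption~\ref{ass-regime} yields
\[
\mathbb{P}\bigl[\norme{\widehat R_T^u-R_T}>x\bigr]\le\exp\Bigl(O(\log T)-\tfrac{cTx^2}{4\norme{\bs\Upsilon}_\infty^2\log T}(1+o(1))\Bigr),
\]
and since the dominant term is of order $T/\log T$, which dwarfs $\log T$, the $O(\log T)$ prefactor is absorbed into the $(1+o(1))$ and the claimed inequality follows. The step I expect to be the genuine obstacle is the pair of uniform estimates above: the trace bound $\tr\bigl((\overline{R_T}\,\Phi_\lambda)^2\bigr)\le 2\norme{\bs\Upsilon}_\infty^2\log T\,(1+o(1))$ is precisely what produces both the $\log T$ in the denominator and the constant $4$ — it is the analytic fingerprint of the large variance of $\hat r^u_{k,T}$ for $|k|$ close to $T$, encoded by the weights $1/(T-|k|)$ in $\Phi_\lambda$ — while its crude companion $\max_i|\nu_i|=O(\sqrt{\log T})$ is exactly what keeps the Cram\'er-optimal $\theta$ in the sub-Gaussian regime $\tfrac{\theta}{N}\max_i|\nu_i|\to0$, so that no worse-than-Gaussian correction enters the rate.
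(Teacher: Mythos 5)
Your proof is correct, and its core coincides with the paper's: the same reduction of $\norme{\widehat R_T^u-R_T}$ to the sup of the symbol $\widehat\Upsilon^u_T-\Upsilon_T$, the same quadratic-form representation (your $\Phi_\lambda$ is exactly the paper's $D_T(\lambda)B_TD_T(\lambda)^{\sf H}$, so your $\nu_i$ are the paper's $\sigma_t$), the same key estimates $\sum_i\nu_i^2\le 2\norme{\bs\Upsilon}_\infty^2\log T+O(1)$ and $\max_i|\nu_i|=O(\sqrt{\log T})$ coming from $\sum_{|k|<T}(T-|k|)^{-1}\sim 2\log T$, and the identical optimal tilt $\theta=Nx/(2\norme{\bs\Upsilon}_\infty^2\log T)$, which yields the constant $4$ and the $T/\log T$ rate. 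Where you genuinely diverge is the treatment of the supremum over $\lambda$: the paper takes a grid of size $T^\beta$, $\beta>2$, and must control the oscillation of the random function between grid points probabilistically (its $\chi_1$, via $\norme{B_T}=O(T\sqrt{\log T})$, the Lipschitz bound on $D_T(\lambda)$, and a Chernoff bound on $\tr W_T^{\sf H}W_T$) together with a deterministic term $\chi_3$; you instead invoke Bernstein's inequality for trigonometric polynomials of degree $T-1$ to absorb the sup into the max over an $O(T^2)$-point grid deterministically, which eliminates that entire block of lemmas and also lets you dispense with the paper's explicit cubic estimate $\sum_t|\sigma_t|^3=O((\log T)^{3/2})$ by a uniform term-by-term expansion of $-u-\log(1-u)$ (legitimate since $\theta|\nu_i|/N=O(x/\sqrt{\log T})$ uniformly). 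Two harmless slips: the displayed Chernoff bound should read $\mathbb{P}[h_T(\lambda)>y]\le\exp\bigl(-\theta y+\log\E e^{\theta h_T(\lambda)}\bigr)$ (the signs are flipped in your display, though your subsequent computation uses the correct form), and with the paper's convention $r_k=\E[v_{n,t+k}v_{n,t}^*]$ the covariance of a row of $V_T$ is $R_T$ rather than $\overline{R_T}$ — immaterial, since only $\norme{R_T}\le\norme{\bs\Upsilon}_\infty$ enters the estimates.
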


A consequence of these theorems, obtained by the Borel-Cantelli lemma, is that $\| \widehat R^b_T - R_T \| \to 0$ and 
$\| \widehat R^u_T - R_T \| \to 0$ almost surely as $T \to \infty$. 

The slower rate of decrease of $T/\log(T)$ in the unbiased estimator exponent may be interpreted by the increased inaccuracy in the estimates of $r_k$ for values of $k$ close to $T-1$. 

We now turn to the proofs of Theorems \ref{th-biased} and \ref{th-unbiased},
starting with some basic mathematical results that will be needed throughout 
the proofs. 

\subsection{Some basic mathematical facts} 

\begin{lemma}
\label{lm-fq} 
For $x,y \in \C^{m}$ and $A \in \C^{m\times m}$,
\[
\left| x^{\sf H} A x - y^{\sf H} A y  \right| \leq 
\norme{A} (\norme{x}+\norme{y})\norme{x-y}.
\]
\end{lemma}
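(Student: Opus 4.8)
The plan is to prove this by a standard telescoping (add-and-subtract) decomposition of the difference of quadratic forms, followed by the Cauchy--Schwarz inequality and the submultiplicativity of the spectral norm. The only idea needed is to split the difference so that each resulting term contains the factor $x-y$ exactly once.

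Concretely, first I would write
\[
x^{\sf H} A x - y^{\sf H} A y
= x^{\sf H} A x - x^{\sf H} A y + x^{\sf H} A y - y^{\sf H} A y
= x^{\sf H} A (x-y) + (x-y)^{\sf H} A y .
\]
Next I would bound each term separately. By the Cauchy--Schwarz inequality in $\C^m$ and the definition of the spectral norm, $\left| x^{\sf H} A (x-y) \right| \leq \norme{x}\,\norme{A(x-y)} \leq \norme{A}\,\norme{x}\,\norme{x-y}$, and likewise $\left| (x-y)^{\sf H} A y \right| \leq \norme{x-y}\,\norme{Ay} \leq \norme{A}\,\norme{y}\,\norme{x-y}$.

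Finally, applying the triangle inequality to the two-term decomposition and factoring out $\norme{A}\,\norme{x-y}$ yields
\[
\left| x^{\sf H} A x - y^{\sf H} A y \right|
\leq \norme{A}\,\norme{x-y}\,\bigl( \norme{x} + \norme{y} \bigr),
\]
which is the claim. There is no real obstacle here: the result is elementary, and the only point requiring a moment's care is choosing the symmetric split above (rather than, say, $x^{\sf H}A(x-y) + (x-y)^{\sf H}Ax$) so that the bound comes out in terms of $\norme{x}+\norme{y}$ rather than $2\norme{x}$.
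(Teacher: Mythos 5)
Your proof is correct and follows essentially the same add-and-subtract argument as the paper: the paper inserts the cross term $y^{\sf H} A x$ while you insert $x^{\sf H} A y$, which is a mirror-image choice leading to the identical bound via Cauchy--Schwarz and the spectral norm. No gap to report.
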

\begin{proof}
\begin{align*} 
\left| x^{\sf H} A x - y^{\sf H} A y  \right| &= 
\left| x^{\sf H} A x - y^{\sf H} A x + y^{\sf H} A x - y^{\sf H} A y  \right| \\
&\leq \left| (x - y)^{\sf H} A x \right| + \left| y^{\sf H} A (x - y) \right| \\
&\leq \norme{A}(\norme{x} + \norme{y}) \norme{x-y}.
\end{align*}
\end{proof}

\begin{lemma} 
\label{chernoff} 
Let $X_0, \ldots, X_{M-1}$ be independent $\mathcal{CN}(0,1)$ random 
variables. Then, for any $x > 0$, 
\[
\mathbb{P} \left[ \frac1M \sum_{m=0}^{M-1} (|X_m|^2 - 1) > x \right] 
\leq \exp \left( -M ( x - \log(1+x) ) \right) . 
\] 
\end{lemma}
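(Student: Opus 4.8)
This is a standard Chernoff bound for a sum of independent exponential (equivalently, chi-squared with two degrees of freedom) random variables. First I would write, for any $\theta > 0$,
\[
\mathbb{P}\!\left[ \frac1M \sum_{m=0}^{M-1}(|X_m|^2 - 1) > x \right]
= \mathbb{P}\!\left[ \sum_{m=0}^{M-1} |X_m|^2 > M(1+x) \right]
\leq e^{-\theta M(1+x)} \, \mathbb{E}\exp\!\left( \theta \sum_{m=0}^{M-1} |X_m|^2 \right),
\]
using Markov's inequality applied to the exponential moment. By independence the expectation factorizes into $\prod_m \mathbb{E} e^{\theta |X_m|^2}$.

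\emph{Computing the moment generating function.} Since $X_m \sim \mathcal{CN}(0,1)$, the variable $|X_m|^2$ is exponentially distributed with mean $1$ (its real and imaginary parts are independent $\mathcal N(0,1/2)$, so $|X_m|^2$ is a sum of two independent $\mathrm{Gamma}(1/2, \text{scale }1)$ variables). Hence $\mathbb{E} e^{\theta|X_m|^2} = (1-\theta)^{-1}$ for $0 < \theta < 1$. Plugging in gives the bound $(1-\theta)^{-M} e^{-\theta M(1+x)}$, i.e.
\[
\mathbb{P}\!\left[ \frac1M \sum_{m=0}^{M-1}(|X_m|^2 - 1) > x \right]
\leq \exp\!\left( -M\bigl( \theta(1+x) + \log(1-\theta) \bigr) \right).
\]

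\emph{Optimizing over $\theta$.} I would then maximize $g(\theta) = \theta(1+x) + \log(1-\theta)$ over $\theta \in (0,1)$. Setting $g'(\theta) = (1+x) - 1/(1-\theta) = 0$ yields $1-\theta = 1/(1+x)$, i.e. $\theta^\star = x/(1+x) \in (0,1)$ since $x>0$, which is admissible. Substituting back, $g(\theta^\star) = x/(1+x)\cdot(1+x) - \log(1+x) = x - \log(1+x)$, which gives exactly the claimed bound.

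\emph{Main obstacle.} There is no real obstacle here — the only points requiring a little care are (i) correctly identifying the distribution of $|X_m|^2$ for a \emph{complex} standard Gaussian (mean $1$, not $1/2$, because the variance convention is $\mathbb{E}|X_m|^2 = 1$), and hence the MGF $(1-\theta)^{-1}$ with the right radius of convergence $\theta < 1$; and (ii) checking that the optimizer $\theta^\star = x/(1+x)$ lies in the allowed range, which holds for every $x>0$. Everything else is routine algebra.
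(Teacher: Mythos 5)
Your proof is correct and follows essentially the same route as the paper: Markov's inequality applied to the exponential moment, the MGF $\E[e^{\theta|X_m|^2}]=(1-\theta)^{-1}$ for $0<\theta<1$, and optimization over $\theta$ yielding $\theta^\star=x/(1+x)$ and the exponent $x-\log(1+x)$. The paper leaves the final minimization implicit, whereas you carry it out explicitly and verify admissibility of the optimizer, which is a welcome completion of the same argument.
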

 
\begin{proof} 
This is a classical Chernoff bound. Indeed, given $\xi \in (0,1)$, we have
by the Markov inequality 
\begin{align*}
\mathbb{P} \Bigl[ M^{-1} \sum_{m=0}^{M-1} (|X_m|^2 - 1) > x \Bigr] &= \mathbb{P}\left[ \exp \left( \xi \sum_{m=0}^{M-1} |X_m|^2 \right) 
> \exp \xi M (x+1) \right] \\
&\leq \exp(-\xi M (x+1)) 
\E \left[ \exp \left( \xi \sum_{m=0}^{M-1} |X_m|^2 \right) \right] \\
&= \exp \left(- M \left( \xi(x+1) + \log(1-\xi) \right) \right) 
\end{align*} 
since $\E \left[\exp (\xi |X_m|^2) \right] = 1/(1-\xi)$. The result follows upon 
minimizing this expression with respect to $\xi$. 
\end{proof}

\subsection{Biased estimator: proof of Theorem \ref{th-biased}} \label{biased}
Define
\begin{align*}
\widehat \Upsilon^b_T(\lambda) &\triangleq  
\sum_{k=-(T-1)}^{T-1} \hat{r}_{k,T}^b e^{\imath k \lambda} 
\\
\Upsilon_T(\lambda) &\triangleq  
\sum_{k=-(T-1)}^{T-1} r_k e^{\imath k \lambda}.
\end{align*} 
Since $\widehat{R}_T^b-R_T$ is a Toeplitz matrix, from \cite[Lemma 4.1]{Gray'06}, 
\begin{equation*}
\norme{\widehat{R}_T^b-R_T} \leq 
\underset{\lambda\in[0,2\pi)}{\text{sup}} 
\left| \widehat \Upsilon_T^b(\lambda) - \Upsilon_T(\lambda) \right| 
\leq 
\underset{\lambda\in[0,2\pi)}{\text{sup}} 
\left| \widehat \Upsilon_T^b(\lambda) - \E \widehat \Upsilon_T^b(\lambda) \right| +
\underset{\lambda\in[0,2\pi)}{\text{sup}} 
\left| \E \widehat \Upsilon_T^b(\lambda) - \Upsilon_T(\lambda) \right|. 
\end{equation*}
By Kronecker's lemma (\cite[Lemma 3.21]{Kallenberg'97}), the rightmost term at the right-hand side satisfies 
\begin{equation}
\label{determ-biased} 
\left| \E \widehat \Upsilon_T^b(\lambda) - \Upsilon_T(\lambda)\right| 
\leq  \sum_{k=-(T-1)}^{T-1} \frac{|k r_k|}{T}  
\xrightarrow[T\to\infty]{} 0. 
\end{equation} 
In order to deal with the term 
$\sup_{\lambda \in [0,2\pi)} | \widehat \Upsilon_T^b(\lambda) - \E \widehat \Upsilon_T^b(\lambda) |$, 
two ingredients will be used. The first one is the 
following lemma (proven in Appendix \ref{anx-lm-qf}):

\begin{lemma}
\label{lemma_d_quad}
The following facts hold:
\begin{align*} 
\widehat \Upsilon_T^b(\lambda) &= d_T(\lambda)^{\sf H}\frac{V_T^{\sf H}V_T}{N}d_T(\lambda) \\
\E \widehat \Upsilon_T^b(\lambda) &= d_T(\lambda)^{\sf H} R_T d_T(\lambda)
\end{align*} 
where $d_T(\lambda)=1/\sqrt{T}\left[1, e^{- \imath\lambda}, \ldots, 
e^{-\imath(T-1)\lambda} \right]^{\sf T}$.
\end{lemma}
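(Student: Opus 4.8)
The plan is to prove the first identity by a direct entrywise expansion of the quadratic form, and then to obtain the second one by simply taking expectations.

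First I would note that, using $[d_T(\lambda)]_t = T^{-1/2} e^{-\imath t\lambda}$ and $[V_T^{\sf H} V_T]_{s,t} = \sum_{n=0}^{N-1} v_{n,s}^* v_{n,t}$,
\[
d_T(\lambda)^{\sf H}\frac{V_T^{\sf H} V_T}{N} d_T(\lambda) = \frac1N (V_T d_T(\lambda))^{\sf H}(V_T d_T(\lambda)) = \frac1N \norme{V_T d_T(\lambda)}^2 = \frac{1}{NT}\sum_{n=0}^{N-1}\Bigl| \sum_{t=0}^{T-1} v_{n,t}\, e^{-\imath t\lambda}\Bigr|^2 .
\]
On the other hand, inserting the definition \eqref{est-b} of $\hat r_{k,T}^b$ into $\widehat\Upsilon_T^b(\lambda)$ and performing the change of summation variable $s := t+k$, one sees that the factor $\1_{0\le t+k\le T-1}$ is exactly what confines $(s,t)$ to $\{0,\dots,T-1\}^2$, so that $\widehat\Upsilon_T^b(\lambda)$ reorganizes into the double sum $\frac1{NT}\sum_n \sum_{s,t} v_{n,s} v_{n,t}^*\, e^{\imath(s-t)\lambda}$, which factors as the same $\frac{1}{NT}\sum_n |\sum_t v_{n,t}\, e^{-\imath t\lambda}|^2$ once a consistent Fourier-sign convention is fixed. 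Matching the two displays yields the first claim.

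For the second identity I would take the expectation of the first. Since $V_T = W_T R_T^{1/2}$ with $W_T$ having i.i.d. $\mathcal{CN}(0,1)$ entries, $\E[V_T^{\sf H}V_T] = (R_T^{1/2})^{\sf H}\,\E[W_T^{\sf H}W_T]\,R_T^{1/2} = N\,(R_T^{1/2})^{\sf H}R_T^{1/2} = N R_T$, hence by linearity $\E\widehat\Upsilon_T^b(\lambda) = d_T(\lambda)^{\sf H} R_T d_T(\lambda)$. Equivalently, one may bypass the Gaussian model and expand $d_T(\lambda)^{\sf H}R_T d_T(\lambda) = T^{-1}\sum_{i,j=0}^{T-1} r_{i-j}\, e^{-\imath(i-j)\lambda}$ directly, counting that the $k$-th diagonal carries $T-|k|$ entries, which recovers $\sum_k (1-|k|/T)\, r_k\, e^{-\imath k\lambda} = \sum_k (\E\hat r^b_{k,T})\, e^{-\imath k\lambda}$.

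Both statements are pure rearrangements of finite sums, so there is no genuine obstacle here; the only point requiring attention is the bookkeeping of the complex exponentials and conjugates, together with fixing a consistent sign convention in the Fourier transforms so that $d_T(\lambda)^{\sf H}(\cdot)\,d_T(\lambda)$ reproduces exactly the exponentials used in the definitions of $\widehat\Upsilon_T^b(\lambda)$ and $\Upsilon_T(\lambda)$.
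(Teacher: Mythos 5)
Your proposal is correct and follows essentially the same route as the paper: expand the quadratic form entrywise, reindex by $k=l'-l$ (the indicator handling the truncation) to recover $\widehat\Upsilon_T^b(\lambda)$, and obtain the expectation identity from $\E[V_T^{\sf H}V_T]=NR_T$. Your remark about fixing the Fourier-sign convention is apt (the paper itself is slightly loose on this, which is harmless since the functions agree up to $\lambda\mapsto-\lambda$ and only the supremum over $[0,2\pi)$ is used), but no substantive difference from the paper's argument.
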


The second ingredient is a Lipschitz property of the function 
$\| d_T(\lambda) - d_T(\lambda') \|$ seen as a function of $\lambda$. 
From the inequality $|e^{-\imath t\lambda}-e^{-\imath t\lambda'}| 
\leq t|\lambda-\lambda'|$, we indeed have
\begin{equation} 
\label{lipschitz} 
\| d_T(\lambda) - d_T(\lambda') \| = 
\sqrt{\frac{1}{T} \sum_{t=0}^{T-1} |e^{-\imath t\lambda}-
e^{-\imath t\lambda'}|^2} \leq \frac{T|\lambda-\lambda'|}{\sqrt{3}} . 
\end{equation}

Now, denoting by $\lfloor \cdot \rfloor$ the floor function and choosing 
$\beta > 2$, define ${\cal I}=\left\{0, \ldots, \lfloor T^{\beta} \rfloor - 1 \right\}$.
Let $\lambda_i=2 \pi \frac{i}{\lfloor T^{\beta} \rfloor }$, $i \in {\cal I}$, be a regular discretization of the 
interval $[0, 2\pi]$. 
We write 
\begin{align*}
&\underset{\lambda \in [0, 2\pi)}{\text{sup}} \left| \widehat \Upsilon_T^b(\lambda) - \E \widehat \Upsilon_T^b(\lambda) \right| 
 \\&\leq 
\underset{i \in {\cal I}}{\text{max}} 
\underset{\lambda \in [\lambda_i,\lambda_{i+1}]}{\text{sup}} \Bigl(\left| \widehat \Upsilon_T^b(\lambda) 
- \widehat \Upsilon_T^b(\lambda_i) \right| + \left| \widehat \Upsilon_T^b(\lambda_i) 
- \E \widehat \Upsilon_T^b(\lambda_i) \right| + \left| \E \widehat \Upsilon_T^b(\lambda_i) 
- \E \widehat \Upsilon_T^b(\lambda) \right|\Bigr)\\ &\leq 
\underset{i \in {\cal I}}{\text{max}} 
\underset{\lambda \in [\lambda_i,\lambda_{i+1}]}{\text{sup}} 
\left| \widehat \Upsilon_T^b(\lambda) - \widehat \Upsilon_T^b(\lambda_i) \right| + \underset{i \in {\cal I}}{\text{max}} \left| \widehat \Upsilon_T^b(\lambda_i) 
- \E \widehat \Upsilon_T^b(\lambda_i) \right| + \underset{i \in {\cal I}}{\text{max}} 
\underset{\lambda \in [\lambda_i,\lambda_{i+1}]}{\text{sup}} 
\left| \E \widehat \Upsilon_T^b(\lambda_i) - \E \widehat \Upsilon_T^b(\lambda) \right| \\ &\triangleq \chi_1 + \chi_2 + \chi_3 . 
\end{align*} 
With the help of Lemma \ref{lemma_d_quad} and \eqref{lipschitz}, 
we shall provide concentration inequalities on the random terms $\chi_1$ 
and $\chi_2$ and a bound on the deterministic term $\chi_3$. 
This is the purpose of the three following lemmas. Herein and in the 
remainder, $C$ denotes a positive constant independent of $T$. This constant 
can change from an expression to another. 

\begin{lemma}
\label{chi1} 
There exists a constant $C > 0$ such that for any $x > 0$ and any $T$ large
enough, 
\begin{equation*}
\mathbb{P} \left[ \chi_1 > x \right]
\leq \exp \Biggl( - cT^2\Biggl( \frac{x T^{\beta-2}}{C \| \bs\Upsilon\|_\infty} 
- \log \frac{x T^{\beta-2}}{C \| \bs\Upsilon\|_\infty} - 1 \Biggr) \Biggr) . 
\end{equation*}
\end{lemma}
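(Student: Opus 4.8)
The plan is to control $\chi_1 = \max_{i \in \mathcal I} \sup_{\lambda \in [\lambda_i, \lambda_{i+1}]} |\widehat\Upsilon_T^b(\lambda) - \widehat\Upsilon_T^b(\lambda_i)|$ by first turning the difference of quadratic forms into something quantifiable via Lemma \ref{lm-fq}, and then reducing the problem to the Chernoff bound of Lemma \ref{chernoff}. Using Lemma \ref{lemma_d_quad}, write $\widehat\Upsilon_T^b(\lambda) = d_T(\lambda)^{\sf H} (V_T^{\sf H} V_T / N) d_T(\lambda)$, so that by Lemma \ref{lm-fq} applied with $A = V_T^{\sf H} V_T / N$, $x = d_T(\lambda)$, $y = d_T(\lambda_i)$,
\[
\left| \widehat\Upsilon_T^b(\lambda) - \widehat\Upsilon_T^b(\lambda_i) \right|
\leq \frac{\| V_T^{\sf H} V_T \|}{N} \bigl( \| d_T(\lambda) \| + \| d_T(\lambda_i) \| \bigr) \| d_T(\lambda) - d_T(\lambda_i) \|.
\]
Since $\| d_T(\lambda) \| = 1$ for all $\lambda$, and since for $\lambda \in [\lambda_i, \lambda_{i+1}]$ we have $|\lambda - \lambda_i| \leq 2\pi / \lfloor T^\beta \rfloor$, the Lipschitz bound \eqref{lipschitz} gives $\| d_T(\lambda) - d_T(\lambda_i) \| \leq \frac{T}{\sqrt 3} \cdot \frac{2\pi}{\lfloor T^\beta \rfloor} \leq C T^{1-\beta}$ for $T$ large. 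Hence $\chi_1 \leq 2 C T^{1-\beta} \| V_T^{\sf H} V_T \| / N$, uniformly in $i$ and $\lambda$, and the event $\{\chi_1 > x\}$ is contained in $\{ \| V_T^{\sf H} V_T \| / N > x T^{\beta-1} / (2C) \}$.

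It then remains to produce a concentration inequality for $\| V_T^{\sf H} V_T \| / N = \| V_T V_T^{\sf H} \| / N$. Writing $V_T = W_T R_T^{1/2}$, we have $\| V_T V_T^{\sf H} \| \leq \| R_T \| \, \| W_T W_T^{\sf H} \| \leq \| \bs\Upsilon \|_\infty \| W_T W_T^{\sf H} \|$. Now $\| W_T W_T^{\sf H} \| = \sup_{\| u \| = 1} \sum_{t=0}^{T-1} |w_t^{\sf H} u|^2$ where $w_t$ is the $t$-th column of $W_T$; for each fixed unit vector $u \in \C^N$ the scalars $w_t^{\sf H} u$ are independent $\mathcal{CN}(0,1)$, so $\sum_{t} |w_t^{\sf H} u|^2$ is, after rescaling, exactly the sum appearing in Lemma \ref{chernoff} with $M = T$. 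A union bound over an $\varepsilon$-net of the unit sphere in $\C^N$ (whose cardinality is $(C/\varepsilon)^{2N}$, i.e.\ $e^{O(N)} = e^{O(T)}$ by Assumption \ref{ass-regime}), combined with Lemma \ref{chernoff} and an $O(T)$ correction absorbed into the $o(1)$ or constants, yields a bound of the form $\PP[ \| W_T W_T^{\sf H} \| / T > 1 + s ] \leq \exp( -T(s - \log(1+s)) + O(T) )$ for $s$ large. Plugging $s$ proportional to $x T^{\beta-2} / \| \bs\Upsilon \|_\infty$ (accounting for $N/T \to c$, so that the effective multiplicative factor in front is $cT^2$ once one writes the threshold on $\| W_T W_T^{\sf H} \|$ rather than on $\| W_T W_T^{\sf H} \|/T$) and observing that for large argument $s - \log(1+s) - \mathrm{const}$ dominates, we land on the claimed bound $\exp( -cT^2( \frac{x T^{\beta-2}}{C \| \bs\Upsilon \|_\infty} - \log \frac{x T^{\beta-2}}{C \| \bs\Upsilon \|_\infty} - 1))$ after adjusting the constant $C$.

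The main obstacle is the passage from the scalar Chernoff bound to the operator-norm bound on $W_T W_T^{\sf H}$: one must check that the $e^{O(T)}$ cost of the net over the sphere in $\C^N$ is genuinely negligible against the $e^{-\Theta(T^2 \cdot \mathrm{polylog})}$ scale coming from the huge deviation threshold $T^{\beta-2} \to \infty$ — that is, one needs $\beta > 2$ precisely so that $T^2 \cdot (\text{threshold})$ grows faster than $T$, making the net contribution disappear into the constant $C$ and the ``$-1$'' term. A secondary point requiring care is bookkeeping the factors of $T$, $N$, and $c$ so that the exponent comes out as $cT^2(\cdots)$ with the argument of the logarithm being exactly $x T^{\beta-2}/(C\|\bs\Upsilon\|_\infty)$; this is routine but must be done consistently with the convention that $C$ changes line to line.
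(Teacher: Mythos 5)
Your proposal is correct in substance, but it takes a genuinely different route from the paper at the key concentration step. The first reduction is identical: Lemma~\ref{lemma_d_quad}, Lemma~\ref{lm-fq} and the Lipschitz bound \eqref{lipschitz} give $\chi_1 \leq C T^{1-\beta}\|\bs\Upsilon\|_\infty \|W_T W_T^{\sf H}\|/N$, so $\{\chi_1>x\}$ forces a polynomially large deviation of $\|W_T W_T^{\sf H}\|$. From there the paper does something much cruder and simpler than you: it bounds $\|W_T^{\sf H}W_T\| \leq \tr (W_T^{\sf H}W_T) = \sum_{n,t}|w_{n,t}|^2$ and applies the scalar Chernoff bound of Lemma~\ref{chernoff} once, with $M = NT$, which immediately yields $\exp(-NT(f-\log f - 1))$ with $f = xT^{\beta-1}/(CN\|\bs\Upsilon\|_\infty)$; no $\varepsilon$-net, covering numbers, or operator-norm deviation machinery is needed, precisely because the threshold $f \to \infty$ makes the trace bound harmless. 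Your net argument over the unit sphere of $\C^N$ is also valid (the $e^{O(N)}=e^{O(T)}$ cardinality is indeed negligible against the main exponent of order $T^{\beta}$), and in fact gives a nominally sharper exponent (you lose only $T\log(1+s)$ rather than $NT\log f$), but this extra sharpness is not needed for the lemma and you leave the net bookkeeping (approximation factor, absorption of the cardinality and of $N$ versus $cT$ into $C$) at the level of assertion; note also that the need for $\beta>2$ in the paper is not really about killing the net cost (any $\beta>1$ would do for that) but about making $f\geq 1$ and, later in the proof of Theorem~\ref{th-biased}, making the $\chi_1$ and $\chi_3$ contributions negligible against $\chi_2$. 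The match of your final expression to the stated $cT^2$ prefactor requires the same mild leniency ($NT$ versus $cT^2$, constants changing from line to line) that the paper's own proof uses, so this is not a gap specific to your argument.
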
 

\begin{proof}
Using Lemmas \ref{lemma_d_quad} and \ref{lm-fq} along with 
\eqref{lipschitz}, we have 
\begin{align*} 
\left| \widehat \Upsilon_T^b(\lambda) - \widehat \Upsilon_T^b(\lambda_i) \right|
&= \left| d_T(\lambda)^{\sf H}\frac{V_T^{\sf H} V_T}{N} d_T(\lambda) 
        - d_T(\lambda_i)^{\sf H} \frac{V_T^{\sf H} V_T}{N}d_T(\lambda_i) \right| \\
&\leq 2 N^{-1} \norme{d_T(\lambda) - d_T(\lambda_i)} \| R_T\| \norme{W_T^{\sf H} W_T} \\
&\leq C | \lambda - \lambda_i | \| \bs\Upsilon\|_\infty \norme{W_T^{\sf H} W_T} . 
\end{align*} 
From $\| W_T^{\sf H} W_T \| \leq \tr(W_T^{\sf H} W_T)$ and 
Lemma~\ref{chernoff}, assuming $T$ large enough so that $f(x,T) \triangleq x T^{\beta-1} / (C N \| \bs\Upsilon\|_\infty)$ satisfies 
$f(x,T) \geq 1$, we then obtain  
\begin{align*}
\mathbb{P} \left[ \chi_1 > x \right] &\leq 
\mathbb{P} \left[ 
C \| \bs\Upsilon\|_\infty T^{-\beta} 
\sum_{t=0}^{T-1}\sum_{n=0}^{N-1} |w_{n,t}|^2 > x \right] \\
&= \mathbb{P} \left[ 
\frac{1}{NT} \sum_{n,t} (| w_{n,t} |^2 -1 ) > f(x,T) - 1 \right] \\
&\leq \exp( - NT( f(x,T) - \log f(x,T) - 1 ) ) . 
\end{align*} 
\end{proof}

\begin{lemma} 
\label{chi2} 
The following inequality holds
\begin{equation*}
\mathbb{P} \left[ \chi_2 > x \right] \leq
2T^{\beta} \exp \Biggl( - c T \Biggl( \frac{x}{\|\bs\Upsilon\|_\infty} - 
\log \Bigl( 1 + \frac{x}{\|\bs\Upsilon\|_\infty} \Bigr) \Biggr) \Biggr). 
\end{equation*}
\end{lemma}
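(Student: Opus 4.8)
The plan is to reduce the maximum over the grid $\{\lambda_i\}_{i\in{\cal I}}$ to a single fixed frequency by a union bound, and then to bound the deviation of $\widehat\Upsilon_T^b(\lambda_i)$ at a fixed $\lambda_i$ directly from the Chernoff estimate of Lemma~\ref{chernoff}. Since $|{\cal I}| \le T^\beta$,
\[
\mathbb{P}[\chi_2 > x] \le T^\beta \max_{i\in{\cal I}} \mathbb{P}\Bigl[ \bigl| \widehat\Upsilon_T^b(\lambda_i) - \E\widehat\Upsilon_T^b(\lambda_i) \bigr| > x \Bigr],
\]
so it suffices to control the right-hand probability for a fixed $\lambda_i$, with a bound uniform in $i$, and to account for the two-sidedness of the event; this is where the leftover factor $2$ will appear.

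First I would diagonalize the quadratic form. Set $u_i \triangleq R_T^{1/2} d_T(\lambda_i) \in \C^T$; using $V_T = W_T R_T^{1/2}$ together with Lemma~\ref{lemma_d_quad}, one obtains $\widehat\Upsilon_T^b(\lambda_i) = N^{-1}\|W_T u_i\|^2$ and $\E\widehat\Upsilon_T^b(\lambda_i) = \|u_i\|^2$. The $N$ coordinates of $W_T u_i$ are i.i.d.\ ${\cal CN}(0,\|u_i\|^2)$, hence $\widehat\Upsilon_T^b(\lambda_i) = \|u_i\|^2 \, N^{-1}\sum_{n=0}^{N-1} |g_n|^2$ for i.i.d.\ ${\cal CN}(0,1)$ variables $g_n$, so that
\[
\widehat\Upsilon_T^b(\lambda_i) - \E\widehat\Upsilon_T^b(\lambda_i) = \|u_i\|^2 \, \frac1N \sum_{n=0}^{N-1} \bigl( |g_n|^2 - 1 \bigr).
\]
Moreover $\|u_i\|^2 = d_T(\lambda_i)^{\sf H} R_T d_T(\lambda_i) \le \|R_T\| \, \|d_T(\lambda_i)\|^2 = \|R_T\| \le \|\bs\Upsilon\|_\infty$, because $\|d_T(\lambda_i)\| = 1$.

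Then I would apply Lemma~\ref{chernoff} with $M = N$ to the upper tail, obtaining $\mathbb{P}[\widehat\Upsilon_T^b(\lambda_i) - \E\widehat\Upsilon_T^b(\lambda_i) > x] \le \exp(-N(\,x/\|u_i\|^2 - \log(1 + x/\|u_i\|^2)\,))$; since $t\mapsto t - \log(1+t)$ is nondecreasing on $[0,\infty)$ and $x/\|u_i\|^2 \ge x/\|\bs\Upsilon\|_\infty$, this is at most $\exp(-N(\,x/\|\bs\Upsilon\|_\infty - \log(1 + x/\|\bs\Upsilon\|_\infty)\,))$, uniformly in $i$. For the lower tail I would run the symmetric Chernoff argument, using $\E[\exp(-\xi |g_n|^2)] = (1+\xi)^{-1}$ for $\xi > 0$; optimizing over $\xi$ yields $\mathbb{P}[N^{-1}\sum_n(|g_n|^2 - 1) < -y] \le \exp(N(y + \log(1-y)))$ for $y\in(0,1)$ (the event being impossible for $y\ge 1$), and the elementary inequality $-y - \log(1-y) \ge y - \log(1+y)$ on $(0,1)$ shows the lower tail is dominated by the same bound as the upper one. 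Combining the two tails, $\mathbb{P}[|\widehat\Upsilon_T^b(\lambda_i) - \E\widehat\Upsilon_T^b(\lambda_i)| > x] \le 2\exp(-N(\,x/\|\bs\Upsilon\|_\infty - \log(1 + x/\|\bs\Upsilon\|_\infty)\,))$, and inserting this into the union bound, together with $N/T \to c$ (Assumption~\ref{ass-regime}), gives the claim.

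I do not expect a real obstacle here. The points needing care are: recognizing, through the substitution $u_i = R_T^{1/2} d_T(\lambda_i)$, that the centered statistic is exactly a (random) multiple of a normalized sum of i.i.d.\ exponentials, so that Lemma~\ref{chernoff} applies with $M = N$ as is; using monotonicity of the Cram\'er rate function to replace the data-dependent scale $\|u_i\|^2$ by the uniform $\|\bs\Upsilon\|_\infty$; and handling the lower deviation, which requires its own (routine) Chernoff bound but is ultimately dominated by the upper one. Replacing $N$ by $cT$ in the exponent is harmless under Assumption~\ref{ass-regime}, and — like the polynomial prefactor $T^\beta$ — is absorbed into the $o(1)$ when this lemma is used in the proof of Theorem~\ref{th-biased}.
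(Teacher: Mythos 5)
Your proposal is correct and follows essentially the same route as the paper: union bound over the grid, reduction of $\widehat\Upsilon_T^b(\lambda_i)-\E\widehat\Upsilon_T^b(\lambda_i)$ to a scaled centered sum of i.i.d.\ exponentials via $q_T(\lambda_i)=R_T^{1/2}d_T(\lambda_i)$ (your $u_i$), the Chernoff bound of Lemma~\ref{chernoff}, and monotonicity of $t\mapsto t-\log(1+t)$ together with $\|q_T(\lambda_i)\|^2\le\|\bs\Upsilon\|_\infty$. The only difference is that you spell out the lower-tail Chernoff bound and its domination by the upper-tail rate, which the paper dismisses with ``treated similarly.''
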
 

\begin{proof}
From the union bound we obtain:
\begin{align*} 
\mathbb{P} \left[ \chi_2 > x \right]
&\leq \sum_{i=0}^{\lfloor T^{\beta} \rfloor - 1} 
\mathbb{P} \left[ \left| \widehat \Upsilon_T^b(\lambda_i) 
 - \E \widehat \Upsilon_T^b(\lambda_i) \right| > x \right].
\end{align*} 
We shall bound each term of the sum separately. Since 
\begin{equation*}  
\mathbb{P} \left[ \left| \widehat \Upsilon_T^b(\lambda_i) 
              - \E \widehat \Upsilon_T^b(\lambda_i) \right| > x \right] 
= \mathbb{P} \left[ \widehat \Upsilon_T^b(\lambda_i) 
                 - \E \widehat \Upsilon_T^b(\lambda_i) > x \right] +
\mathbb{P} \left[ - \left( \widehat \Upsilon_T^b(\lambda_i) - \E \widehat \Upsilon_T^b(\lambda_i) \right) > x \right] 
\end{equation*}
it will be enough to deal with the first right-hand side term as the second 
one is treated similarly.
Let $\eta_T(\lambda_i) \triangleq W_T q_T(\lambda_i) = 
\left[ \eta_{0,T}(\lambda_i), \ldots, \eta_{N-1,T}(\lambda_i) \right]^{\sf T}$ 
where $q_T(\lambda_i) \triangleq R_T^{1/2} d_T(\lambda_i)$. Observe that 
$\eta_{k,T}(\lambda_i) \sim \mathcal{CN}(0,\| q_T(\lambda_i) \|^2 I_N)$. We know from Lemma~\ref{lemma_d_quad} that 
\begin{equation}
\widehat \Upsilon_T^b(\lambda_i) - \E \widehat \Upsilon_T^b(\lambda_i) = 
\frac 1N \left( \| \eta_T(\lambda_i) \|^2 - \E \| \eta_T(\lambda_i) \|^2 \right).
\label{Epsilon_eta}
\end{equation}
From (\ref{Epsilon_eta}) and Lemma~\ref{chernoff}, we therefore get  
\begin{equation*} 
\mathbb{P}\left[ \widehat \Upsilon_T^b(\lambda_i) - \E \widehat \Upsilon_T^b(\lambda_i) > x \right]
\leq \exp \Biggl( -N \Biggl( \frac{x}{\| q_T(\lambda_i) \|^2} - 
 \log\Bigl( 1 + \frac{x}{\| q_T(\lambda_i) \|^2} \Bigr) \Biggr) \Biggr).  
\end{equation*} 
Noticing that $\| q_T(\lambda_i) \|^2 \leq \|\bs\Upsilon\|_\infty$ and that the function $f(x) = x - \log \Bigl( 1 + x \Bigr)$ is increasing for $x>0$, we get the result.
\end{proof} 

Finally, the bound for the deterministic term $\chi_3$ is provided by the following lemma:
\begin{lemma}
\label{chi3} 
$\displaystyle{ 
\chi_3 \leq C \| \bs\Upsilon\|_\infty T^{-\beta + 1}
}$. 
\end{lemma}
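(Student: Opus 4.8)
The plan is to dispatch $\chi_3$ entirely by the deterministic estimates already assembled, since no randomness enters this term. By Lemma~\ref{lemma_d_quad}, $\E \widehat\Upsilon_T^b(\lambda) = d_T(\lambda)^{\sf H} R_T d_T(\lambda)$ is a quadratic form in $d_T(\lambda)$, so for every $i \in {\cal I}$ and every $\lambda \in [\lambda_i, \lambda_{i+1}]$ one has $|\E \widehat\Upsilon_T^b(\lambda_i) - \E \widehat\Upsilon_T^b(\lambda)| = |d_T(\lambda_i)^{\sf H} R_T d_T(\lambda_i) - d_T(\lambda)^{\sf H} R_T d_T(\lambda)|$. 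Applying Lemma~\ref{lm-fq} with $A = R_T$, $x = d_T(\lambda_i)$, $y = d_T(\lambda)$ bounds this by $\|R_T\|\,(\|d_T(\lambda_i)\| + \|d_T(\lambda)\|)\,\|d_T(\lambda_i) - d_T(\lambda)\|$.

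Then I would plug in three elementary facts. First, $\|d_T(\mu)\| = 1$ for every $\mu$, directly from the definition $d_T(\mu) = T^{-1/2}[1,e^{-\imath\mu},\ldots,e^{-\imath(T-1)\mu}]^{\sf T}$, so the middle factor equals $2$. Second, $\|R_T\| \leq \|\bs\Upsilon\|_\infty$, as recorded right after Assumption~\ref{ass-rk}. Third, the Lipschitz estimate \eqref{lipschitz} gives $\|d_T(\lambda_i) - d_T(\lambda)\| \leq T|\lambda - \lambda_i|/\sqrt{3}$, and since $\lambda$ lies in the interval $[\lambda_i,\lambda_{i+1}]$ of a regular mesh with step $2\pi/\lfloor T^\beta\rfloor$, we have $|\lambda - \lambda_i| \leq 2\pi/\lfloor T^\beta\rfloor \leq C T^{-\beta}$ for $T$ large enough, using $\lfloor T^\beta\rfloor \geq T^\beta/2$. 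Combining the three, $|\E \widehat\Upsilon_T^b(\lambda_i) - \E \widehat\Upsilon_T^b(\lambda)| \leq 2\|\bs\Upsilon\|_\infty \cdot C T^{-\beta+1}$, and crucially this bound is uniform in $i$ and in $\lambda \in [\lambda_i,\lambda_{i+1}]$; taking the maximum over $i$ and the supremum over each subinterval in the definition of $\chi_3$ therefore yields $\chi_3 \leq C\|\bs\Upsilon\|_\infty T^{-\beta+1}$ after renaming the constant.

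There is no real obstacle here: every tool needed (the quadratic-form representation of $\E \widehat\Upsilon_T^b$, Lemma~\ref{lm-fq}, the unit norm of $d_T$, the spectral bound $\|R_T\| \leq \|\bs\Upsilon\|_\infty$, and the Lipschitz inequality \eqref{lipschitz}) is already available. The only points requiring a moment of care are to verify that the pointwise bound is genuinely uniform in $i$ and $\lambda$ so that it survives the $\max_i \sup_\lambda$, and to absorb the floor function through $\lfloor T^\beta\rfloor \geq T^\beta/2$ valid for all sufficiently large $T$.
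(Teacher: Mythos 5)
Your proposal is correct and follows essentially the same route as the paper: the quadratic-form representation from Lemma~\ref{lemma_d_quad}, the bound of Lemma~\ref{lm-fq} with $\|R_T\|\leq\|\bs\Upsilon\|_\infty$ and $\|d_T(\lambda)\|=1$, the Lipschitz estimate \eqref{lipschitz}, and the mesh width of order $T^{-\beta}$. Your extra care with the floor function and the $2\pi$ factor is a minor refinement of the paper's statement $\lambda_{i+1}-\lambda_i=T^{-\beta}$, absorbed into the constant $C$ in either case.
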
 
\begin{proof} 
From Lemmas \ref{lemma_d_quad} and \ref{lm-fq} along with 
\eqref{lipschitz}, we obtain
\begin{align*} 
\left| \E \widehat \Upsilon_T^b(\lambda) - \E \widehat \Upsilon_T^b(\lambda_i) \right|
&= \left| d_T(\lambda)^{\sf H} R_T d_T(\lambda) - d_T(\lambda_i)^{\sf H} R_T d_T(\lambda_i) \right| \\
&\leq 2 \norme{R_T} \norme{d_T(\lambda) - d_T(\lambda_i)} \\
&\leq C \| \bs\Upsilon\|_\infty | \lambda - \lambda_i | T. 
\end{align*} 
From $\underset{i \in {\cal I}}{\text{max}} \underset{\lambda \in [\lambda_i,\lambda_{i+1}]}{\text{sup}} 
| \lambda - \lambda_i | = \lambda_{i+1} - \lambda_i = T^{-\beta}$ we get the result.
\end{proof}

We now complete the proof of Theorem \ref{th-biased}. From 
\eqref{determ-biased} and Lemma~\ref{chi3}, we get
\begin{equation*}
\mathbb{P} \left[\norme{\widehat R_T^b - R_T} > x \right] = 
\mathbb{P}\left[ \chi_1 + \chi_2 > x + o(1) \right]. 
\end{equation*} 
Given a parameter $\epsilon_T \in [0,1]$, we can write (with some slight notation abuse)
\begin{equation*}
\mathbb{P}\left[ \chi_1 + \chi_2 > x + o(1) \right] \leq
\mathbb{P}\left[ \chi_1 > x \epsilon_T \right] + \mathbb{P}\left[\chi_2 > x (1 - \epsilon_T) + o(1) \right].
\end{equation*}
With the results of Lemmas \ref{chi1} and \ref{chi2}, setting $\epsilon_T=1/T$, we get
\begin{align*}
\mathbb{P}\left[ \chi_1 + \chi_2 > x + o(1) \right] &\leq
\mathbb{P}\left[ \chi_1 > \frac{x}{T} \right] + \mathbb{P}\left[ \chi_2 > x (1 - \frac{x}{T}) + o(1) \right] \\
&\leq
\exp \Bigl( - cT^2 \Bigl( \frac{x T^{\beta-3}}{C \| \bs\Upsilon\|_\infty} 
- \log \frac{x T^{\beta-3}}{C \| \bs\Upsilon\|_\infty} - 1 \Bigr) \Bigr) \\
&+ \exp \Bigl( - cT \Bigl( \frac{x\left( 1 - \frac{1}{T} \right)}{\|\bs\Upsilon\|_\infty} - 
\log \Bigl( 1 + \frac{x\left( 1 - \frac{1}{T} \right)}{\|\bs\Upsilon\|_\infty} \Bigr) + o(1) \Bigr) \Bigr) \\
&=
\exp \Bigl( - cT \Bigl( \frac{x}{\|\bs\Upsilon\|_\infty} - 
\log \Bigl( 1 + \frac{x}{\|\bs\Upsilon\|_\infty} \Bigr) + o(1) \Bigr) \Bigr) 
\end{align*}
since $\beta>2$.

\subsection{Unbiased estimator: proof of Theorem \ref{th-unbiased}}\label{unbiased}
The proof follows basically the same main steps as for Theorem~\ref{th-biased} 
with an additional difficulty due to the scaling terms $1/(T-|k|)$.

Defining the function 
\[ 
\widehat \Upsilon^u_T(\lambda) \triangleq  
\sum_{k=-(T-1)}^{T-1} \hat{r}_{k,T}^u e^{ik \lambda}
\]
we have
\begin{equation*} 
\norme{\widehat{R}_T^u-R_T} \leq 
\underset{\lambda\in[0,2\pi)}{\text{sup}} 
\left| \widehat \Upsilon_T^u(\lambda) - \Upsilon_T(\lambda) \right| = 
\underset{\lambda\in[0,2\pi)}{\text{sup}} 
\left| \widehat \Upsilon_T^u(\lambda) - 
    \E \widehat \Upsilon_T^u(\lambda) \right|
\end{equation*} 
since $\Upsilon_T(\lambda)=\E \widehat \Upsilon_T^u(\lambda)$, the estimates
$\hat{r}_{k,T}^u$ being unbiased. 

In order to deal with the right-hand side of this expression, we need the 
following analogue of Lemma~\ref{lemma_d_quad}, borrowed from 
\cite{val-loub-icassp14} and proven here in Appendix~\ref{anx-lm-qf2}.
\begin{lemma}
\label{lemma_d_quad2}
The following fact holds:
\begin{align*}
\widehat \Upsilon_T^u(\lambda) &= d_T(\lambda)^{\sf H} 
\left( \frac{V_T^{\sf H}V_T}{N} \odot B_T \right) d_T(\lambda)
\end{align*} 
where $\odot$ is the Hadamard product of matrices and where 
\[
	B_T \triangleq \left[ \frac{T}{T-|i-j|} \right]_{0\leq i,j\leq T-1}.
\] 
\end{lemma}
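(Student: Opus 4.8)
\emph{Approach.} The plan is to establish the identity by directly expanding the quadratic form on its right-hand side and then reindexing. The one idea behind it is that entrywise (Hadamard) multiplication by $B_T=[\,T/(T-|i-j|)\,]_{i,j}$ is exactly the operation that turns the biased normalisation $1/T$ carried by $d_T(\lambda)$ into the unbiased one $1/(T-|k|)$ appearing in $\hat r^u_{k,T}$.

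\emph{Steps.} Writing $G_T\triangleq V_T^{\sf H}V_T/N$, so that $[G_T]_{i,j}=\frac1N\sum_{n=0}^{N-1} v_{n,i}^* v_{n,j}$, and using $d_T(\lambda)=\frac1{\sqrt T}[1,e^{-\imath\lambda},\dots,e^{-\imath(T-1)\lambda}]^{\sf T}$, I would first write
\[
d_T(\lambda)^{\sf H}\bigl(G_T\odot B_T\bigr)d_T(\lambda)
=\frac1T\sum_{i,j=0}^{T-1}\frac{T}{T-|i-j|}\,e^{\imath(i-j)\lambda}\,[G_T]_{i,j},
\]
in which the prefactor $1/T$ coming from $d_T(\lambda)$ cancels the numerator $T$ of $[B_T]_{i,j}$ and leaves the weight $1/(T-|i-j|)$. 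Substituting the expression for $[G_T]_{i,j}$ and reindexing the double sum by the difference $k=i-j$ of the indices, I would group the terms carrying a common $k\in\{-(T-1),\dots,T-1\}$: for fixed $k$ the surviving index runs over exactly the pairs with $0\le i\le T-1$ and $0\le i+k\le T-1$, reproducing the indicator $\mathbbm{1}_{0\le t+k\le T-1}$ of \eqref{est-u}, and the remaining inner sum over $n$ and that index equals $N(T-|k|)$ times the double sum defining $\hat r^u_{k,T}$ (up to a complex conjugation whose presence depends only on how the reindexing is oriented, through the commutativity $v_{n,t+k}v_{n,t}^*=v_{n,t}^* v_{n,t+k}$). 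Dividing back by $N(T-|k|)$ then leaves $\sum_k \hat r^u_{k,T}\,e^{\pm\imath k\lambda}$, i.e.\ $\widehat\Upsilon_T^u(\lambda)$ up to the substitution $\lambda\mapsto-\lambda$; this substitution is immaterial, since $G_T\odot B_T$ is Hermitian (a Hadamard product of a Hermitian matrix with a real symmetric one), so the quadratic form is real, and since the lemma is only invoked inside $\sup_{\lambda\in[0,2\pi)}|\widehat\Upsilon_T^u(\lambda)-\Upsilon_T(\lambda)|$, which is unchanged by $\lambda\mapsto-\lambda$ by $2\pi$-periodicity of both functions.

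\emph{Main obstacle.} There is no genuine obstacle: the statement is a bookkeeping identity, and the only point that requires a little care is the conjugation/sign convention in the exponent just discussed. As a cross-check (and an alternative one-line derivation) one may instead use Lemma~\ref{lemma_d_quad}: since $\hat r^u_{k,T}=\frac{T}{T-|k|}\hat r^b_{k,T}$ one has $\widehat\Upsilon_T^u(\lambda)=\sum_k \frac{T}{T-|k|}\hat r^b_{k,T}\,e^{\imath k\lambda}$, and multiplying the coefficient of $e^{\imath k\lambda}$ in the quadratic form $d_T(\lambda)^{\sf H}(\cdot)\,d_T(\lambda)$ by $\frac{T}{T-|k|}$ is realised at the matrix level by Hadamard multiplication by $B_T$, because that coefficient involves only the entries on the diagonals $|i-j|=|k|$, on which $B_T$ is constant and equal to $\frac{T}{T-|k|}$; this reduces the claim to Lemma~\ref{lemma_d_quad}.
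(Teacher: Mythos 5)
Your proof is correct and follows essentially the same route as the paper's: a direct expansion of the quadratic form, cancellation of the $1/T$ from $d_T(\lambda)$ against the numerator $T$ in $B_T$, and reindexing over the diagonals $k=i-j$ to recover $\sum_k \hat r^u_{k,T}e^{\imath k\lambda}$. Your remark on the exponent-sign convention (and the observation that the quadratic form is real, so the $\lambda\mapsto-\lambda$ ambiguity is immaterial) matches a harmless sloppiness already present in the paper's own computation, and your alternative reduction to Lemma~\ref{lemma_d_quad} via $\hat r^u_{k,T}=\tfrac{T}{T-|k|}\hat r^b_{k,T}$ is a valid cross-check.
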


In order to make $\widehat \Upsilon_T^u(\lambda)$ more tractable, we rely on the 
following lemma which can be proven by direct calculation.
\begin{lemma}
\label{lemma_hadamard}
Let $x$, $y \in \C^{m}$ and $A, B \in C^{m \times m}$. Then 
\begin{equation*}
x^{\sf H}( A \odot B ) y = \tr (D_x^{\sf H} A D_y B^{\sf T}) 
\end{equation*}
where we recall $D_x = \diag(x)$ and $D_y = \diag(y)$.
\end{lemma}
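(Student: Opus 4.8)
The final statement to prove is Lemma~\ref{lemma_hadamard}: for $x,y\in\C^m$ and $A,B\in\C^{m\times m}$, one has $x^{\sf H}(A\odot B)y = \tr(D_x^{\sf H} A D_y B^{\sf T})$.

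\textbf{Plan.} The plan is to prove this by a direct index-by-index computation of both sides, which is the natural and only real approach for an identity of this elementary type. First I would expand the left-hand side in coordinates: writing $x=(x_0,\dots,x_{m-1})^{\sf T}$ and $y=(y_0,\dots,y_{m-1})^{\sf T}$, and using $[A\odot B]_{i,j}=A_{i,j}B_{i,j}$, we get
\[
x^{\sf H}(A\odot B)y = \sum_{i=0}^{m-1}\sum_{j=0}^{m-1} \overline{x_i}\, A_{i,j} B_{i,j}\, y_j .
\]
Next I would expand the right-hand side. Since $D_x^{\sf H}=\diag(\overline{x_0},\dots,\overline{x_{m-1}})$ and $D_y=\diag(y_0,\dots,y_{m-1})$, the matrix $M \triangleq D_x^{\sf H} A D_y B^{\sf T}$ has entries $M_{i,\ell} = \sum_{j}(D_x^{\sf H} A D_y)_{i,j}[B^{\sf T}]_{j,\ell} = \overline{x_i}\,A_{i,\ell}\,y_\ell\,B_{\ell,j}$ — more carefully, $(D_x^{\sf H} A D_y)_{i,j} = \overline{x_i} A_{i,j} y_j$ and $[B^{\sf T}]_{j,\ell}=B_{\ell,j}$, so $M_{i,\ell}=\sum_j \overline{x_i} A_{i,j} y_j B_{\ell,j}$. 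Taking the trace, $\tr M = \sum_i M_{i,i} = \sum_i\sum_j \overline{x_i}\, A_{i,j}\, y_j\, B_{i,j}$, which matches the left-hand side after relabeling. That completes the argument.

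\textbf{Main obstacle.} There is essentially no obstacle here; the only thing to be careful about is the placement of the transpose versus Hermitian transpose and which index of $B$ is summed against which index of $A$ — it is easy to write $B_{i,j}$ where $B_{j,i}$ is meant, so I would keep the indices explicit throughout rather than manipulating the matrices symbolically. A secondary point of care is the conjugation on $x$: because the statement uses $D_x^{\sf H}$ (not $D_{\bar x}$ or $D_x$), the $\overline{x_i}$ appears naturally and no extra conjugation bookkeeping is needed. Since the paper itself says this lemma "can be proven by direct calculation," a clean two-line display of each side's $(i,j)$-expansion suffices and I would present exactly that.
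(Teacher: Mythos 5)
Your computation is correct and is exactly the "direct calculation" the paper invokes without writing out: both sides expand to $\sum_{i,j}\overline{x_i}\,A_{i,j}\,B_{i,j}\,y_j$. The only blemish is the garbled intermediate expression $\overline{x_i}A_{i,\ell}y_\ell B_{\ell,j}$ before your "more carefully" correction, which you should simply delete since the corrected index-explicit version that follows is right.
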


Denoting
\begin{align*}
D_T(\lambda) &\triangleq \diag (d_T(\lambda)) = \frac{1}{\sqrt{T}} \diag(1, e^{i\lambda}, \ldots, e^{i(T-1)\lambda} ) \\
Q_T(\lambda) &\triangleq R_T^{1/2} D_T(\lambda) B_T D_T(\lambda)^{\sf H} (R_T^{1/2})^{\sf H}
\end{align*}
we get from Lemmas~\ref{lemma_d_quad2} and \ref{lemma_hadamard} 
\begin{align}
\label{Upsilon_sum} 
\widehat \Upsilon_T^u(\lambda) &= \frac1N
\tr(D_T(\lambda)^{\sf H} (R_T^{1/2})^{\sf H} W_T^{\sf H} W_T R_T^{1/2} D_T(\lambda) B_T)
\nonumber\\ &= \frac1N \tr (W_T Q_T(\lambda) W_T^{\sf H}) \nonumber \\
&= \frac1N
\sum_{n=0}^{N-1} w_n^{\sf H} Q_T(\lambda) w_n 
\end{align} 
where $w_i^{\sf H}$ is such that $W_T=[w_0^{\sf H},\ldots,w_{N-1}^{\sf H}]$. 

Compared to the biased case, the main difficulty lies here in the fact 
that the matrices $B_T/T$ and $Q_T(\lambda)$ have unbounded spectral norm as $T\to\infty$. 
The following lemma, proven in Appendix~\ref{prf-lm-B-Q}, provides some information on the spectral behavior of these
matrices that will be used subsequently.  
\begin{lemma}
\label{lm-B-Q} 
The matrix $B_T$ satisfies 
\begin{equation}
\label{norme_B} 
\norme{B_T} \leq \sqrt{2} T( \sqrt{\log T} + C). 
\end{equation} 
For any $\lambda \in[0, 2\pi)$, the eigenvalues 
$\sigma_0, \ldots, \sigma_{T-1}$ of the matrix $Q(\lambda)$ satisfy the 
following inequalities: 
\begin{eqnarray}
\sum_{t=0}^{T-1} \sigma_t^2 &\leq& 2 \norme{\bs\Upsilon}_{\infty}^2 \log T + C \label{sum_sigma2}\\ 
\underset{t}{\max} |\sigma_t| &\leq& \sqrt{2} \| \bs \Upsilon \|_\infty 
( \log T )^{1/2} + C \label{sig_max} \\
\sum_{t=0}^{T-1} |\sigma_t|^3 &\leq& C ((\log T)^{3/2} +1)\label{sum_sigma3}
\end{eqnarray}
where the constant $C$ is independent of $\lambda$.
\end{lemma}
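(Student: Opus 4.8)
The plan is to prove the three bounds in Lemma~\ref{lm-B-Q} by treating $B_T$ and $Q_T(\lambda)$ essentially through the Frobenius norm, which for these Toeplitz-like objects is far easier to control than the operator norm and is in fact sufficient because $Q_T(\lambda)$ has rank $\leq T$. First I would bound $\norme{B_T}$: since $B_T$ is a symmetric Toeplitz matrix with symbol (formally) $\sum_{|k|\leq T-1}\frac{T}{T-|k|}e^{\imath k\lambda}$, I would apply \cite[Lemma 4.1]{Gray'06} to get $\norme{B_T}\leq \sup_\lambda |\sum_{k=-(T-1)}^{T-1}\frac{T}{T-|k|}e^{\imath k\lambda}|$, and then estimate this trigonometric sum. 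Writing $\frac{T}{T-|k|}=1+\frac{|k|}{T-|k|}$, the constant part contributes a bounded Dirichlet-type kernel (actually $O(T)$ uniformly, but one must be slightly careful), and the remainder $\sum \frac{|k|}{T-|k|}e^{\imath k\lambda}$ is dominated by $\sum_{j=1}^{T-1}\frac{T-j}{j}$ in absolute value, which is $T\sum_{j=1}^{T-1}\frac1j - (T-1)= T\log T + O(T)$; taking the square root structure of the claimed bound $\sqrt2\,T(\sqrt{\log T}+C)$ suggests instead one should bound $\norme{B_T}\leq\norme{B_T}_{F}$ and compute $\norme{B_T}_F^2=\sum_{k=-(T-1)}^{T-1}(T-|k|)\frac{T^2}{(T-|k|)^2}=T^2\sum_{k}\frac{1}{T-|k|}=T^2(1+2\sum_{j=1}^{T-1}\frac1j)\leq 2T^2(\log T+C)$, giving \eqref{norme_B} directly.

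Next, for $Q_T(\lambda)$ I would use the same Frobenius strategy. The key identity is $\sum_t\sigma_t^2=\norme{Q_T(\lambda)}_F^2=\tr(Q_T(\lambda)^2)$. Since $Q_T(\lambda)=R_T^{1/2}D_T(\lambda)B_T D_T(\lambda)^{\sf H}(R_T^{1/2})^{\sf H}$, its nonzero eigenvalues coincide with those of $M_T(\lambda)\triangleq D_T(\lambda)B_T D_T(\lambda)^{\sf H}R_T$ (cyclic similarity), and $D_T(\lambda)B_T D_T(\lambda)^{\sf H}=\frac1T[\,\frac{T}{T-|i-j|}e^{\imath(i-j)\lambda}\,]$ is again Toeplitz with the nonnegative Toeplitz matrix $\frac1T B_T$ conjugated by a unitary diagonal; its symbol is $\frac1T$ times a nonnegative function (it is a genuine covariance-type matrix). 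I would then bound $\tr((M_T(\lambda))^2)\leq \norme{R_T}^2\,\tr((\tfrac1T D_T B_T D_T^{\sf H})^2)\cdot$(a factor handling that $R_T$ multiplies on one side only); more cleanly, using $\tr(ABAB)\leq\norme{A}^2\tr(B^2)$ for Hermitian nonnegative $A,B$ applied with $A=R_T$, $B=\frac1T D_T B_T D_T^{\sf H}$ gives $\sum_t\sigma_t^2\leq\norme{R_T}^2\cdot\frac1{T^2}\norme{B_T}_F^2\leq\norme{\bs\Upsilon}_\infty^2\cdot\frac1{T^2}\cdot 2T^2(\log T+C)$, which is exactly \eqref{sum_sigma2}. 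The bound \eqref{sig_max} on $\max_t|\sigma_t|$ then follows since $|\sigma_t|\leq(\sum_s\sigma_s^2)^{1/2}$, and \eqref{sum_sigma3} follows from $\sum_t|\sigma_t|^3\leq(\max_t|\sigma_t|)\sum_t\sigma_t^2\leq(\sqrt2\norme{\bs\Upsilon}_\infty\sqrt{\log T}+C)(2\norme{\bs\Upsilon}_\infty^2\log T+C)=O((\log T)^{3/2}+1)$, collapsing all three into corollaries of the single Frobenius estimate.

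The main obstacle I anticipate is making the inequality $\tr(ABAB)\leq\norme{A}^2\tr(B^2)$ genuinely applicable, because $\frac1T D_T(\lambda)B_T D_T(\lambda)^{\sf H}$ must be shown to be Hermitian nonnegative definite (equivalently that $B_T$ itself is nonnegative definite). This is true — $B_T$ is the covariance-type matrix arising in Lemma~\ref{lemma_d_quad2}, or one can see it as $[\frac{T}{T-|i-j|}]=\sum_{\ell\geq0}$ of averages of rank-one Toeplitz pieces — but writing this cleanly requires either invoking the representation from the derivation of $\widehat\Upsilon_T^u$ or a short direct argument that the symbol $\sum_k\frac{T}{T-|k|}e^{\imath k\lambda}$ is nonnegative (it is $T$ times the Fejér-like kernel $\sum_k\frac{e^{\imath k\lambda}}{T-|k|}$, whose nonnegativity can be checked via $\frac1{T-|k|}=\int$ representation or by noting $B_T = T\sum_{m=1}^{T}\frac1m \,G_m$ with $G_m$ a suitable nonnegative Toeplitz projection). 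Once nonnegativity of $B_T$ is in hand, the rest is the routine Frobenius computation sketched above; if nonnegativity proves awkward to state compactly, a fallback is the cruder operator-norm bound $\sum_t\sigma_t^2\leq\norme{R_T}\cdot\norme{\tfrac1T D_T B_T D_T^{\sf H}}\cdot\tr(\text{something})$, but this typically loses the sharp constant $2\norme{\bs\Upsilon}_\infty^2$ and would force a more careful argument, so establishing $B_T\succeq0$ is the step worth doing carefully.
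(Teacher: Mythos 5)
Your overall strategy is the paper's own: bound $\norme{B_T}$ by its Frobenius norm with $\tr B_T^2 = 2T^2(\log T + C)$, obtain \eqref{sum_sigma2} from $\sum_t\sigma_t^2=\tr Q_T(\lambda)^2$ via a trace inequality that extracts $\norme{R_T}^2\le\norme{\bs\Upsilon}_\infty^2$ and leaves $T^{-2}\tr B_T^2$, and deduce \eqref{sig_max} and \eqref{sum_sigma3} as immediate consequences (the paper uses Cauchy--Schwarz for \eqref{sum_sigma3}, you use $\sum_t|\sigma_t|^3\le\max_t|\sigma_t|\cdot\sum_t\sigma_t^2$; both give the stated order). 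These computations are correct, up to the harmless slip that the $k=0$ diagonal contributes $T$, not $T^2$, to $\tr B_T^2$.

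The problem is the final paragraph, which rests on a false claim: $B_T$ is \emph{not} nonnegative definite. Already for $T=2$ one gets $B_2=\bigl(\begin{smallmatrix}1&2\\2&1\end{smallmatrix}\bigr)$, with eigenvalues $3$ and $-1$; the entries $T/(T-|i-j|)$ are of size $T$ in the corners and destroy positivity, so the step you single out as ``worth doing carefully'' --- proving $B_T\succeq 0$ (equivalently, nonnegativity of the symbol $\sum_k\frac{T}{T-|k|}e^{\imath k\lambda}$) --- cannot succeed. Fortunately it is also unnecessary: the inequality $\tr(ABAB)\le\norme{A}^2\tr(B^2)$ needs only $A\succeq 0$ and $B$ Hermitian, since $\tr(ABAB)=\tr\bigl((A^{1/2}BA^{1/2})^2\bigr)=\norme{A^{1/2}BA^{1/2}}_F^2\le\norme{A^{1/2}}^4\norme{B}_F^2$, and here $B=D_T(\lambda)B_TD_T(\lambda)^{\sf H}$ is Hermitian because $B_T$ is real symmetric. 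The paper's proof avoids the issue in a similar way, by applying $\tr(FG)\le\norme{F}\tr(G)$ twice with $F=R_T$ and with $G$ equal to $D_TB_TD_T^{\sf H}R_TD_TB_TD_T^{\sf H}$ and then to $(D_TB_TD_T^{\sf H})^2$, both of which are nonnegative definite regardless of the sign of $B_T$. With that correction (and dropping the false positivity claim), your argument coincides with the paper's proof; your ``fallback'' via a cruder operator-norm bound is not needed.
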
 

We shall also need the following easily shown Lipschitz property of the 
function $\norme{D_T(\lambda) - D_T(\lambda')}$:  
\begin{equation} 
\label{lipschitz_D} 
\| D_T(\lambda) - D_T(\lambda') \| \leq \sqrt{T}|\lambda - \lambda'|. 
\end{equation}

We now enter the core of the proof of Theorem~\ref{th-unbiased}.  
Choosing $\beta>2$, let $\lambda_i=2 \pi \frac{i}{\lfloor T^{\beta}\rfloor}$, 
$i \in {\cal I}$, be a regular discretization of the interval $[0, 2\pi]$ with 
${\cal I}=\left\{0, \ldots, \lfloor T^{\beta} \rfloor - 1 \right\}$. We write 
\begin{align*}
\underset{\lambda \in [0, 2\pi)}{\text{sup}} 
\left| \widehat \Upsilon_T^u(\lambda) - \E \widehat \Upsilon_T^u(\lambda) \right|  
&\leq \underset{i \in {\cal I}}{\text{max}} 
\underset{\lambda \in [\lambda_i,\lambda_{i+1}]}{\text{sup}} 
\left| \widehat \Upsilon_T^u(\lambda)-\widehat \Upsilon_T^u(\lambda_i) \right| + \underset{i \in {\cal I}}{\text{max}} \left| \widehat \Upsilon_T^u(\lambda_i) - \E \widehat \Upsilon_T^u(\lambda_i) \right|\\& + \underset{i \in {\cal I}}{\text{max}} \underset{\lambda \in [\lambda_i,\lambda_{i+1}]}{\text{sup}} 
\left| \E \widehat \Upsilon_T^u(\lambda_i) - \E \widehat \Upsilon_T^u(\lambda) \right| \\ 
&\triangleq \chi_1 + \chi_2 + \chi_3 . 
\end{align*} 

Our task is now to provide concentration inequalities on the random terms 
$\chi_1$ and $\chi_2$ and a bound on the deterministic term $\chi_3$.

\begin{lemma}
\label{chi1-u} 
There exists a constant $C > 0$ such that, if $T$ is large enough, the following 
inequality holds:
\begin{equation*}
\mathbb{P} \left[ \chi_1 > x \right]
\leq \exp \left( - cT^2 \left( \frac{x T^{\beta-2}}{C \sqrt{\log T}} 
- \log \frac{x T^{\beta-2}}{C \sqrt{\log T}} - 1 \right) \right). 
\end{equation*} 
\end{lemma}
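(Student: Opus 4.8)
The plan is to mimic the proof of Lemma~\ref{chi1} from the biased case, replacing the bound $\|R_T\| \le \|\bs\Upsilon\|_\infty$ by the estimate on $\|B_T\|$ from Lemma~\ref{lm-B-Q} and the Lipschitz property \eqref{lipschitz} by its analogue \eqref{lipschitz_D} for $D_T(\lambda)$. First I would use the representation of $\widehat\Upsilon_T^u(\lambda)$ obtained in \eqref{Upsilon_sum}, namely $\widehat\Upsilon_T^u(\lambda) = \frac1N\tr(W_T Q_T(\lambda) W_T^{\sf H})$, together with Lemma~\ref{lemma_hadamard}, to write the increment $\widehat\Upsilon_T^u(\lambda)-\widehat\Upsilon_T^u(\lambda_i)$ as a difference of two Hadamard quadratic forms in $d_T(\lambda)$ and $d_T(\lambda_i)$ against the matrix $\frac{V_T^{\sf H}V_T}{N}\odot B_T$. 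Converting back via $x^{\sf H}(A\odot B)y = \tr(D_x^{\sf H} A D_y B^{\sf T})$, the difference becomes $\frac1N\tr\big(W_T\,[R_T^{1/2}(D_T(\lambda)B_T D_T(\lambda)^{\sf H} - D_T(\lambda_i)B_T D_T(\lambda_i)^{\sf H})(R_T^{1/2})^{\sf H}]\,W_T^{\sf H}\big)$, i.e. $\frac1N\tr(W_T \Delta_T W_T^{\sf H})$ with $\Delta_T = Q_T(\lambda)-Q_T(\lambda_i)$ up to the obvious conjugation.

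Next I would bound $\|\Delta_T\|$. Writing $D_T(\lambda)B_T D_T(\lambda)^{\sf H} - D_T(\lambda_i)B_T D_T(\lambda_i)^{\sf H} = (D_T(\lambda)-D_T(\lambda_i))B_T D_T(\lambda)^{\sf H} + D_T(\lambda_i)B_T(D_T(\lambda)-D_T(\lambda_i))^{\sf H}$ and using $\|D_T(\mu)\| = 1/\sqrt T$, the Lipschitz bound \eqref{lipschitz_D}, $\|R_T^{1/2}\|^2 = \|R_T\| \le \|\bs\Upsilon\|_\infty$, and $\|B_T\| \le \sqrt2\,T(\sqrt{\log T}+C)$ from \eqref{norme_B}, one gets $\|\Delta_T\| \le C\,|\lambda-\lambda_i|\,\sqrt{T}\cdot T \cdot \frac1{\sqrt T}\cdot\|\bs\Upsilon\|_\infty\cdot\frac1{\sqrt T} = C\|\bs\Upsilon\|_\infty\sqrt{\log T}\,|\lambda-\lambda_i|$ (absorbing $\|\bs\Upsilon\|_\infty$ into $C$ to match the statement). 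Then $|\tr(W_T\Delta_T W_T^{\sf H})| \le \|\Delta_T\|\tr(W_T^{\sf H}W_T)$, so $|\widehat\Upsilon_T^u(\lambda)-\widehat\Upsilon_T^u(\lambda_i)| \le \frac{C\sqrt{\log T}}{N}|\lambda-\lambda_i|\sum_{n,t}|w_{n,t}|^2$, and maximizing $|\lambda-\lambda_i|$ over a cell gives the factor $T^{-\beta}$. This yields $\chi_1 \le C\sqrt{\log T}\,T^{-\beta}\sum_{n,t}|w_{n,t}|^2$ (after dividing by $N$ and $T$ in the right places), exactly the same structure as in the biased case but with $\|\bs\Upsilon\|_\infty$ replaced by $\sqrt{\log T}$.

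Finally I would invoke Lemma~\ref{chernoff} with $M = NT$: setting $f(x,T) = x T^{\beta-1}/(C N\sqrt{\log T})$ and assuming $T$ large enough that $f(x,T)\ge 1$, the event $\{\chi_1 > x\}$ is contained in $\{\frac1{NT}\sum_{n,t}(|w_{n,t}|^2-1) > f(x,T)-1\}$, and Lemma~\ref{chernoff} gives the bound $\exp(-NT(f(x,T)-\log f(x,T)-1))$. Using $N \sim cT$ (Assumption~\ref{ass-regime}), this is $\exp(-cT^2(\tfrac{xT^{\beta-2}}{C\sqrt{\log T}} - \log\tfrac{xT^{\beta-2}}{C\sqrt{\log T}} - 1))$ after folding the constant $c$-ratio adjustments into $C$, which is the claimed inequality. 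The only genuinely new point compared to the biased case is the operator-norm estimate on $\Delta_T$, where one must be careful to use \eqref{norme_B} — the extra $\sqrt{\log T}$ factor entering here is exactly what accounts for the $\sqrt{\log T}$ in the denominator; everything else is a transcription of the argument in Lemma~\ref{chi1}. I expect the main (minor) obstacle to be the bookkeeping of the $N$ versus $T$ and $\sqrt{\log T}$ factors so that the final exponent matches the stated form; there is no conceptual difficulty.
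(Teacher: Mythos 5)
Your proposal follows the paper's proof of Lemma~\ref{chi1-u} essentially step for step: the representation \eqref{Upsilon_sum}, the telescoping bound on $\norme{Q_T(\lambda)-Q_T(\lambda_i)}$ via \eqref{norme_B}, \eqref{lipschitz_D} and $\norme{D_T(\lambda)}=1/\sqrt{T}$, the trace bound $|\tr(W_T\Delta_T W_T^{\sf H})|\leq\norme{\Delta_T}\tr(W_T^{\sf H}W_T)$, and finally Lemma~\ref{chernoff} with $N\sim cT$. The only blemish is the displayed evaluation of $\norme{\Delta_T}$, which should read $C\,T(\sqrt{\log T}+1)\,|\lambda-\lambda_i|$ (each term of the telescoping contributes exactly one factor $\norme{D_T}=1/\sqrt{T}$, the other slot being the difference bounded by $\sqrt{T}|\lambda-\lambda_i|$), but your subsequent bookkeeping --- the bound $\chi_1\leq C\sqrt{\log T}\,T^{-\beta}\sum_{n,t}|w_{n,t}|^2$ and the choice $f(x,T)=xT^{\beta-1}/(CN\sqrt{\log T})$ --- is consistent with this corrected factor, so the argument and conclusion stand.
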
 

\begin{proof}
From Equation~\eqref{Upsilon_sum}, we have
\begin{align} 
\left| \widehat \Upsilon_T^u(\lambda) - \widehat \Upsilon_T^u(\lambda_i) \right|
& = \frac1N \left| \sum_{n=0}^{N-1} w_n^{\sf H}\left( Q_T(\lambda) - Q_T(\lambda_i) \right) w_n \right| \nonumber \\
& \leq \frac1N \sum_{n=0}^{N-1} \left| w_n^{\sf H} \left( Q_T(\lambda) - Q_T(\lambda_i) \right) w_n \right| \nonumber\\
& \leq \frac1N \norme{Q_T(\lambda) - Q_T(\lambda_i)} \sum_{n=0}^{N-1} \norme{w_n}^2 \nonumber.
\end{align}
The norm above further develops as
\begin{align*} 
&\norme{Q_T(\lambda) - Q_T(\lambda_i)}\\
& \leq \norme{R_T} \Vert D_T(\lambda)B_TD_T(\lambda)^{\sf H}- 
D_T(\lambda_i)B_TD_T(\lambda)^{\sf H} + D_T(\lambda_i)B_TD_T(\lambda)^{\sf H} - 
D_T(\lambda_i)B_TD_T(\lambda_i)^{\sf H} \Vert \\
&\leq 2 \norme{D_T(\lambda)} \norme{R_T} \norme{B_T} 
\norme{D_T(\lambda) - D_T(\lambda_i)}  \leq C T ( \sqrt{\log T}  + 1)  \left| \lambda - \lambda_i \right| 
\end{align*}
where we used \eqref{norme_B}, \eqref{lipschitz_D}, and $\norme{D_T(\lambda)}=1/\sqrt{T}$.
Up to a change in $C$, we can finally write $\norme{Q_T(\lambda) - Q_T(\lambda_i)} \leq 
C T^{1-\beta} \sqrt{\log T}$. Assume that $f(x,T) \triangleq xT^{\beta-2}/
\left( C \sqrt{\log T} \right)$ 
satisfies $f(x,T) > 1$ (always possible for every fixed $x$ by taking $T$ large). Then we get by Lemma~\ref{chernoff} 
\begin{align*} 
\mathbb{P} \left[ \chi_1 > x \right] & \leq \mathbb{P} \left( C N^{-1} T^{1-\beta} 
\sqrt{\log T} \sum_{n,t} \left|w_{n,t}\right|^2 > x \right) \\ 
&= \PP \left( \frac{1}{NT} \sum_{n,t} (\left|w_{n,t}\right|^2 - 1) 
> f(x,T) - 1 \right) \\
&\leq
\exp \left( - NT \left( f(x,T) - \log \left(f(x,T)\right) - 1 \right) \right).
\end{align*}
\end{proof}

The most technical part of the proof is to control the term $\chi_2$, which we handle hereafter.

\begin{lemma}
\label{chi2-u} 
The following inequality holds:
\begin{equation*}
\mathbb{P} \left[ \chi_2 > x \right]
\leq \exp \left(- \frac{cx^2T}{4\norme{\bs\Upsilon}_{\infty}^2\log T} (1 + o(1)) \right).
\end{equation*}
\end{lemma}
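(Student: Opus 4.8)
The plan is to bound $\chi_2 = \max_{i\in{\cal I}} |\widehat\Upsilon_T^u(\lambda_i) - \E\widehat\Upsilon_T^u(\lambda_i)|$ by a union bound over the $\lfloor T^\beta\rfloor$ grid points, and to control each individual deviation probability $\PP[|\widehat\Upsilon_T^u(\lambda_i) - \E\widehat\Upsilon_T^u(\lambda_i)| > x]$ by a Chernoff argument applied to the quadratic form in \eqref{Upsilon_sum}. From \eqref{Upsilon_sum}, $\widehat\Upsilon_T^u(\lambda_i) = \frac1N\sum_n w_n^{\sf H} Q_T(\lambda_i) w_n$; since $Q_T(\lambda_i)$ need not be Hermitian, I would first symmetrize, replacing $Q_T(\lambda_i)$ by its Hermitian part (the imaginary antisymmetric part does not contribute to $\Re(w^{\sf H} Q w)$, and a separate argument handles the imaginary part, or one simply notes $w^{\sf H}Qw = w^{\sf H}\frac{Q+Q^{\sf H}}{2}w$ because the scalar equals its own conjugate transpose). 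Diagonalizing the Hermitian matrix with eigenvalues $\sigma_0,\dots,\sigma_{T-1}$, the quadratic form becomes $\frac1N\sum_{n,t}\sigma_t |g_{n,t}|^2$ with i.i.d.\ $\mathcal{CN}(0,1)$ variables $g_{n,t}$, so $\widehat\Upsilon_T^u(\lambda_i) - \E\widehat\Upsilon_T^u(\lambda_i) = \frac1N\sum_{n,t}\sigma_t(|g_{n,t}|^2-1)$.

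Next I would run the Chernoff bound: for $\xi$ small enough that $\xi\sigma_t < 1$ for all $t$ (which by \eqref{sig_max} means $|\xi| \lesssim (\log T)^{-1/2}$),
\[
\PP\Bigl[\tfrac1N\textstyle\sum_{n,t}\sigma_t(|g_{n,t}|^2-1) > x\Bigr]
\leq \exp\Bigl(-\xi N x - N\textstyle\sum_t\bigl(\log(1-\xi\sigma_t) + \xi\sigma_t\bigr)\Bigr).
\]
Using $-\log(1-u) - u \leq u^2/(2(1-|u|))$ for $|u|<1$ (or a Taylor expansion with remainder), the sum is controlled by $\frac{\xi^2}{2(1-\xi\sigma_{\max})}\sum_t\sigma_t^2$, and by \eqref{sum_sigma2} this is at most $\frac{\xi^2}{2(1-\xi\sigma_{\max})}(2\|\bs\Upsilon\|_\infty^2\log T + C)$. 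Optimizing over $\xi$ — the near-optimal choice is $\xi \approx x/(2\|\bs\Upsilon\|_\infty^2\log T)$, for which $\xi\sigma_{\max} = O(x/\sqrt{\log T}) \to 0$, so the $(1-\xi\sigma_{\max})^{-1}$ factor is $1+o(1)$ — yields a per-point bound of $\exp\bigl(-\frac{Nx^2}{4\|\bs\Upsilon\|_\infty^2\log T}(1+o(1))\bigr)$. (Here the cubic term bound \eqref{sum_sigma3} is what is needed if one wants to make the $o(1)$ from the higher-order Taylor terms explicit.) Handling $\PP[-(\widehat\Upsilon_T^u - \E\widehat\Upsilon_T^u) > x]$ is symmetric. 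Multiplying by $2\lfloor T^\beta\rfloor$ from the union bound contributes an extra factor $\exp(\log 2 + \beta\log T)$, which is absorbed into the $(1+o(1))$ since $\log T = o(Tx^2/\log T)$; using $N/T\to c$ gives the stated $\exp\bigl(-\frac{cx^2 T}{4\|\bs\Upsilon\|_\infty^2\log T}(1+o(1))\bigr)$.

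The main obstacle is the bookkeeping of the $o(1)$ term: one must verify that all the error contributions — the $(1-\xi\sigma_{\max})^{-1}$ correction in the Chernoff exponent (controlled via \eqref{sig_max}), the higher-order Taylor remainder beyond the quadratic term (controlled via \eqref{sum_sigma3} together with the smallness of $\xi$), the additive constant $C$ in \eqref{sum_sigma2}, and the $\log T$ overhead from the union bound — are each genuinely negligible compared to the leading term $Nx^2/(\log T)$, uniformly in $i\in{\cal I}$ (the constant $C$ in Lemma~\ref{lm-B-Q} being independent of $\lambda$ is precisely what makes this uniformity work). A minor subtlety worth a line is the non-Hermiticity of $Q_T(\lambda)$, i.e.\ justifying the reduction to the real eigenvalue picture before diagonalizing.
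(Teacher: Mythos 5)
Your plan follows the paper's proof essentially step for step: union bound over the $\lfloor T^{\beta}\rfloor$ grid points, diagonalization of $Q_T(\lambda_i)$ to reduce to a weighted sum of independent $|g_{n,t}|^2$, a Chernoff bound with the same near-optimal tilt (your $\xi\approx x/(2\norme{\bs\Upsilon}_\infty^2\log T)$ is exactly the paper's $\tau=axT/\log T$ with $a=N/(2T\norme{\bs\Upsilon}_{\infty}^2)$, i.e.\ $\xi=\tau/N$), control of the error terms via \eqref{sum_sigma2}--\eqref{sig_max} (the paper uses the explicit cubic remainder together with \eqref{sum_sigma3}, you use $-\log(1-u)-u\leq u^2/(2(1-|u|))$, which is an equivalent bookkeeping), and absorption of the union-bound factor $2\lfloor T^{\beta}\rfloor$ into the $(1+o(1))$. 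One small correction: $Q_T(\lambda)=R_T^{1/2}D_T(\lambda)B_TD_T(\lambda)^{\sf H}(R_T^{1/2})^{\sf H}$ is already Hermitian since $B_T$ is real symmetric, so the symmetrization step you worry about is unnecessary (though harmless).
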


\begin{proof}
From the union bound we obtain:
\begin{equation} \label{union_bound}
\mathbb{P} \left[ \chi_2 > x \right]
\leq \sum_{i=0}^{\lfloor T^{\beta} \rfloor - 1} 
\mathbb{P} \left[ \left| \widehat \Upsilon_T^u(\lambda_i) 
 - \E \widehat \Upsilon_T^u(\lambda_i) \right| > x \right].
\end{equation} 
Each term of the sum can be written
\begin{equation*}
\mathbb{P} \left[ \left| \widehat \Upsilon_T^u(\lambda_i) - \E \widehat \Upsilon_T^u(\lambda_i) \right| > x \right] = \mathbb{P} \left[\widehat \Upsilon_T^u(\lambda_i) - \E \widehat \Upsilon_T^u(\lambda_i) > x \right] + \mathbb{P} \left[- \left(\widehat \Upsilon_T^u(\lambda_i) - \E \widehat \Upsilon_T^u(\lambda_i) \right) > x \right].
\end{equation*}
We will deal with the term $\psi_i = \mathbb{P} \left[\widehat \Upsilon_T^u(\lambda_i) - \E \widehat \Upsilon_T^u(\lambda_i) > x \right]$, the term $\mathbb{P} \left[- \left(\widehat \Upsilon_T^u(\lambda_i) - \E \widehat \Upsilon_T^u(\lambda_i) \right) > x \right]$ being treated similarly.
Let $Q_T(\lambda_i) = U_T \Sigma_T U_T^{\sf H}$ be a spectral 
factorization of the Hermitian matrix $Q_T(\lambda_i)$ with 
$\Sigma_T =\diag (\sigma_{0},\ldots,\sigma_{T-1})$. 
Since $U_T$ is unitary and $W_T$ has independent ${\cal CN}(0,1)$ elements,
we get from Equation \eqref{Upsilon_sum} 
\begin{equation}
\label{Ups_u}
\widehat \Upsilon_T^u(\lambda_i) \stackrel{\cal L}{=}  
\frac{1}{N}\sum_{n=0}^{N-1} w_n^{\sf H} \Sigma_T(\lambda_i) w_n 
= \frac{1}{N}\sum_{n=0}^{N-1} \sum_{t=0}^{T-1} |w_{n,t}|^2 \sigma_t
\end{equation}
where $\stackrel{\cal L}{=}$ denotes equality in law. 
Since $\E \left[ e^{a|X|^2} \right]= 1/(1-a)$ when $X \sim {\cal CN}(0, 1)$ 
and $0<a<1$, we have by Markov's inequality and from the independence of the
variables $|w_{n,t}|^2$
\begin{align}
\psi_i &= \PP \left( 
\frac{1}{N}\sum_{n=0}^{N-1} \sum_{t=0}^{T-1} |w_{n,t}|^2 \sigma_t 
        - \tr Q_T(\lambda_i) > x \right) \nonumber \\ 
&\leq \E\left[\text{exp}\Bigl( 
\frac{\tau}{N}\sum_{n,t} |w_{n,t}|^2\sigma_t \Bigr) \right]
\exp \Bigl(-\tau\Bigl(x + \sum_{t=0}^{T-1}\sigma_t \Bigr)\Bigr) \nonumber \\ 
&= 
\exp \Bigl( -\tau \Bigl(x+ \sum_{t=0}^{T-1} \sigma_t \Bigr) \Bigr) 
\prod_{t=0}^{T-1} \Bigl( 1- \frac{\sigma_t\tau}{N} \Bigr)^{-N} 
\label{pro} \\
&= \exp\Bigl(-\tau \Bigl(x+ \sum_{t=0}^{T-1} \sigma_t \Bigr) 
  - N \sum_{t=0}^{T-1}  \log \Bigl(1-\frac{\sigma_t\tau}{N} \Bigr) \Bigr) 
\nonumber
\end{align}
for any $\tau$ such that $0 \leq \tau < \underset{0\leq t\leq T-1}{\text{min}}\frac{N}{\sigma_t}$.
Writing $\log(1-x) = - x - \frac{x^2}{2} + R_3(x)$ with $\left| R_3(x) \right| \leq \frac{|x|^3}{3(1-\epsilon)^3}$ when $|x|<\epsilon<1$, we get
\begin{align}
\psi_i &\leq
\exp \Bigl(-\tau x + N \sum_{t=0}^{T-1}\Bigl( \frac{\sigma_t^2\tau^2}{2N^2} 
+  R_3 \Bigl(\frac{\sigma_t\tau}{N} \Bigr) \Bigr) \Bigr) \nonumber \\
&\leq \exp \Bigl(-N \Bigl( \frac{\tau x}{N} - \frac{\tau^2}{2N^2} \sum_{t=0}^{T-1} \sigma_t^2 \Bigr)  \Bigr) \exp  \Bigl( N \sum_{t=0}^{T-1} \Bigl| R_3 \Bigl(\frac{\sigma_t \tau}{N} \Bigr) \Bigr|  \Bigr) \label{expr}.
\end{align}  
We shall manage this expression by using Lemma~\ref{lm-B-Q}. In order to 
control the term $\exp(N \sum |R_3(\cdot)|)$, we make the choice
\begin{equation*}
\tau = \frac{axT}{\log T}
\end{equation*}
where $a$ is a parameter of order one to be optimized later.
From \eqref{sig_max} we get 
$\max_t\frac{\sigma_t \tau}{N} = O \left( (\log T)^{-1/2} \right)$. Hence, 
for all $T$ large, $\tau < \min_t \frac{N}{\sigma_t}$. Therefore, \eqref{pro} 
is valid for this choice of $\tau$ and for $T$ large. Moreover, for $\epsilon$ 
fixed and $T$ large, $\frac{\sigma_t \tau}{N} < \epsilon <1$ so that for 
these $T$
\begin{equation*}
N \sum_{t=0}^{T-1} \Bigl| R_3 \Bigl(\frac{\sigma_t\tau}{N} \Bigr) \Bigr| 
\leq \frac{a^3T^3x^3}{3N^2(1-\epsilon)^3 (\log T)^3} \sum_{t=0}^{T-1} |\sigma_t|^3 
= O \left( {T}{(\log T)^{-3/2}}\right) 
\end{equation*}
from (\ref{sum_sigma3}).
Plugging the expression of $\tau$ in (\ref{expr}), we get 
\begin{equation*}
\psi_i \leq \exp \Bigl(-N \Bigl( \frac{a T x^2}{(\log T)N} - \frac{a^2T^2x^2}{2N^2(\log T)^2} \sum_{t=0}^{T-1} \sigma_t^2 \Bigr)  \Bigr) \exp \Bigl( C \Bigl( {T}{(\log T)^{-3/2}} \Bigr) \Bigr) . 
\end{equation*}
Using (\ref{sum_sigma2}), we have
\begin{equation*}
\psi_i \leq \exp \Bigl(-\frac{x^2T}{\log T} 
\Bigl(a - \frac{\norme{\bs\Upsilon}_{\infty}^2a^2T}{N} \Bigr) \Bigr) 
\exp\Bigl( \frac{CT}{(\log T)^{3/2}} \Bigr). 
\end{equation*}
The right hand side term is minimized for $a=\frac{N}{2T\norme{\bs\Upsilon}_{\infty}^2}$ which finally gives
\begin{equation*}
\psi_i 
\leq \exp \Bigl(- \frac{Nx^2}{4\norme{\bs\Upsilon}_{\infty}^2\log T} (1 + o(1)) \Bigr).
\end{equation*}
Combining the above inequality with (\ref{union_bound}) (which induces additional $o(1)$ terms in the argument of the exponential) concludes the lemma.
\end{proof}

\begin{lemma}
\label{chi3-u} 
$\displaystyle{ 
\chi_3 \leq C T^{-\beta+2} \sqrt{\log T}
}$. 
\end{lemma}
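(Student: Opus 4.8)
The plan is to bound the deterministic term $\chi_3 = \max_{i\in\mathcal I}\sup_{\lambda\in[\lambda_i,\lambda_{i+1}]}|\E\widehat\Upsilon_T^u(\lambda_i) - \E\widehat\Upsilon_T^u(\lambda)|$ exactly as in the biased case (Lemma~\ref{chi3}), but now tracking the extra factor coming from the unbounded matrix $B_T$. First I would note that by Lemma~\ref{lemma_d_quad2} and linearity of the expectation, $\E\widehat\Upsilon_T^u(\lambda) = d_T(\lambda)^{\sf H}(R_T\odot B_T)d_T(\lambda)$; equivalently, using Lemma~\ref{lemma_hadamard} and the matrix $Q_T(\lambda)$ introduced just before \eqref{Upsilon_sum}, one has $\E\widehat\Upsilon_T^u(\lambda) = \tr Q_T(\lambda)$ where $Q_T(\lambda) = R_T^{1/2}D_T(\lambda)B_TD_T(\lambda)^{\sf H}(R_T^{1/2})^{\sf H}$.

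The core estimate is then a Lipschitz bound on $\lambda\mapsto\E\widehat\Upsilon_T^u(\lambda)$. Writing $\E\widehat\Upsilon_T^u(\lambda) - \E\widehat\Upsilon_T^u(\lambda_i)$ as a difference of two quadratic forms $d_T(\lambda)^{\sf H}(R_T\odot B_T)d_T(\lambda) - d_T(\lambda_i)^{\sf H}(R_T\odot B_T)d_T(\lambda_i)$, I would apply Lemma~\ref{lm-fq} with $A = R_T\odot B_T$. This gives a bound $\le \|R_T\odot B_T\|(\|d_T(\lambda)\|+\|d_T(\lambda_i)\|)\|d_T(\lambda)-d_T(\lambda_i)\|$. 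Since $\|d_T(\lambda)\| = 1$ for all $\lambda$, and using \eqref{lipschitz} for $\|d_T(\lambda)-d_T(\lambda_i)\|\le T|\lambda-\lambda_i|/\sqrt3$, the bound becomes $\le C\|R_T\odot B_T\|\,T\,|\lambda-\lambda_i|$. It remains to control $\|R_T\odot B_T\|$: since $R_T\odot B_T$ is itself the Toeplitz matrix with symbol obtained by multiplying coefficients, and $[R_T\odot B_T]_{i,j} = r_{i-j}\,T/(T-|i-j|)$, one can bound $\|R_T\odot B_T\| \le \sum_k |r_k|\,T/(T-|k|)$, which is $O(1)$ by absolute summability (the worst terms near $|k|=T-1$ contribute $|r_k|$ times a factor $T$, but there are few of them and $r_k\to0$; more simply, $\|R_T\odot B_T\| \le \|R_T\|\|B_T\|/T \cdot (\text{something})$ — cleaner is to observe that the Hadamard product route in Lemma~\ref{lemma_hadamard} already absorbs $B_T$ into $Q_T$). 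In fact it is cleanest to mimic the decomposition used in Lemma~\ref{chi1-u}: write $\E\widehat\Upsilon_T^u(\lambda) - \E\widehat\Upsilon_T^u(\lambda_i) = \tr(Q_T(\lambda) - Q_T(\lambda_i))$ and use $|\tr M| \le \sqrt{T}\,\|M\|$ (rank $\le T$) together with the already-established bound $\|Q_T(\lambda) - Q_T(\lambda_i)\| \le C T^{1-\beta}\sqrt{\log T}$ from the proof of Lemma~\ref{chi1-u}, valid on $[\lambda_i,\lambda_{i+1}]$ since $|\lambda-\lambda_i|\le T^{-\beta}$.

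Finally, combining $|\tr(Q_T(\lambda)-Q_T(\lambda_i))| \le \sqrt T\cdot C T^{1-\beta}\sqrt{\log T} = C\,T^{3/2-\beta}\sqrt{\log T}$ and taking the maximum over $i\in\mathcal I$ and supremum over $\lambda\in[\lambda_i,\lambda_{i+1}]$ (which changes nothing, the bound being uniform), one gets $\chi_3 \le C\,T^{3/2-\beta}\sqrt{\log T}$; this is even stronger than the claimed $\chi_3 \le C\,T^{-\beta+2}\sqrt{\log T}$, so the stated bound follows a fortiori. I do not expect any serious obstacle here: the one point requiring minor care is choosing which representation of $\E\widehat\Upsilon_T^u$ to differentiate — going through $Q_T$ and reusing the norm bound from Lemma~\ref{chi1-u} avoids re-deriving a Hadamard-product norm estimate, and the trace-versus-operator-norm gain of $\sqrt T$ is what makes the exponent comfortably better than needed.
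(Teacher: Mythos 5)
Your overall route---writing $\E\widehat\Upsilon_T^u(\lambda)=\tr Q_T(\lambda)$ and reusing the already-established bound $\norme{Q_T(\lambda)-Q_T(\lambda_i)}\leq CT^{1-\beta}\sqrt{\log T}$ from the proof of Lemma~\ref{chi1-u}---is a genuinely different path from the paper, which instead applies Lemma~\ref{lm-fq} to the quadratic form $d_T(\lambda)^{\sf H}(R_T\odot B_T)d_T(\lambda)$ with the Hadamard bound $\norme{R_T\odot B_T}\leq\norme{R_T}\norme{B_T}\leq CT\sqrt{\log T}$ (from \eqref{norme_B}) and \eqref{lipschitz}, obtaining $CT^2|\lambda-\lambda_i|\,\norme{\bs\Upsilon}_\infty\sqrt{\log T}$ directly. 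However, your key step rests on the inequality $|\tr M|\leq\sqrt{T}\,\norme{M}$, which is false: for $M=I_T$ it would assert $T\leq\sqrt{T}$. The correct statements are $|\tr M|\leq\rank(M)\,\norme{M}\leq T\norme{M}$, or $|\tr M|\leq\sqrt{T}\,\norme{M}_F$ with the Frobenius norm. Using $|\tr M|\leq T\norme{M}$, your argument gives $\chi_3\leq CT^{2-\beta}\sqrt{\log T}$, which is exactly the statement of the lemma, so the conclusion survives; but the claimed stronger bound $CT^{3/2-\beta}\sqrt{\log T}$ does not follow from what you wrote. To recover it you would need to bound the Frobenius norm of $Q_T(\lambda)-Q_T(\lambda_i)$ directly (using $\norme{B_T}_F=\sqrt{\tr B_T^2}\leq \sqrt{2}\,T\sqrt{\log T+C}$ together with \eqref{lipschitz_D}), and then apply $|\tr M|\leq\sqrt{T}\,\norme{M}_F$.

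A secondary remark: the aside claiming $\norme{R_T\odot B_T}\leq\sum_k|r_k|\,T/(T-|k|)=O(1)$ ``by absolute summability'' is also unjustified. Absolute summability does not control the terms with $|k|$ near $T-1$: for instance $r_k=j\,2^{-j}$ when $|k|=2^j$ and $r_k=0$ otherwise is absolutely summable, yet the single term $k=T-1$ contributes $T|r_{T-1}|\approx j\to\infty$ along $T=2^j+1$. A split at $|k|\leq T/2$ only yields $o(T)$ (which would in fact still suffice for the lemma, since $\beta>2$), not $O(1)$. Since you abandon that route, this does not affect your final argument, but the statement as written is not a consequence of Assumption~\ref{ass-rk}.
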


\begin{proof}
From Lemma~\ref{lemma_d_quad2}, $\norme{R_T \odot B_T} \leq \norme{R_T}\norme{B_T}$ (see \cite[Theorem 5.5.1]{HornJoh'91}), and \eqref{lipschitz}, we get:
\begin{equation*}
\left| \E \widehat \Upsilon_T^u(\lambda_i) - \E \widehat \Upsilon_T^u(\lambda)  \right| 
\leq 2 \norme{d_T(\lambda) - d_T(\lambda_i)} \norme{R_T} \norme{B_T}
\leq C T^2 \left| \lambda - \lambda_{i} \right| \| \bs\Upsilon\|_\infty \sqrt{\log T}.
\end{equation*}
\end{proof}
Lemmas \ref{chi1-u}--\ref{chi3-u} show that $\PP[\chi_2 > x]$ dominates the
term $\PP[\chi_1 > x]$ and that the term $\chi_3$ is vanishing. Mimicking the
end of the proof of Theorem \ref{th-biased}, we obtain Theorem 
\ref{th-unbiased}. 

We conclude this section by an empirical evaluation by Monte Carlo simulations of $\mathbb{P}[\Vert {\widehat R_T - R_T}\Vert > x]$ (curves labeled Biased and Unbiased), with $\widehat R_T\in\{\widehat R_T^b,\widehat R_T^u\}$, $T=2N$, $x=2$. This is shown in Figure~\ref{det} against the theoretical exponential bounds of Theorems~\ref{th-biased} and \ref{th-unbiased} (curves labeled Biased theory and Unbiased theory). We observe that the rates obtained in Theorems~\ref{th-biased} and \ref{th-unbiased} are asymptotically close to optimal.

\begin{figure}[H]
\center
	\begin{tikzpicture}[font=\footnotesize]
		\renewcommand{\axisdefaulttryminticks}{2} 
		\pgfplotsset{every axis/.append style={mark options=solid, mark size=2pt}}
		\tikzstyle{every major grid}+=[style=densely dashed]       
		\pgfplotsset{every axis legend/.append style={fill=white,cells={anchor=west},at={(0.99,0.02)},anchor=south east}}    
		\tikzstyle{every axis y label}+=[yshift=-10pt] 
		\tikzstyle{every axis x label}+=[yshift=5pt]
		\begin{axis}[
				grid=major,
				xlabel={$N$},
				ylabel={$T^{-1} \log \left( \mathbb P \left[ \norme {\widehat R_T - R_T} > x \right] \right)$},
			    ytick={-0.2,-0.15,-0.1,-0.05,0},
                yticklabels = {$-0.2$,$-0.15$,$-0.1$,$-0.05$,$0$},
				xmin=10,
				xmax=40, 
				ymin=-0.2, 
				ymax=0,
      width=0.7\columnwidth,
      height=0.5\columnwidth
			]

      \addplot[smooth,black,densely dashed,line width=0.5pt,mark=star] plot coordinates{
(10.000000,-0.052766)(12.000000,-0.051276)(14.000000,-0.050322)(16.000000,-0.049673)(18.000000,-0.049212)(20.000000,-0.048872)(22.000000,-0.048614)(24.000000,-0.048414)(26.000000,-0.048255)(28.000000,-0.048127)(30.000000,-0.048023)(32.000000,-0.047936)(34.000000,-0.047864)(36.000000,-0.047803)(38.000000,-0.047750)(40.000000,-0.047705)

      };
      \addplot[smooth,black,line width=0.5pt,mark=star] plot coordinates{
(10.000000,-0.178687)(12.000000,-0.154996)(14.000000,-0.137000)(16.000000,-0.122875)(18.000000,-0.112712)(20.000000,-0.1042589)(22.000000,-0.096377)(24.000000,-0.091108)(26.000000,-0.085960)(28.000000,-0.082679)(30.000000,-0.080274)(32.000000,-0.077053)(34.000000,-0.074633)(36.000000,-0.071135)(38.000000,-0.069311)(40.000000,-0.067182)

      };
      \addplot[smooth,black,densely dashed,line width=0.5pt,mark=o] plot coordinates{
(10.000000,-0.011823)(12.000000,-0.010787)(14.000000,-0.010070)(16.000000,-0.009540)(18.000000,-0.009129)(20.000000,-0.008799)(22.000000,-0.008526)(24.000000,-0.008295)(26.000000,-0.008097)(28.000000,-0.007924)(30.000000,-0.007771)(32.000000,-0.007635)(34.000000,-0.007512)(36.000000,-0.007401)(38.000000,-0.007300)(40.000000,-0.007206)

      };
      
      \addplot[smooth,black,line width=0.5pt,mark=o] plot coordinates{
(10.000000,-0.063852)(12.000000,-0.050732)(14.000000,-0.041205)(16.000000,-0.034622)(18.000000,-0.029335)(20.000000,-0.025196)(22.000000,-0.021541)(24.000000,-0.018643)(26.000000,-0.015748)(28.000000,-0.013602)(30.000000,-0.012069)(32.000000,-0.010916)(34.000000,-0.010038)(36.000000,-0.007765)(38.000000,-0.007908)(40.000000,-0.007151)

      };

      \legend{ {Biased theory},{Biased},{Unbiased theory},{Unbiased}}
      \end{axis}
	\end{tikzpicture}
\caption{Error probability of the spectral norm for $x=2$, $c=0.5$, $[R_T]_{k,l}=a^{|k-l|}$ with $a=0.6$.}
\label{det}
\end{figure}
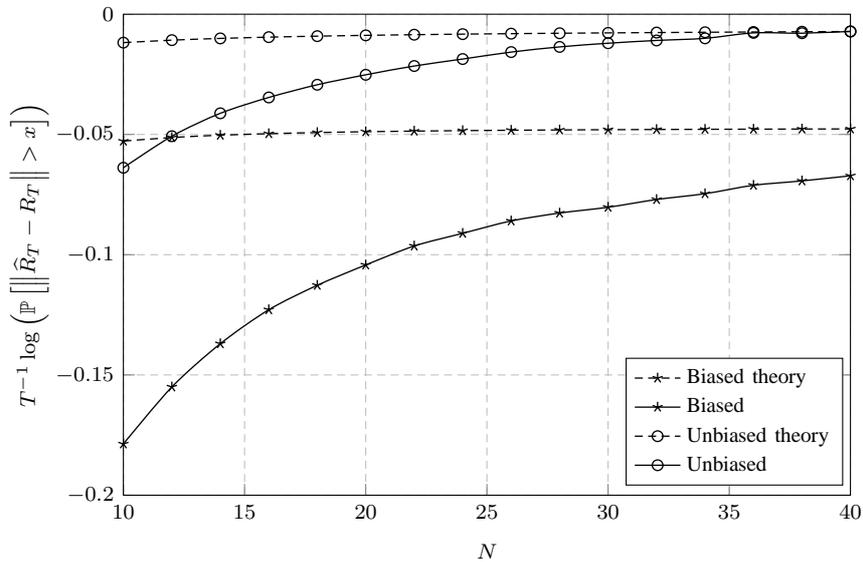

\section{Covariance matrix estimators for the \\ ``Signal plus Noise'' model} 
\label{sig+noise} 

\subsection{Model, assumptions, and results}
\label{subsec-model-perturbed} 

Consider now the following model:
\begin{equation}\label{model2}
	Y_T = [y_{n,t}]_{\substack{0\leq n\leq N-1 \\0 \leq t\leq T-1}} =P_T+V_T
\end{equation}
where the $N\times T$ matrix $V_T$ is defined in \eqref{model1} and where $P_T$ satisfies the 
following assumption: 
\begin{assumption} 
\label{ass-signal} 
$P_T \triangleq \bs h_T \bs s_T^{\sf H} \Gamma_T^{1/2}$ where $\bs h_T\in \C^N$ is a 
deterministic vector such that
$\sup_T \| \bs h_T \| < \infty$, the vector 
$\bs s_T = (s_0, \ldots, s_{T-1})^{\sf T} \in \C^T$ is a random vector
independent of $W_T$ with the distribution ${\cal CN}(0, I_T)$, and 
$\Gamma_T = [\gamma_{ij} ]_{i,j=0}^{T-1}$ is Hermitian nonnegative such that $\sup_T \| \Gamma_T \| < \infty$. 
\end{assumption}

We have here a model for a rank-one signal corrupted with a Gaussian 
spatially white and temporally correlated noise with stationary temporal
correlations. Observe that the signal can also be temporally correlated. 
Our purpose is still to estimate the noise correlation matrix
$R_T$. To that end, we use one of the estimators \eqref{est-b} or \eqref{est-u} 
with the difference that the samples $v_{n,t}$ are simply replaced with the 
samples $y_{n,t}$. It turns out that these estimators are still consistent in 
spectral norm. Intuitively, $P_T$ does not break the 
consistence of these estimators as it can be seen as a rank-one 
perturbation of the noise term $V_T$ in which the subspace spanned by 
$(\Gamma^{1/2})^{\sf H} \bs s_T$ is ``delocalized'' enough so as not to perturb 
much the estimators of $R_T$. In fact, we even have the following strong result.
\begin{theorem}
\label{th-perturb} 
Let $Y_T$ be defined as in \eqref{model2} and let Assumptions~\ref{ass-rk}--\ref{ass-signal} hold. Define the
estimates 
\begin{align*}
\hat{r}_{k,T}^{bp}&= 
\frac{1}{NT} \sum_{n=0}^{N-1}\sum_{t=0}^{T-1}
y_{n,t+k} y_{n,t}^* \mathbbm{1}_{0 \leq t+k \leq T-1} \\
\hat{r}_{k,T}^{up}&= 
\frac{1}{N(T-|k|)} \sum_{n=0}^{N-1}\sum_{t=0}^{T-1}
y_{n,t+k} y_{n,t}^* \mathbbm{1}_{0 \leq t+k \leq T-1}
\end{align*}
and let 
\begin{align*}
\widehat R_T^{bp} &= {\cal T}(\hat{r}_{-(T-1),T}^{bp},\ldots,\hat{r}_{(T-1),T}^{bp}) \\
\widehat R_T^{up} &= {\cal T}(\hat{r}_{-(T-1),T}^{up},\ldots,\hat{r}_{(T-1),T}^{up}).
\end{align*}
Then for any $x > 0$, 
\begin{equation*} 
\mathbb{P} \left[\norme{\widehat R_T^{bp} - R_T} > x \right]
\leq 
\exp \Bigl( -cT \Bigl( \frac{x}{\| \bs\Upsilon \|_\infty} - 
 \log \Bigl( 1 + \frac{x}{\| \bs\Upsilon \|_\infty} \Bigr) + o(1) \Bigr) 
\Bigr) 
\end{equation*} 
and 
\begin{equation*} 
\mathbb{P} \left[\norme{\widehat R_T^{up} - R_T} > x \right]
\leq 
\exp \Bigl(- \frac{cTx^2}{4\norme{\bs\Upsilon}_{\infty}^2\log T} (1 + o(1)) 
\Bigr).  
\end{equation*} 
\end{theorem}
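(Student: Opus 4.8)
The plan is to reduce the perturbed model $Y_T = P_T + V_T$ to the unperturbed one treated in Theorems~\ref{th-biased} and~\ref{th-unbiased} by showing that the contribution of $P_T$ to the estimators $\widehat R_T^{bp}$ and $\widehat R_T^{up}$ is negligible in spectral norm, with overwhelming probability. Concretely, write $\widehat R_T^{bp} - R_T = (\widehat R_T^{bp} - \widehat R_T^{b}) + (\widehat R_T^{b} - R_T)$, where $\widehat R_T^{b}$ is the estimator built from $V_T$ alone; the second term is handled by Theorem~\ref{th-biased}, so it suffices to prove that $\norme{\widehat R_T^{bp} - \widehat R_T^{b}} \to 0$ (and likewise in the unbiased case) fast enough that the associated failure probability is dominated by the exponential rates already obtained. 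First I would express the perturbation correction in the frequency domain exactly as in Section~\ref{biased}: using Lemma~\ref{lemma_d_quad} applied to $Y_T$ instead of $V_T$, one gets $\widehat\Upsilon_T^{bp}(\lambda) = d_T(\lambda)^{\sf H}\frac{Y_T^{\sf H}Y_T}{N}d_T(\lambda)$, so that the difference involves $Y_T^{\sf H}Y_T - V_T^{\sf H}V_T = V_T^{\sf H}P_T + P_T^{\sf H}V_T + P_T^{\sf H}P_T$, and similarly for the unbiased case via Lemma~\ref{lemma_d_quad2} with the Hadamard factor $B_T$.

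The core estimate is then a bound on $\sup_\lambda \bigl| d_T(\lambda)^{\sf H} M_T d_T(\lambda) \bigr|$ for $M_T \in \{ N^{-1}(V_T^{\sf H}P_T + P_T^{\sf H}V_T),\ N^{-1}P_T^{\sf H}P_T \}$ in the biased case, and for the Hadamard-weighted analogues $N^{-1} M_T \odot B_T$ in the unbiased case. Since $P_T = \bs h_T \bs s_T^{\sf H}\Gamma_T^{1/2}$ has rank one, all these matrices have rank at most two, and their spectral norms can be controlled by $\|\bs h_T\|$, $\|\Gamma_T\|$, $\|\bs s_T\|$ and $\|W_T\|$ up to constants — all of which concentrate: $\|\bs s_T\|^2$ is a $\chi^2$ with $T$ degrees of freedom (apply Lemma~\ref{chernoff}), and $\|W_T\|^2 \le \tr(W_T^{\sf H}W_T)$ concentrates likewise, so on an event of probability $1 - e^{-\Omega(T)}$ we have $\norme{P_T^{\sf H}P_T} \le CT$ and $\norme{V_T^{\sf H}P_T} \le CT$. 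The key reason the perturbation does not hurt is the delocalization built into $d_T(\lambda)$: because $\bs s_T^{\sf H}\Gamma_T^{1/2} d_T(\lambda)$ is a projection of a fixed vector onto the Fourier mode $d_T(\lambda)$ whose entries have modulus $1/\sqrt T$, one expects $\sup_\lambda | \bs s_T^{\sf H}\Gamma_T^{1/2} d_T(\lambda) |^2 = O(\log T)$ with high probability (a union bound over the $O(T^\beta)$-point discretization of Section~\ref{biased}, combined with the Lipschitz bound~\eqref{lipschitz}, turns the pointwise Gaussian tail into a uniform one). Hence $\sup_\lambda | d_T(\lambda)^{\sf H} N^{-1}P_T^{\sf H}P_T d_T(\lambda) | = N^{-1}\|\bs h_T\|^2 \sup_\lambda|\bs s_T^{\sf H}\Gamma_T^{1/2}d_T(\lambda)|^2 = O(T^{-1}\log T) \to 0$, and the cross term is handled similarly, picking up an extra factor $\|W_T\| \le \sqrt{\tr W_T^{\sf H}W_T} = O(\sqrt T)$ but still giving $O(T^{-1/2}\sqrt{\log T}\cdot \text{something})$ — one checks this is $o(1)$. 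The Hadamard factor $B_T$ in the unbiased case multiplies these bounds by $\norme{B_T} \le \sqrt 2 T(\sqrt{\log T} + C)$ from Lemma~\ref{lm-B-Q}, which is more delicate; there one should instead keep the $B_T$ inside and mimic Lemma~\ref{lemma_hadamard}/Lemma~\ref{lm-B-Q} to show the relevant quadratic forms are still $o(1)$ with the right probability, using that the rank-one structure lets one write the correction as $N^{-1}\tr(W_T \tilde Q_T(\lambda) (\cdot))$ for low-rank $\tilde Q_T$.

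The assembly is then routine: split $\mathbb{P}[\norme{\widehat R_T^{bp} - R_T} > x] \le \mathbb{P}[\norme{\widehat R_T^{bp} - \widehat R_T^b} > x\epsilon_T] + \mathbb{P}[\norme{\widehat R_T^b - R_T} > x(1-\epsilon_T)]$ with $\epsilon_T \to 0$ slowly (e.g. $\epsilon_T = 1/\log T$); the first probability is $\le e^{-\Omega(T)}$ by the concentration of $\|\bs s_T\|$ and $\|W_T\|$ above once $T$ is large enough that the (deterministic, on the good event) bound on $\norme{\widehat R_T^{bp} - \widehat R_T^b}$ falls below $x\epsilon_T$, and the second is exactly Theorem~\ref{th-biased} with $x$ replaced by $x(1-\epsilon_T) = x(1+o(1))$, which is absorbed into the $o(1)$ in the exponent; the unbiased case uses Theorem~\ref{th-unbiased} in the same way. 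The main obstacle I anticipate is the unbiased case: the matrix $B_T$ has norm of order $T\sqrt{\log T}$, so a crude bound $\norme{(\cdot)\odot B_T} \le \norme{\cdot}\,\norme{B_T}$ applied to an $O(T)$-norm rank-two perturbation only gives $O(T^2\sqrt{\log T})$, which after dividing by $N$ is $O(T\sqrt{\log T})$ — far from $o(1)$. Making the perturbation correction genuinely vanish there requires exploiting the rank-one structure and the Fourier delocalization \emph{before} invoking the norm of $B_T$, i.e. bounding $d_T(\lambda)^{\sf H}(M_T \odot B_T)d_T(\lambda)$ directly via Lemma~\ref{lemma_hadamard} as $\tr(D_T(\lambda)^{\sf H} M_T D_T(\lambda) B_T^{\sf T})$ and using that $M_T$ is rank two together with the row/column structure of $B_T$, paralleling the proof of Lemma~\ref{lm-B-Q} in Appendix~\ref{prf-lm-B-Q}. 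That is the one place where real work beyond bookkeeping is needed.
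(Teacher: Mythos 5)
Your decomposition coincides with the paper's: via Lemma~\ref{lemma_d_quad2} the perturbed statistic splits as $\widehat\Upsilon_T^{up}=\widehat\Upsilon_T^{u}+\widehat\Upsilon_T^{cross}+\widehat\Upsilon_T^{sig}$, the noise part is handled by Theorem~\ref{th-unbiased} (resp.\ Theorem~\ref{th-biased}), and the final splitting of $x$ into a main piece and vanishing pieces is the same bookkeeping. The genuine gap is in how you control the two perturbation terms. The theorem asserts a concentration inequality at rate $\exp\bigl(-cTx^2/(4\norme{\bs\Upsilon}_\infty^2\log T)\bigr)$, which is super-polynomially small, whereas your ``good event'' $\sup_\lambda|\bs s_T^{\sf H}\Gamma_T^{1/2}d_T(\lambda)|^2=O(\log T)$ only holds with probability $1-T^{-\mathrm{const}}$: the pointwise tail of this exponential random variable at level $K\log T$ is $T^{-K/\norme{\Gamma_T}}$, and the union bound over the $T^{\beta}$ grid points eats into that exponent, so the failure probability is polynomial, not $e^{-\Omega(T)}$ as you assert. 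Conditioning on such an event can therefore never yield a bound smaller than a power of $1/T$; it would suffice for almost-sure consistency but not for the stated inequality. (A smaller slip of the same kind: $\sqrt{\tr W_T^{\sf H}W_T}$ is of order $T$, not $\sqrt T$, since $\tr W_T^{\sf H}W_T\approx NT$; the quantity that is $O(\sqrt T)$ is $\norme{W_T^{\sf H}\bs h_T}$.) The paper avoids good events altogether: it writes $\widehat\Upsilon_T^{cross}(\lambda)=\frac2N\Re(\bs h_T^{\sf H}W_TG_T(\lambda)\bs s_T)$ and $\widehat\Upsilon_T^{sig}(\lambda)=\frac{\norme{\bs h_T}^2}{N}\bs s_T^{\sf H}G_T(\lambda)\bs s_T$ with $G_T(\lambda)$ built around $D_T(\lambda)B_TD_T(\lambda)^{\sf H}$, shows (mimicking Lemma~\ref{lm-B-Q}) that the singular values satisfy $\sum_t\omega_t^2=O(\log T)$ and $\max_t\omega_t=O(\sqrt{\log T})$ uniformly in $\lambda$, together with $|\tr G_T(\lambda)|=O(\sqrt{(\log T)/T})$ which kills the nonzero mean of the signal term, and then runs Chernoff/MGF bounds with $\tau\sim1/\sqrt{\log T}$ — Lemma~\ref{lm-prod-gauss} for the Gaussian product in the cross term — to obtain tails $\exp(-axT/\sqrt{\log T})$ (Lemmas~\ref{cross-term} and \ref{signal-term}), which are negligible against the noise term and are what makes the stated rate survive.

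Moreover, the part you explicitly defer (``the one place where real work beyond bookkeeping is needed'', i.e.\ the unbiased case where $\norme{B_T}\sim T\sqrt{\log T}$ forbids the crude Hadamard bound) is precisely where the substance of the paper's proof lies. You correctly diagnose that one must keep $B_T$ inside the trace via Lemma~\ref{lemma_hadamard} and exploit the rank-one structure before taking norms, but you do not carry out the required spectral estimates on $G_T(\lambda)$, the trace bound, or the tail bounds that replace your good-event argument; completing your plan along these lines essentially reproduces the paper's Lemmas~\ref{cross-term} and \ref{signal-term}. As written, the proposal establishes at best almost-sure consistency of $\widehat R_T^{bp}$ and $\widehat R_T^{up}$, not the concentration inequalities of Theorem~\ref{th-perturb}.
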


Before proving this theorem, some remarks are in order.
\begin{remark}
Theorem \ref{th-perturb} generalizes without difficulty to the case where $P_T$ has a fixed rank $K>1$. This captures the situation of $K\ll \min(N,T)$ sources.
\end{remark} 
\begin{remark}
	Similar to the proofs of Theorems \ref{th-biased} and \ref{th-unbiased}, the proof of Theorem~\ref{th-perturb} uses concentration inequalities for functionals of Gaussian random variables based on the moment generating function and the Chernoff bound. Exploiting instead McDiarmid's concentration inequality \cite{ledoux}, it is possible to adapt Theorem~\ref{th-perturb} to $\bs s_T$ with bounded (instead of Gaussian) entries. This adaptation may account for discrete sources met in digital communication signals. 
\end{remark} 

\subsection{Main elements of the proof of Theorem \ref{th-perturb}} 

We restrict the proof to the more technical part that concerns $\widehat R^{up}_T$. 
Defining
\begin{eqnarray*}
\widehat \Upsilon_T^{up}(\lambda) \triangleq \sum_{k=-(T-1)}^{T-1} \hat{r}_{k,T}^{up} e^{ik \lambda}
\end{eqnarray*}
and recalling that $\Upsilon_T(\lambda) = \sum_{k=-(T-1)}^{T-1} r_{k} e^{ik \lambda}$, 
we need to establish a concentration inequality on \linebreak
$\PP \left[ \sup_{\lambda\in[0,2\pi)} | \widehat \Upsilon_T^{up}(\lambda) - 
\Upsilon_T(\lambda) | > x \right]$. For any $\lambda\in[0,2\pi)$, the term 
$\widehat \Upsilon_T^{up}(\lambda)$ can be written as 
(see Lemma~\ref{lemma_d_quad2})  
\begin{align*}
\widehat \Upsilon_T^{up}(\lambda)&= 
d_T(\lambda)^{\sf H} \left(\frac{Y_T^{\sf H}Y_T}{N} \odot B_T \right)
d_T(\lambda)  \\
&= 
d_T(\lambda)^{\sf H} \left(\frac{V_T^{\sf H}V_T}{N} \odot B_T \right)
d_T(\lambda) \\
&+ 
d_T(\lambda)^{\sf H} \left(\frac{P_T^{\sf H}V_T+V_T^{\sf H}P_T}{N} \odot B_T 
\right) d_T(\lambda) \\
&+  d_T(\lambda)^{\sf H} \left(\frac{P_T^{\sf H}P_T}{N} \odot B_T \right)
d_T(\lambda)  \\
&\triangleq \widehat \Upsilon_T^{u}(\lambda) + 
\widehat \Upsilon_T^{cross}(\lambda) + \widehat \Upsilon_T^{sig}(\lambda) 
\end{align*}  
where $B_T$ is the matrix defined in the statement of 
Lemma~\ref{lemma_d_quad2}.
We know from the proof of Theorem~\ref{th-unbiased} that 
\begin{equation} 
\label{noise-term} 
\mathbb{P} \left[\sup_{\lambda\in[0,2\pi)} 
| \widehat \Upsilon_T^{u}(\lambda) - \Upsilon_T(\lambda) | > x \right]
\leq 
\exp \left(- \frac{cTx^2}{4\norme{\bs\Upsilon}_{\infty}^2\log T} (1 + o(1)) 
\right) .  
\end{equation} 
We then need only handle the terms $\widehat \Upsilon_T^{cross}(\lambda)$ and 
$\widehat \Upsilon_T^{sig}(\lambda)$. 

We start with a simple lemma. 
\begin{lemma}
\label{lm-prod-gauss}
Let $X$ and $Y$ be two independent ${\cal N}(0,1)$ random variables. Then for 
any $\tau \in(-1,1)$, 
\[
\E[\exp(\tau XY)] = (1 - \tau^2)^{-1/2}.
\]
\end{lemma}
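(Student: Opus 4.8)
The plan is to reduce this bivariate expectation to a one-dimensional Gaussian computation by conditioning. First I would condition on $Y$: since $X$ is independent of $Y$ and $X \sim {\cal N}(0,1)$, for fixed $Y=y$ the variable $\tau X y$ is ${\cal N}(0,\tau^2 y^2)$, so the scalar Gaussian moment generating function gives $\E[\exp(\tau X Y) \mid Y] = \exp(\tau^2 Y^2/2)$. Taking the expectation over $Y$, it then suffices to evaluate $\E[\exp(\tau^2 Y^2/2)]$.

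The second step uses the elementary identity that, for $Y\sim{\cal N}(0,1)$ and any real $a<1/2$, one has $\E[\exp(a Y^2)] = (1-2a)^{-1/2}$, obtained by completing the square in the Gaussian density, $\int_\R (2\pi)^{-1/2} e^{-(1-2a)y^2/2}\,dy = (1-2a)^{-1/2}$. Applying this with $a=\tau^2/2$, which satisfies $a<1/2$ precisely because $|\tau|<1$, yields $\E[\exp(\tau X Y)] = (1-\tau^2)^{-1/2}$, as claimed.

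Alternatively, one can integrate directly against the joint density $(2\pi)^{-1} e^{-(x^2+y^2)/2}$ over $\R^2$: the exponent $\tau x y - (x^2+y^2)/2$ equals $-\tfrac12 z^{\sf T} M z$ with $z=(x,y)^{\sf T}$ and $M = \begin{bmatrix} 1 & -\tau \\ -\tau & 1 \end{bmatrix}$, which is positive definite for $|\tau|<1$ with $\det M = 1-\tau^2$; the Gaussian integral formula $\int_{\R^2} e^{-\frac12 z^{\sf T} M z}\,dz = 2\pi (\det M)^{-1/2}$ then gives the result at once.

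There is no genuine obstacle here; the only place the hypothesis $\tau\in(-1,1)$ enters is the integrability condition ($a=\tau^2/2<1/2$, equivalently $M\succ 0$), which is automatic. This lemma is what will be needed to control the moment generating function of the cross term $\widehat \Upsilon_T^{cross}(\lambda)$, whose summands are products of two independent Gaussian factors.
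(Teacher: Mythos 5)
Your proposal is correct, and it is essentially the paper's own argument: the paper computes the joint integral $\frac{1}{2\pi}\int_{\R^2} e^{\tau xy - x^2/2 - y^2/2}\,dx\,dy$ by completing the square in $x$, which is exactly your step of conditioning on $Y$ and applying the scalar Gaussian moment generating function, followed by the identity $\E[e^{aY^2}]=(1-2a)^{-1/2}$ with $a=\tau^2/2$. The condition $\tau\in(-1,1)$ enters in both write-ups only through the integrability of the remaining $y$-integral, as you note.
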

\begin{proof}
\begin{align*}
\E[\exp(\tau XY)] &= \frac{1}{2\pi} \int_{\R^2} 
e^{\tau xy} e^{-x^2/2} e^{-y^2/2} \, dx\, dy \\
&= \frac{1}{2\pi} \int_{\R^2} 
e^{-(x-\tau y)^2/2} e^{-(1 - \tau^2)y^2/2} \, dx\, dy \\
&= (1 - \tau^2)^{-1/2} .
\end{align*}
\end{proof} 
With this result, we now have
\begin{lemma} 
\label{cross-term}
There exists a constant $a > 0$ such that 
\[
\PP\left[ \sup_{\lambda \in[0,2\pi)} | \widehat \Upsilon_T^{cross}(\lambda) | 
> x \right] \leq \exp\Bigl( - \frac{axT}{\sqrt{\log T}}(1 + o(1)) \Bigr) . 
\]
\end{lemma}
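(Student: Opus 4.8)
\emph{Proof plan for Lemma~\ref{cross-term}.} The plan is to mimic the architecture of the proof of Theorem~\ref{th-unbiased}: pass from the supremum over $\lambda$ to a maximum over a fine net by a Lipschitz estimate, and control the value at each net point by a Chernoff bound on a moment generating function. The only genuinely new feature is that $\widehat\Upsilon_T^{cross}(\lambda)$ is a \emph{bilinear} form in the two independent Gaussian ensembles $W_T$ and $\bs s_T$, which produces a sub-exponential (rather than sub-Gaussian) tail and, once the $\sqrt{\log T}$ inflation of the coupling matrix is accounted for, the rate $xT/\sqrt{\log T}$. First I would rewrite the cross term. Using Lemma~\ref{lemma_hadamard} together with $P_T^{\sf H}=(\Gamma_T^{1/2})^{\sf H}\bs s_T\bs h_T^{\sf H}$, $V_T=W_TR_T^{1/2}$, the real symmetry of $B_T$, and the cyclicity of the trace (the two summands of $P_T^{\sf H}V_T+V_T^{\sf H}P_T$ being complex conjugate of one another), one obtains
\[
\widehat\Upsilon_T^{cross}(\lambda)=\frac2N\,\Re\bigl(\bs h_T^{\sf H}W_T M_T(\lambda)\bs s_T\bigr),\qquad M_T(\lambda)\triangleq R_T^{1/2}D_T(\lambda)B_TD_T(\lambda)^{\sf H}(\Gamma_T^{1/2})^{\sf H}.
\]
From $\|D_T(\lambda)\|=T^{-1/2}$, $\sup_T\|R_T\|\le\|\bs\Upsilon\|_\infty$, $\sup_T\|\Gamma_T\|<\infty$ and \eqref{norme_B}, the coupling matrix satisfies $\|M_T(\lambda)\|\le C\sqrt{\log T}$; moreover $\|D_T(\lambda)B_TD_T(\lambda)^{\sf H}\|_F^2=T^{-2}\|B_T\|_F^2=\sum_{k=-(T-1)}^{T-1}(T-|k|)^{-1}=O(\log T)$, hence also $\|M_T(\lambda)\|_F^2\le C\log T$; and \eqref{lipschitz_D} gives the Lipschitz bound $\|M_T(\lambda)-M_T(\lambda')\|\le C\,T^{1-\beta}\sqrt{\log T}$ whenever $|\lambda-\lambda'|\le T^{-\beta}$.

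Next, I would discretise $[0,2\pi)$ exactly as in the proof of Theorem~\ref{th-unbiased} with the grid $\{\lambda_i\}_{i\in\mathcal I}$, $|\mathcal I|=\lfloor T^\beta\rfloor$, $\beta>2$, and split $\sup_\lambda|\widehat\Upsilon_T^{cross}(\lambda)|\le\chi_1+\chi_2$ with $\chi_1=\max_i\sup_{\lambda\in[\lambda_i,\lambda_{i+1}]}|\widehat\Upsilon_T^{cross}(\lambda)-\widehat\Upsilon_T^{cross}(\lambda_i)|$ and $\chi_2=\max_i|\widehat\Upsilon_T^{cross}(\lambda_i)|$. For $\chi_1$, from $|\widehat\Upsilon_T^{cross}(\lambda)-\widehat\Upsilon_T^{cross}(\lambda_i)|\le\frac2N\|W_T^{\sf H}\bs h_T\|\,\|M_T(\lambda)-M_T(\lambda_i)\|\,\|\bs s_T\|$ and the Lipschitz bound, on the event $\{\|W_T^{\sf H}\bs h_T\|^2\le 2T\|\bs h_T\|^2\}\cap\{\|\bs s_T\|^2\le 2T\}$ --- whose complement has probability $\le 2e^{-(1-\log 2)T}$ by Lemma~\ref{chernoff}, applied to the $T$ i.i.d.\ $\mathcal{CN}(0,\|\bs h_T\|^2)$ entries of $W_T^{\sf H}\bs h_T$ and to $\bs s_T$ --- one has $\chi_1=O(T^{1-\beta}\sqrt{\log T})=o(1/T)$ since $\beta>2$. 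Hence $\PP[\chi_1>x/T]\le 2e^{-(1-\log 2)T}$, negligible against the target bound.

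The crux is $\chi_2$, treated by the union bound $\PP[\chi_2>x(1-1/T)]\le\lfloor T^\beta\rfloor\max_i\PP[|\widehat\Upsilon_T^{cross}(\lambda_i)|>x(1-1/T)]$, which reduces the problem to a single $\lambda_i$. Taking a singular-value factorisation $M_T(\lambda_i)=U\diag(\sqrt{\mu_0},\ldots,\sqrt{\mu_{T-1}})V^{\sf H}$ and absorbing the unitaries $U,V$ into $W_T$ and $\bs s_T$ (which does not change their laws), one gets $\widehat\Upsilon_T^{cross}(\lambda_i)\stackrel{\mathcal{L}}{=}\frac2N\sum_{t=0}^{T-1}\sqrt{\mu_t}\,\Re(\xi_t\zeta_t)$ with $\xi_t\sim\mathcal{CN}(0,\|\bs h_T\|^2)$, $\zeta_t\sim\mathcal{CN}(0,1)$, all independent. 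Writing $\Re(\xi_t\zeta_t)$ as a difference of products of independent real $\mathcal N$ variables and invoking Lemma~\ref{lm-prod-gauss}, the moment generating function factorises as $\E\exp(\theta\widehat\Upsilon_T^{cross}(\lambda_i))=\prod_t(1-\theta^2\mu_t\|\bs h_T\|^2/N^2)^{-1}$, valid for $\theta^2<N^2/(\|\bs h_T\|^2\|M_T(\lambda_i)\|^2)$. A Markov bound with the choice $\theta=N/(\sqrt2\,\|\bs h_T\|\sqrt{C\log T})$ --- a fixed fraction of the boundary of this domain, using only the upper bound $\|M_T(\lambda_i)\|^2\le C\log T$ --- makes the correction term $\tfrac{2\theta^2\|\bs h_T\|^2}{N^2}\sum_t\mu_t=\|M_T(\lambda_i)\|_F^2/(C\log T)=O(1)$ while the main term is $\theta x=\Omega(Nx/\sqrt{\log T})$, so
\[
\PP\bigl[|\widehat\Upsilon_T^{cross}(\lambda_i)|>x\bigr]\le 2\exp\Bigl(-\frac{Nx}{\sqrt2\,\|\bs h_T\|\sqrt{C\log T}}\,(1-o(1))\Bigr)\le 2\exp\Bigl(-\frac{a'xT}{\sqrt{\log T}}\,(1+o(1))\Bigr)
\]
for some $a'>0$. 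As $\lfloor T^\beta\rfloor=e^{\beta\log T}=e^{o(xT/\sqrt{\log T})}$, the union-bound factor is absorbed into $1+o(1)$; combining with the $\chi_1$ estimate yields $\PP[\sup_\lambda|\widehat\Upsilon_T^{cross}(\lambda)|>x]\le\exp(-axT/\sqrt{\log T}\,(1+o(1)))$.

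The subtle point --- and the reason the exponent is of order $xT/\sqrt{\log T}$ rather than the $x^2T/\log T$ of the noise term \eqref{noise-term} --- is precisely the choice of the Chernoff parameter $\theta$. For a \emph{fixed} $x$ and $T\to\infty$ the unconstrained minimiser of the exponent is infeasible (it violates $\theta^2<N^2/(\|\bs h_T\|^2\|M_T(\lambda_i)\|^2)$), so one is forced to work at the boundary of the domain of the MGF, where the exponent is governed by the \emph{operator norm} $\|M_T(\lambda_i)\|=O(\sqrt{\log T})$ of the coupling matrix --- hence the factor $1/\sqrt{\log T}$. This makes the sharp bound $\|B_T\|=O(T\sqrt{\log T})$ of \eqref{norme_B} essential, together with the Frobenius bound $\|M_T(\lambda_i)\|_F^2=O(\log T)$ (which keeps the correction term $O(1)$); a cruder estimate bounding $\|M_T(\lambda_i)\bs s_T\|$ by $\|M_T(\lambda_i)\|\,\|\bs s_T\|=O(\sqrt{T\log T})$ would misidentify the relevant scale as $T\log T$ and only give a bound of the form $\exp(-cx^2T/\log T)$ with an uncontrolled constant $c$, which would not suffice to make $\widehat\Upsilon_T^{cross}$ negligible relative to the noise term.
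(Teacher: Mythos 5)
Your proposal is correct and follows essentially the same route as the paper: the same rewriting $\widehat\Upsilon_T^{cross}(\lambda)=\frac2N\Re(\bs h_T^{\sf H}W_TG_T(\lambda)\bs s_T)$, reduction by singular value decomposition, the product-of-Gaussians MGF of Lemma~\ref{lm-prod-gauss} with a Chernoff parameter of order $N/\sqrt{\log T}$ constrained by the operator-norm bound $O(\sqrt{\log T})$ and controlled via the trace bound $O(\log T)$, followed by the grid/Lipschitz argument. The only difference is cosmetic: you spell out the discretization step (which the paper only sketches) and phrase the spectral estimates through $\|M_T\|$ and $\|M_T\|_F$ rather than the singular values $\omega_t$.
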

\begin{proof}
We only sketch the proof of this lemma. We show that for any 
$\lambda \in [0, 2\pi]$, 
\[
\PP[ | \widehat \Upsilon_T^{cross}(\lambda) | 
> x ] \leq \exp\Bigl( - \frac{axT}{\sqrt{\log T}} + C \Bigr) 
\]
where $C$ does not depend on $\lambda \in [0,2\pi]$. The lemma is then proven by a discretization
argument of the interval $[0, 2\pi]$ analogous to what was done in the 
proofs of Section~\ref{unperturbed}.
We shall bound $\PP[ \widehat \Upsilon_T^{cross}(\lambda) > x ]$, the term
$\PP[ \widehat \Upsilon_T^{cross}(\lambda) < - x ]$ being bounded similarly. 
From Lemma~\ref{lemma_hadamard}, we get 
\begin{align*}
\widehat \Upsilon_T^{cross}(\lambda) &= 
\tr \Bigl( D_T(\lambda)^{\sf H} \frac{P^{\sf H}_T V_T + V_T^{\sf H} P_T}{N} 
D_T(\lambda) B_T \Bigr) \\
&= \tr \frac{D_T(\lambda)^{\sf H} (\Gamma_T^{1/2})^{\sf H} {\bs s}_T 
{\bs h}_T^{\sf H} W_T R_T^{1/2} D_T(\lambda) B_T}{N} \\
&\phantom{=} + 
\tr \frac{D_T(\lambda)^{\sf H} (R_T^{1/2})^{\sf H} W_T^{\sf H} {\bs h}_T 
{\bs s}_T^{\sf H} \Gamma_T^{1/2} D_T(\lambda) B_T}{N} \\
&= \frac 2N \Re ( \bs h_T^{\sf H} W_T G_T(\lambda) \bs s_T ) 
\end{align*} 
where $G_T(\lambda) = R_T^{1/2} D_T(\lambda) B_T D_T(\lambda)^{\sf H} 
(\Gamma_T^{1/2})^{\sf H}$. Let $G_T(\lambda) = U_T \Omega_T 
\widetilde U_T^{\sf H}$ be a singular value decomposition of $G_T(\lambda)$ where 
$\Omega = \diag(\omega_0, \ldots, \omega_{T-1})$. Observe that the vector 
$\bs x_T \triangleq W_T^{\sf H} \bs h_T = (x_0,\ldots, x_{T-1})^{\sf T}$ has 
the distribution ${\cal CN}(0, \| \bs h_T \|^2 I_T)$. We can then write 
\[
\widehat \Upsilon_T^{cross}(\lambda) \stackrel{{\cal L}}{=} 
\frac 2N \Re\left( \bs x_T^{\sf H} \Omega_T \bs s_T \right)
= \frac 2N \sum_{t=0}^{T-1} \omega_t ( \Re x_t \Re s_t + \Im x_t \Im s_t). 
\]
Notice that $\{ \Re x_t, \Im x_t, \Re s_t, \Im s_t \}_{t=0}^{T-1}$ are
independent with $\Re x_t, \Im x_t \sim {\cal N}(0, \| \bs h_T \|^2/2)$ and 
$\Re s_t, \Im s_t \sim {\cal N}(0, 1/2)$. Letting $0 < \tau < (\sup_T \| \bs h_T \|)^{-1}(\sup_{\lambda} \| G_T(\lambda) \|)^{-1}$ and using Markov's inequality and Lemma~\ref{lm-prod-gauss}, we get 
\begin{align*} 
\PP \left[ \widehat \Upsilon_T^{cross}(\lambda) > x \right] &= 
 \PP \left[ e^{N \tau \widehat \Upsilon_T^{cross}(\lambda)} > e^{N \tau x} \right] 
\leq e^{-N \tau x} \E \left[ e^{2\tau \sum_t 
\omega_t ( \Re x_t \Re s_t + \Im x_t \Im s_t)} \right] \\
&= e^{-N\tau x} \prod_{t=0}^{T-1} \left( 1 - \tau^2 \omega_t^2 \| \bs h_T \|^2 \right)^{-1} 
= \exp \left( -N \tau x - \sum_{t=0}^{T-1} \log( 1 - \tau^2 \omega_t^2 \| \bs h_T\|^2 ) \right). 
\end{align*}
Mimicking the proof of Lemma~\ref{lm-B-Q}, we can establish that 
$\sum_t \omega_t^2 = O(\log T)$ and $\max_t \omega_t = O(\sqrt{\log T})$ 
uniformly in $\lambda \in [0, 2\pi]$. Set $\tau = b / \sqrt{\log T}$ where
$b > 0$ is small enough so that 
$\sup_{T,\lambda} (\tau \| \bs h_T \| \, \| G_T(\lambda) \|) < 1$. 
Observing that $\log(1-x) = O(x)$ for $x$ small enough, we get 
\[
\PP[ \widehat \Upsilon_T^{cross}(\lambda) > x ] \leq 
\exp \bigl( -N bx/\sqrt{\log T} + {\cal E}(\lambda, T) \bigr)  
\]
where $| {\cal E}(\lambda, T) | \leq (C / \log T) \sum_t \omega_t^2 \leq C$. 
This establishes Lemma~\ref{cross-term}. 
\end{proof}

\begin{lemma} 
\label{signal-term}
There exists a constant $a > 0$ such that 
\[
\PP\left[ \sup_{\lambda \in[0,2\pi)} | \widehat \Upsilon_T^{sig}(\lambda) | 
> x \right] \leq \exp\Bigl( - \frac{axT}{\sqrt{\log T}}(1 + o(1)) \Bigr) . 
\]
\end{lemma}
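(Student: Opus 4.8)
The plan is to follow the template of Lemmas~\ref{chi2-u} and \ref{cross-term}: establish a concentration bound for $\widehat\Upsilon_T^{sig}(\lambda)$ that is uniform in $\lambda\in[0,2\pi]$, then pass to the supremum by discretising $[0,2\pi]$ into $\lfloor T^\beta\rfloor$ points ($\beta>2$) and crudely controlling the increments. The first step is to make $\widehat\Upsilon_T^{sig}(\lambda)$ tractable. Since $P_T^{\sf H}P_T=\|\bs h_T\|^2\,(\Gamma_T^{1/2})^{\sf H}\bs s_T\bs s_T^{\sf H}\Gamma_T^{1/2}$, Lemma~\ref{lemma_hadamard} (together with $B_T^{\sf T}=B_T$) and the cyclicity of the trace give
\[
\widehat\Upsilon_T^{sig}(\lambda)=\frac{\|\bs h_T\|^2}{N}\,\bs s_T^{\sf H}M_T(\lambda)\,\bs s_T,\qquad M_T(\lambda)\triangleq\Gamma_T^{1/2}D_T(\lambda)B_TD_T(\lambda)^{\sf H}(\Gamma_T^{1/2})^{\sf H},
\]
which is precisely the matrix $Q_T(\lambda)$ of the proof of Theorem~\ref{th-unbiased} with $R_T$ replaced by $\Gamma_T$. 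Since $\sup_T\|\Gamma_T\|<\infty$, I would re-run the argument of Lemma~\ref{lm-B-Q} to obtain, for the real eigenvalues $\mu_0,\dots,\mu_{T-1}$ of $M_T(\lambda)$, the estimates $\max_t|\mu_t|=O(\sqrt{\log T})$, $\sum_t\mu_t^2=O(\log T)$, and $\sum_t|\mu_t|^3=O((\log T)^{3/2})$, all uniform in $\lambda$.

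For the pointwise bound, diagonalise $M_T(\lambda)$ and use the unitary invariance of $\bs s_T\sim\mathcal{CN}(0,I_T)$ exactly as for \eqref{Ups_u}, so that $\widehat\Upsilon_T^{sig}(\lambda)\stackrel{\cal L}{=}\frac{\|\bs h_T\|^2}{N}\sum_t\mu_t|s_t|^2$ with the $|s_t|^2$ i.i.d.\ standard exponential. By the Chernoff bound, for every $\tau>0$ with $\tau\mu_t<1$ for all $t$,
\[
\PP\bigl[\widehat\Upsilon_T^{sig}(\lambda)>x\bigr]\le\exp\Bigl(-\frac{\tau Nx}{\|\bs h_T\|^2}-\sum_{t=0}^{T-1}\log(1-\tau\mu_t)\Bigr).
\]
I would take $\tau=b/\sqrt{\log T}$ with $b>0$ fixed and small enough that $\sup_{T,\lambda}\tau\max_t|\mu_t(\lambda)|\le\epsilon<1$, and expand $-\log(1-u)=u+\tfrac{u^2}{2}+R_3(u)$ with $|R_3(u)|\le|u|^3/(3(1-\epsilon)^3)$. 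Then $|\tau\sum_t\mu_t|\le\tau\sqrt T\,(\sum_t\mu_t^2)^{1/2}=O(\sqrt T)$, while $\tfrac{\tau^2}{2}\sum_t\mu_t^2=O(1)$ and $\tau^3\sum_t|\mu_t|^3=O(1)$, so the summation term is $O(\sqrt T)$; meanwhile $\tau Nx/\|\bs h_T\|^2=\tfrac{bc\,x}{\|\bs h_T\|^2}\cdot\tfrac{T}{\sqrt{\log T}}(1+o(1))$ since $N/T\to c$. Because $\sqrt T=o(T/\sqrt{\log T})$, this gives $\PP[\widehat\Upsilon_T^{sig}(\lambda)>x]\le\exp(-a\,\tfrac{xT}{\sqrt{\log T}}(1+o(1)))$ for a suitable constant $a>0$ (take $a=bc/\sup_T\|\bs h_T\|^2$, the case $\bs h_T\equiv 0$ being trivial), uniformly in $\lambda$; the bound on $\PP[\widehat\Upsilon_T^{sig}(\lambda)<-x]$ follows by the same computation with $\mu_t$ replaced by $-\mu_t$.

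Finally, to reach $\sup_\lambda$, write $\sup_\lambda|\widehat\Upsilon_T^{sig}(\lambda)|\le\max_i|\widehat\Upsilon_T^{sig}(\lambda_i)|+\max_i\sup_{\lambda\in[\lambda_i,\lambda_{i+1}]}|\widehat\Upsilon_T^{sig}(\lambda)-\widehat\Upsilon_T^{sig}(\lambda_i)|$. For the increments, $|\widehat\Upsilon_T^{sig}(\lambda)-\widehat\Upsilon_T^{sig}(\lambda_i)|\le\frac{\|\bs h_T\|^2}{N}\|M_T(\lambda)-M_T(\lambda_i)\|\,\|\bs s_T\|^2$ and, exactly as in the proof of Lemma~\ref{chi1-u} (using \eqref{lipschitz_D}, \eqref{norme_B}, $\|D_T(\lambda)\|=1/\sqrt T$ and $\sup_T\|\Gamma_T\|<\infty$), $\|M_T(\lambda)-M_T(\lambda_i)\|\le CT^{1-\beta}\sqrt{\log T}$; together with a Chernoff bound on $\|\bs s_T\|^2$ via Lemma~\ref{chernoff}, this increment term is $\exp(-\Theta(T^2))$-small and negligible. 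A union bound over $i\in{\cal I}$ costs only a factor $2\lfloor T^\beta\rfloor$, and $\log(2T^\beta)=O(\log T)=o(T/\sqrt{\log T})$ is absorbed into the $o(1)$ in the exponent, yielding the claim. I expect the only genuinely non-routine point to be the uniform-in-$\lambda$ spectral estimates on $M_T(\lambda)$, i.e.\ the transfer of Lemma~\ref{lm-B-Q} from $R_T$ to $\Gamma_T$; but since that lemma relies only on $\sup_T\|R_T\|<\infty$ and the structure of $B_T$, its proof applies verbatim, and the resulting bound decays faster than the noise term \eqref{noise-term}, so the perturbation does not degrade the overall rate.
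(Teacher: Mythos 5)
Your proof is correct and follows essentially the same route as the paper's: the same quadratic-form representation $\frac{\|\bs h_T\|^2}{N}\,\bs s_T^{\sf H}G_T(\lambda)\bs s_T$ via Lemma~\ref{lemma_hadamard}, the same transfer of the spectral estimates of Lemma~\ref{lm-B-Q} from $R_T$ to $\Gamma_T$, a Chernoff bound with $\tau\propto 1/\sqrt{\log T}$, and the same discretization of $[0,2\pi]$. The only (harmless) difference is that you control the linear term $\tau\sum_t\mu_t$ by Cauchy--Schwarz over the eigenvalues, giving $O(\sqrt T)=o(T/\sqrt{\log T})$, whereas the paper bounds $\tr G_T(\lambda)$ directly by an entrywise Cauchy--Schwarz computation; both suffice for the stated rate.
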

\begin{proof}
By Lemma~\ref{lemma_hadamard},
\begin{align*} 
\widehat \Upsilon_T^{sig}(\lambda) &= 
N^{-1} \tr ( D_T^{\sf H} P_T^{\sf H} P_T D_T B_T ) \\
&= \frac{\| \bs h_T \|^2}{N} \bs s_T^{\sf H} G_T(\lambda) \bs s_T 
\end{align*} 
where $G_T(\lambda) = \Gamma_T^{1/2} D_T(\lambda) B_T D_T(\lambda)^{\sf H} 
(\Gamma_T^{1/2})^{\sf H}$. By the spectral factorization 
$G_T(\lambda) = U_T \Sigma_T U_T^{\sf H}$ with 
$\Sigma_T = \diag(\sigma_0, \ldots, \sigma_{T-1})$, we get
\[
\widehat \Upsilon_T^{sig}(\lambda) \stackrel{{\cal L}}{=} 
\frac{\| \bs h_T \|^2}{N} \sum_{t=0}^{T-1} \sigma_t | s_t |^2 
\] 
and
\begin{align*} 
\PP[ \widehat \Upsilon_T^{sig}(\lambda) > x ] &\leq 
e^{-N\tau x} \E \Bigl[ e^{\tau \|\bs h_T\|^2 \sum_t \sigma_t | s_t|^2}\Bigr] \\
&= \exp\Bigl( -N \tau x - \sum_{t=0}^{T-1} 
\log(1 - \sigma_t \tau \|\bs h_T\|^2) \Bigr) 
\end{align*} 
for any $\tau \in (0, 1 / (\|\bs h_T\|^2 \sup_\lambda \| G_T(\lambda)\|))$. 
Let us show that 
\[
| \tr G_T(\lambda) | \leq C \sqrt{\frac{\log T + 1}{T}} . 
\] 
Indeed, we have 
\begin{align*}
| \tr G_T(\lambda) | &= N^{-1} | \tr D_T B_T D_T^{\sf H} \Gamma_T | = \frac 1N \left| \sum_{k,\ell=0}^{T-1} \frac{e^{-\imath (k-\ell)\lambda} 
\gamma_{\ell,k}}{T-|k-\ell|} \right| \\ 
&\leq \Bigl( \frac 1N \sum_{k,\ell=0}^{T-1} |\gamma_{k,\ell}|^2 \Bigr)^{1/2} 
\Bigl( \frac 1N \sum_{k,\ell=0}^{T-1} \frac{1}{(T-|k-\ell|)^2} \Bigr)^{1/2} \\
&= \Bigl(\frac{\tr \Gamma_T\Gamma_T^{\sf H}}{N} \Bigr)^{1/2} 
\Bigl(\frac 2N (\log T + C) \Bigr)^{1/2} \leq C \sqrt{\frac{\log T + 1}{T}} .
\end{align*} 
Moreover, similar to the proof of Lemma~\ref{lm-B-Q}, we can show that 
$\sum_t\sigma_t^2 = O(\log T)$ and $\max_t |\sigma_t| = O(\sqrt{\log T})$ 
uniformly in $\lambda$. Taking $\tau = b / \sqrt{\log T}$ for 
$b > 0$ small enough, and recalling that $\log(1-x) = 1 - x + O(x^2)$
for $x$ small enough, we get that 
\[
\PP[ \widehat \Upsilon_T^{sig}(\lambda) > x ] \leq 
\exp\Bigl( - \frac{N bx}{\sqrt{\log T}} + 
\frac{b \| \bs h_T \|^2}{\sqrt{\log T}} \tr G_T(\lambda) + 
{\cal E}(T,\lambda) \Bigr) 
\]
where $| {\cal E}(T,\lambda) | \leq (C / \log T) \sum_t \sigma_t^2 \leq C$. 
We therefore get 
\[
\PP[ \widehat \Upsilon_T^{sig}(\lambda) > x ] \leq 
\exp\Bigl( - \frac{N bx}{\sqrt{\log T}} + C \Bigr) 
\]
where $C$ is independent of $\lambda$. Lemma~\ref{signal-term} is then obtained
by the discretization argument of the interval $[0,2\pi]$.
\end{proof}

Gathering Inequality~\eqref{noise-term} with Lemmas~\ref{cross-term} and 
\ref{signal-term}, we get the second inequality of the statement of 
Theorem~\ref{th-perturb}.

\section{Application to source detection} \label{detect}

Consider a sensor network composed of $N$ sensors impinged by zero (hypothesis $H_0$) or one (hypothesis $H_1$) source signal. The stacked signal matrix $Y_T=[y_0,\ldots,y_{T-1}]\in\C^{N\times T}$ from time $t=0$ to $t=T-1$ is modeled as
\begin{eqnarray}\label{model_det}
Y_T = \left\{
    \begin{array}{ll}
        V_T & \mbox{, $H_0$} \\
        \bs h_T \bs s_T^{\sf H} + V_T& \mbox{, $H_1$}
    \end{array}
\right.
\end{eqnarray}
where $s_T^{\sf H}=[s_0^*,\ldots,s_{T-1}^*]$ are (hypothetical) independent $\mathcal{CN}(0,1)$ signals transmitted through the constant channel $\bs h_T \in \C^{N}$, and $V_T=W_T R_T^{1/2}\in\C^{N\times T}$ models a stationary noise matrix as in \eqref{model1}.

As opposed to standard procedures where preliminary pure noise data are available , we shall proceed here to an online signal detection test solely based on $Y_T$, by exploiting the consistence established in Theorem~\ref{th-perturb}.
The approach consists precisely in estimating $R_T$ by $\widehat{R}_T\in\{\widehat{R}_T^{bp},\widehat{R}_T^{up}\}$, which is then used as a whitening matrix for $Y_T$. The binary hypothesis \eqref{model_det} can then be equivalently written
\begin{eqnarray}\label{model_w}
Y_T \widehat{R}_T^{-1/2}= \left\{
    \begin{array}{ll}
        W_T {R}_T \widehat{R}_T^{-1/2}& \mbox{, $H_0$}  \\
        \bs h_T \bs s_T^{\sf H} \widehat{R}_T^{-1/2} + W_T {R}_T \widehat{R}_T^{-1/2}& \mbox{, $H_1$}.
    \end{array}
\right.
\end{eqnarray}
Since $\Vert R_T\widehat{R}_T^{-1}-I_T\Vert\to 0$ almost surely (by Theorem~\ref{th-perturb} as long as $\inf_{\lambda\in[0,2\pi)} {\bs \Upsilon}(\lambda)>0$), for $T$ large, the decision on the hypotheses \eqref{model_w} can be handled by the generalized likelihood ratio test (GLRT) \cite{BianDebMai'11} by approximating $W_T {R}_T \widehat{R}_T^{-1/2}$ as a purely white noise. We then have the following result.
\begin{theorem}\label{th_detection}
	Let $\widehat{R}_T$ be any of $\widehat{R}_T^{bp}$ or $\widehat{R}_T^{up}$ strictly defined in Theorem~\ref{th-perturb} for $Y_T$ now following model \eqref{model_det}. Further assume $\inf_{\lambda\in[0,2\pi)} {\bs \Upsilon}(\lambda)>0$ and define the test
\begin{equation}\label{glrt_est}
\alpha = \frac{N\norme{Y_T \widehat{R}_T^{-1}Y_T^{\sf H}}}{\tr \left( Y_T \widehat{R}_T^{-1} Y_T^{\sf H} \right)} ~ \overset{H_0}{\underset{H_1}{\lessgtr}} ~ \gamma
\end{equation}
where $\gamma\in\R^+$ satisfies $\gamma>(1+\sqrt c)^2$. Then, as $T \to \infty$,
\begin{align*}
	\mathbb{P} \left[ \alpha \geq \gamma \right] \to \left\{ \begin{array}{ll} 0 &,~H_0 \\ 1 &,~H_1. \end{array}\right.
\end{align*}
\end{theorem}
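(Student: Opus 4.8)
The plan is to use the consistency of the whitening matrix to replace $\widehat R_T^{-1}$ by $R_T^{-1}$, thereby reducing $\alpha$ to the top eigenvalue of a ``white'' sample covariance matrix under $H_0$ and of a finite-rank perturbation of one under $H_1$, and then to invoke classical spectral asymptotics (Bai--Yin) together with a variational lower bound.

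\emph{Whitening reduction.} Since $\inf_\lambda\bs\Upsilon(\lambda)>0$, Assumption~\ref{ass-rk} gives $R_T\succ0$ and $\sup_T\norme{R_T^{-1}}<\infty$. By Theorem~\ref{th-perturb} and Borel--Cantelli, $\norme{\widehat R_T-R_T}\to0$ almost surely, so for $T$ large $\widehat R_T$ is invertible with $\lambda_{\min}(\widehat R_T)$ bounded away from $0$, whence $\norme{\widehat R_T^{-1}-R_T^{-1}}\leq\norme{\widehat R_T^{-1}}\,\norme{\widehat R_T-R_T}\,\norme{R_T^{-1}}\to0$ a.s. Under either hypothesis $\tfrac1T\norme{Y_T}^2$ is a.s.\ bounded (the noise part contributes $\tfrac1T\norme{W_T}^2\norme{R_T}$, bounded by the Bai--Yin theorem, and under $H_1$ the rank-one term $\bs h_T\bs s_T^{\sf H}$ has norm $\norme{\bs h_T}\,\norme{\bs s_T}=O(\sqrt T)$). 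Writing $M_T\triangleq\tfrac1T Y_TR_T^{-1}Y_T^{\sf H}$, it follows that $\norme{\tfrac1T Y_T\widehat R_T^{-1}Y_T^{\sf H}-M_T}\leq\tfrac1T\norme{Y_T}^2\,\norme{\widehat R_T^{-1}-R_T^{-1}}\to0$ a.s., and similarly $\tfrac1T\tr(Y_T\widehat R_T^{-1}Y_T^{\sf H})=\tr M_T+o(N)$ a.s.\ (bounding the difference by $\norme{\widehat R_T^{-1}-R_T^{-1}}\,\tfrac1T\tr(Y_T^{\sf H}Y_T)$ with $\tfrac1T\tr(Y_T^{\sf H}Y_T)=O(N)$ a.s.). Since $\tr M_T=\tfrac1T\norme{Y_TR_T^{-1/2}}_F^2=\tfrac1T\sum_{n,t}|w_{n,t}|^2+o(N)=N(1+o(1))$ a.s.\ by the strong law of large numbers (the $H_1$ signal contributing only $O(1)=o(N)$), we get
\[
\alpha=\frac{N\norme{\tfrac1T Y_T\widehat R_T^{-1}Y_T^{\sf H}}}{\tfrac1T\tr(Y_T\widehat R_T^{-1}Y_T^{\sf H})}=\norme{M_T}+o(1)\quad\text{a.s.}
\]
so it suffices to locate $\norme{M_T}$ under each hypothesis.

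\emph{Under $H_0$ and $H_1$.} Under $H_0$, $Y_T=W_TR_T^{1/2}$ and $R_T^{1/2}R_T^{-1/2}$ is unitary, so $M_T$ has the law of $\tfrac1T W_TW_T^{\sf H}$; the Bai--Yin theorem gives $\norme{M_T}\to(1+\sqrt c)^2$ a.s., hence $\alpha\to(1+\sqrt c)^2$ a.s., and since $\gamma>(1+\sqrt c)^2$ we obtain $\PP[\alpha\geq\gamma]\to0$. Under $H_1$, setting $\bs t_T\triangleq R_T^{-1/2}\bs s_T\sim{\cal CN}(0,R_T^{-1})$ (independent of $W_T$), $M_T$ has the law of $\tfrac1T(\bs h_T\bs t_T^{\sf H}+W_T)(\bs h_T\bs t_T^{\sf H}+W_T)^{\sf H}$, a finite-rank perturbation of $\tfrac1T W_TW_T^{\sf H}$. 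I would lower-bound $\norme{M_T}$ by the Rayleigh quotient at the unit vector $\bs u_T=\bs h_T/\norme{\bs h_T}$,
\[
\norme{M_T}\geq\bs u_T^{\sf H}M_T\bs u_T=\frac{\norme{\bs h_T}^2}{T}\bs s_T^{\sf H}R_T^{-1}\bs s_T+\frac1T\norme{\bs u_T^{\sf H}W_T}^2+\frac2T\Re\bigl(\norme{\bs h_T}\,\bs t_T^{\sf H}W_T^{\sf H}\bs u_T\bigr),
\]
and, using Gaussian concentration as in Lemma~\ref{chernoff} together with the Szeg\H{o}/Toeplitz trace identity $\tfrac1T\tr R_T^{-1}\to\rho\triangleq\tfrac1{2\pi}\int_0^{2\pi}\bs\Upsilon(\lambda)^{-1}d\lambda\in(0,\infty)$, one gets $\tfrac1T\bs s_T^{\sf H}R_T^{-1}\bs s_T\to\rho$ a.s., $\tfrac1T\norme{\bs u_T^{\sf H}W_T}^2\to1$ a.s.\ (it is $\tfrac1T\sum_t|g_t|^2$ with $g_t$ i.i.d.\ ${\cal CN}(0,1)$), and a cross term of conditional variance $O(T^{-1})$ that vanishes a.s. Hence $\liminf_T\norme{M_T}\geq1+\rho\liminf_T\norme{\bs h_T}^2$, so $\liminf_T\alpha\geq1+\rho\liminf_T\norme{\bs h_T}^2$ a.s.; provided the present source is above the detectability level, i.e.\ $\rho\liminf_T\norme{\bs h_T}^2>\gamma-1$, this gives $\PP[\alpha\geq\gamma]\to1$.

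\emph{Main obstacle.} The $H_0$ half and the whitening reduction are routine once Theorem~\ref{th-perturb}, Bai--Yin, and the strong law are available. The substantive work is under $H_1$: one must produce a lower bound on $\lambda_{\max}(M_T)$ that provably clears $\gamma$, which forces the auxiliary facts $\tfrac1T\norme{\bs t_T}^2\to\rho$ (Gaussian concentration plus the Toeplitz trace identity) and the vanishing of all cross terms; the crude Rayleigh-quotient bound above suffices here, although a sharper BBP-type analysis of the spiked matrix $M_T$ would identify the exact limit $(1+\theta)(1+c/\theta)$, with $\theta=\rho\lim_T\norme{\bs h_T}^2$, of $\lambda_{\max}(M_T)$ when the source is detectable.
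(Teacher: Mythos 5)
The paper does not actually spell out a self-contained proof of Theorem~\ref{th_detection}: its argument is precisely your first step (Theorem~\ref{th-perturb} together with $\inf_\lambda {\bs\Upsilon}(\lambda)>0$ gives $\norme{R_T\widehat R_T^{-1}-I_T}\to 0$ a.s., so the whitened data can be treated as the standard white-noise model), after which it simply invokes the GLRT asymptotics of \cite{BianDebMai'11} -- the Mar\v{c}enko--Pastur right edge $(1+\sqrt c)^2$ under $H_0$ and the spiked (BBP-type) largest-eigenvalue limit under $H_1$. Your whitening reduction and your $H_0$ half therefore coincide in substance with the paper's route and are correct; one small simplification: with the paper's convention $(R_T^{1/2})^{\sf H}R_T^{1/2}=R_T$, under $H_0$ one has $Y_TR_T^{-1}Y_T^{\sf H}=W_TW_T^{\sf H}$ exactly, so no unitarity/equality-in-law argument is needed before applying Bai--Yin.

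The $H_1$ half is where you genuinely diverge, and where your proof does not reach the literal statement. Your Rayleigh-quotient bound yields $\liminf_T\alpha\geq 1+\rho\liminf_T\norme{\bs h_T}^2$ a.s., hence $\PP[\alpha\geq\gamma]\to1$ only under the extra hypothesis $\rho\liminf_T\norme{\bs h_T}^2>\gamma-1$, which is not among the theorem's assumptions. In fairness, some detectability condition is unavoidable: Assumption~\ref{ass-signal} allows $\norme{\bs h_T}\to0$, in which case $\alpha\to(1+\sqrt c)^2<\gamma$ and the claimed convergence to $1$ fails, so the theorem implicitly assumes the source is above the threshold associated with $\gamma$, a caveat inherited from the cited GLRT analysis. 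But your sufficient condition is strictly more demanding than the one the paper implicitly relies on: the spiked-model limit is $(1+\theta)(1+c/\theta)$ with $\theta=\rho\lim_T\norme{\bs h_T}^2$, so the relevant requirement is $(1+\theta)(1+c/\theta)>\gamma$, which is weaker than your $1+\theta>\gamma$ (for instance $\theta$ slightly above $\sqrt c$ with $\gamma$ just above $(1+\sqrt c)^2$ is covered by the paper's route but not by your bound, since the bound discards the $c+c/\theta$ contribution coming from the noise in the spiked direction). So as written your argument proves a weaker theorem; to recover the intended statement you would need the BBP analysis of $\lambda_{\max}(M_T)$ that you mention, or simply to quote it from \cite{BianDebMai'11} as the paper does. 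The auxiliary facts you use -- $\frac1T\tr R_T^{-1}\to\rho$ by Szeg\H{o}'s theorem under $\inf_\lambda{\bs\Upsilon}(\lambda)>0$, a.s.\ concentration of the Gaussian quadratic forms, and the vanishing cross terms -- are all sound.
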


Recall from \cite{BianDebMai'11} that the decision threshold $(1+\sqrt{c})^2$ corresponds to the almost sure limiting largest eigenvalue of $\frac1T W_T W_T^{\sf H}$, that is the right-edge of the support of the Mar\v{c}enko--Pastur law. 

Simulations are performed hereafter to assess the performance of the test \eqref{glrt_est} under several system settings. We take here ${\bs h}_T$ to be the following steering vector ${\bs h}_T=\sqrt{p/T}[1, \ldots , e^{2i\pi \theta (T-1)}]$ with $\theta = 10^\circ$ and $p$ a power parameter. The matrix $R_T$ models an autoregressive process of order 1 with parameter $a$, {\it i.e.} $[R_T]_{k,l}=a^{|k-l|}$. 

In Figure~\ref{det1}, the detection error $1-\mathbb{P} [ \alpha \geq \gamma|H_1]$ of the test \eqref{glrt_est} for a false alarm rate (FAR) $\mathbb{P} [ \alpha \geq \gamma|H_0 ]=0.05$ under $\widehat{R}_T=\widehat{R}_T^{up}$ (Unbiased) or $\widehat{R}_T=\widehat{R}_T^{bp}$ (Biased) is compared against the estimator that assumes $R_T$ perfectly known (Oracle), {\it i.e.} that sets $\widehat{R}_T=R_T$ in \eqref{glrt_est}, and against the GLRT test that wrongly assumes temporally white noise (White), {\it i.e.} that sets $\widehat{R}_T=I_T$ in \eqref{glrt_est}. The source signal power is set to $p=1$, that is a signal-to-noise ratio (SNR) of $0$ dB, $N$ is varied from $10$ to $50$ and $T=N/c$ for $c=0.5$ fixed. In the same setting as Figure~\ref{det1}, the number of sensors is now fixed to $N=20$, $T=N/c=40$ and the SNR (hence $p$) is varied from $-10$~dB to $4$~dB. The powers of the various tests are displayed in Figure~\ref{det2} and compared to the detection methods which estimate $R_T$ from a pure noise sequence called Biased PN (pure noise) and Unbiased PN. The results of the proposed online method are close to that of Biase/Unbiased PN, this last presenting the disadvantage to have at its disposal a pure noise sequence at the receiver.  

Both figures suggest a close match in performance between Oracle and Biased, while Unbiased shows weaker performance. The gap evidenced between Biased and Unbiased confirms the theoretical conclusions. 

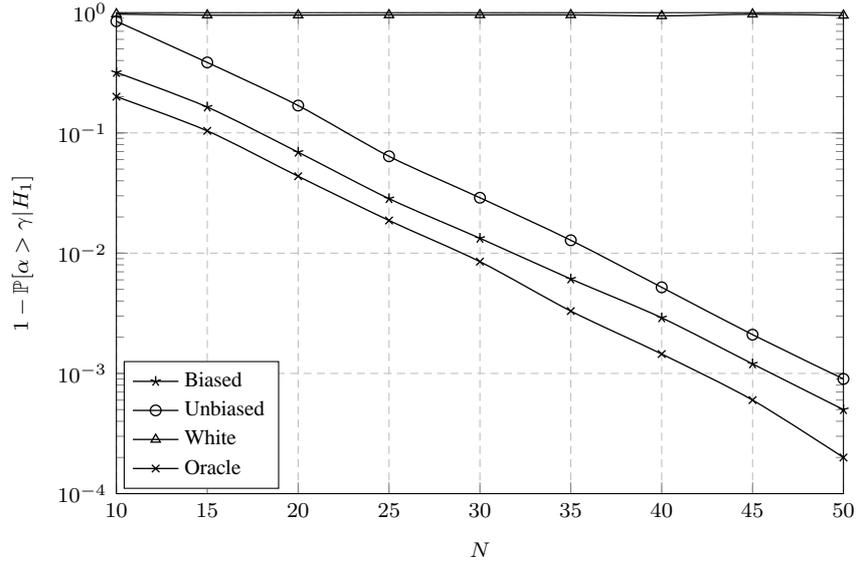
\begin{figure}[H]
\center
  \begin{tikzpicture}[font=\footnotesize]
    \renewcommand{\axisdefaulttryminticks}{2} 
    \pgfplotsset{every axis/.append style={mark options=solid, mark size=2pt}}
    \tikzstyle{every major grid}+=[style=densely dashed]       
\pgfplotsset{every axis legend/.append style={fill=white,cells={anchor=west},at={(0.01,0.01)},anchor=south west}}    
    \tikzstyle{every axis y label}+=[yshift=-10pt] 
    \tikzstyle{every axis x label}+=[yshift=5pt]
    \begin{semilogyaxis}[
      grid=major,
      xlabel={$N$},
      ylabel={$1-\mathbb{P}[\alpha>\gamma|H_1]$},
      xmin=10,
      xmax=50, 
      ymin=1e-4, 
      ymax=1,
      width=0.7\columnwidth,
      height=0.5\columnwidth
      ]
      \addplot[smooth,black,line width=0.5pt,mark=star] plot coordinates{
(10.000000,0.318300)(15.000000,0.164000)(20.000000,0.069100)(25.000000,0.028400)(30.000000,0.013300)(35.000000,0.006100)(40.000000,0.002900)(45.000000,0.001200)(50.000000,0.0005000)

      };
      
      \addplot[smooth,black,line width=0.5pt,mark=o] plot coordinates{
(10.000000,0.848100)(15.000000,0.385600)(20.000000,0.168900)(25.000000,0.063900)(30.000000,0.028900)(35.000000,0.012800)(40.000000,0.005200)(45.000000,0.002100)(50.000000,0.000900)

      };
      
      \addplot[smooth,black,line width=0.5pt,mark=triangle] plot coordinates{
(10.000000,0.973000)(15.000000,0.953000)(20.000000,0.954000)(25.000000,0.957000)(30.000000,0.958000)(35.000000,0.958000)(40.000000,0.940000)(45.000000,0.970000)(50.000000,0.949000)

      };
      
      \addplot[smooth,black,line width=0.5pt,mark=x] plot coordinates{
(10.000000,0.200200)(15.000000,0.104000)(20.000000,0.043600)(25.000000,0.018700)(30.000000,0.008500)(35.000000,0.003300)(40.000000,0.0014500)(45.000000,0.000600)(50.000000,0.000200)

      };

	\legend{{Biased},{Unbiased},{White},{Oracle}}
       \end{semilogyaxis}
       \end{tikzpicture}
       \caption{Detection error versus $N$ with FAR$=0.05$, $p=1$, SNR$=0$ dB, $c=0.5$, and $a=0.6$.}
\label{det1}
\end{figure}

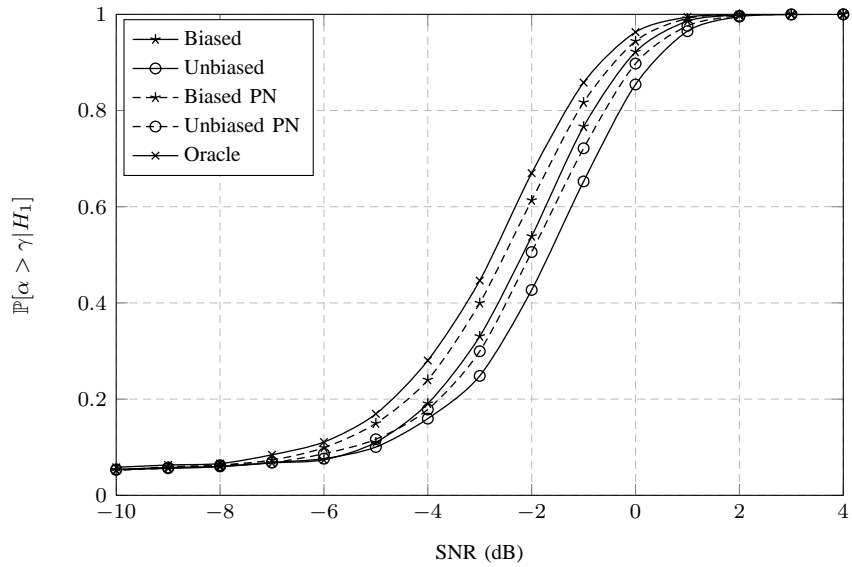
\begin{figure}[H]
\center
  \begin{tikzpicture}[font=\footnotesize]
    \renewcommand{\axisdefaulttryminticks}{2} 
    \pgfplotsset{every axis/.append style={mark options=solid, mark size=2pt}}
    \tikzstyle{every major grid}+=[style=densely dashed]       
\pgfplotsset{every axis legend/.append style={fill=white,cells={anchor=west},at={(0.01,0.99)},anchor=north west}}    
    \tikzstyle{every axis y label}+=[yshift=-10pt] 
    \tikzstyle{every axis x label}+=[yshift=5pt]
    \begin{axis}[
      grid=major,
      xlabel={SNR (dB)},
      ylabel={$\mathbb{P}[\alpha>\gamma|H_1]$},
      xmin=-10,
      xmax=4, 
      ymin=0, 
      ymax=1,
      width=0.7\columnwidth,
      height=0.5\columnwidth
      ]
      
            \addplot[smooth,black,line width=0.5pt,mark=star] plot coordinates{
(-15.000000,0.051200)(-14.000000,0.048100)(-13.000000,0.049100)(-12.000000,0.053100)(-11.000000,0.049800)(-10.000000,0.053200)(-9.000000,0.056100)(-8.000000,0.0593000)(-7.000000,0.067500)(-6.000000,0.073900)(-5.000000,0.110200)(-4.000000,0.190800)(-3.000000,0.330600)(-2.000000,0.538500)(-1.000000,0.766600)(0.000000,0.922000)(1.000000,0.985800)(2.000000,0.998600)(3.000000,1.000000)(4.000000,1.000000)(5.000000,1.000000)

      };
      
      \addplot[smooth,black,line width=0.5pt,mark=o] plot coordinates{
(-15.000000,0.053900)(-14.000000,0.049300)(-13.000000,0.049500)(-12.000000,0.055100)(-11.000000,0.050800)(-10.000000,0.054000)(-9.000000,0.057800)(-8.000000,0.060100)(-7.000000,0.068400)(-6.000000,0.076300)(-5.000000,0.100400)(-4.000000,0.159500)(-3.000000,0.248400)(-2.000000,0.427000)(-1.000000,0.652600)(0.000000,0.854200)(1.000000,0.964900)(2.000000,0.995100)(3.000000,0.999500)(4.000000,1.000000)(5.000000,1.000000)

      };
      
      \addplot[smooth,black,densely dashed,line width=0.5pt,mark=star] plot coordinates{
(-15.000000,0.047700)(-14.000000,0.047400)(-13.000000,0.044600)(-12.000000,0.049700)(-11.000000,0.049700)(-10.000000,0.051900)(-9.000000,0.059400)(-8.000000,0.063100)(-7.000000,0.073300)(-6.000000,0.099100)(-5.000000,0.149500)(-4.000000,0.240100)(-3.000000,0.399900)(-2.000000,0.613300)(-1.000000,0.816900)(0.000000,0.944300)(1.000000,0.991500)(2.000000,0.999200)(3.000000,1.000000)(4.000000,1.000000)(5.000000,1.000000)

      };
      
      \addplot[smooth,black,densely dashed,line width=0.5pt,mark=o] plot coordinates{
(-15.000000,0.048300)(-14.000000,0.051100)(-13.000000,0.047100)(-12.000000,0.050300)(-11.000000,0.051800)(-10.000000,0.052900)(-9.000000,0.056300)(-8.000000,0.062600)(-7.000000,0.068100)(-6.000000,0.086500)(-5.000000,0.116600)(-4.000000,0.178600)(-3.000000,0.299600)(-2.000000,0.505700)(-1.000000,0.721500)(0.000000,0.897600)(1.000000,0.976000)(2.000000,0.996900)(3.000000,0.999900)(4.000000,0.999900)(5.000000,1.000000)

      };

      \addplot[smooth,black,line width=0.5pt,mark=x] plot coordinates{
(-15.000000,0.048800)(-14.000000,0.050800)(-13.000000,0.048300)(-12.000000,0.048900)(-11.000000,0.055600)(-10.000000,0.058300)(-9.000000,0.063100)(-8.000000,0.066000)(-7.000000,0.084300)(-6.000000,0.110800)(-5.000000,0.169400)(-4.000000,0.280300)(-3.000000,0.446500)(-2.000000,0.669800)(-1.000000,0.858100)(0.000000,0.963000)(1.000000,0.994300)(2.000000,0.999400)(3.000000,1.000000)(4.000000,1.000000)(5.000000,1.000000)

      };

	\legend{{Biased},{Unbiased},{Biased PN},{Unbiased PN},{Oracle}}
       \end{axis}
       \end{tikzpicture}
       \caption{Power of detection tests versus SNR (dB) with FAR$=0.05$, $N=20$, $c=0.5$, and $a=0.6$.}
\label{det2}
\end{figure}

\begin{appendix}
\subsection{Proofs for Theorem~\ref{th-biased}} 
\subsubsection{Proof of Lemma \ref{lemma_d_quad}} 
\label{anx-lm-qf} 

Developing the quadratic forms given in the statement of the lemma, we get 
\begin{align*} 
d_T(\lambda)^{\sf H}\frac{V_T^{\sf H}V_T}{N}d_T(\lambda) &= 
\frac{1}{NT}\sum_{l,l'=0}^{T-1} e^{-\imath(l'-l)\lambda} [V_T^{\sf H}V_T]_{l,l'} \\
&= \frac{1}{NT}\sum_{l,l'=0}^{T-1} e^{-\imath(l'-l)\lambda} 
\sum_{n=0}^{N-1} v^*_{n,l} v_{n,l'} \\ 
&= \sum_{k=-(T-1)}^{T-1} e^{-\imath k \lambda} 
\frac{1}{NT} \sum_{n=0}^{N-1} \sum_{t=0}^{T-1} v_{n,t}^* v_{n,t+k} 
\mathbbm{1}_{0 \leq t+k \leq T-1}\\ 
&= \sum_{k=-(T-1)}^{T-1} \hat{r}_k^b e^{-\imath k \lambda}=
\widehat\Upsilon_T^b(\lambda), 
\end{align*} 
and 
\begin{align*} 
\E \left[ d_T(\lambda)^{\sf H} \frac{V_T^{\sf H}V_T}{N}d_T(\lambda) \right] 
&= d_T(\lambda)^{\sf H} (R_T^{1/2})^{\sf H} 
\frac{\E[ W_T^{\sf H} W_T]}{N} R_T^{1/2} d_T(\lambda) \\ 
&= d_T(\lambda)^{\sf H} R_T d_T(\lambda) .
\end{align*} 

\subsection{Proofs for Theorem~\ref{th-unbiased}}
\subsubsection{Proof of Lemma \ref{lemma_d_quad2}} 
\label{anx-lm-qf2} 
We have 
\begin{align*}
d_T(\lambda)^{\sf H} \left( \frac{V_T^{\sf H}V_T}{N} \odot B_T \right) d_T(\lambda) &=
\frac{1}{NT}\sum_{l,l'=-(T-1)}^{T-1}e^{i(l-l') \lambda} [V_T^{\sf H}V_T]_{l,l'}\frac{T}{T-|l-l'|} \nonumber \\
&= \sum_{k=-(T-1)}^{T-1}e^{ik\lambda}\frac{1}{N(T-|k|)}\sum_{n=0}^{N-1}\sum_{t=0}^{T-1}v_{n,t}^*v_{n,t+k}\mathbbm{1}_{0 \leq t+k \leq T-1} \\
&=\sum_{k=-(T-1)}^{T-1} \hat{r}_k^u e^{ik \lambda} =\widehat \Upsilon_T^u(\lambda).
\end{align*}

\subsubsection{Proof of Lemma \ref{lm-B-Q}} 
\label{prf-lm-B-Q} 
We start by observing that 
\begin{align*}
\tr B_T^2 &= \sum_{i,j=0}^{T-1} \left[ B_T \right]_{i,j}^2 
= \sum_{i,j=0}^{T-1} \left( \frac{T}{T-|i-j|} \right)^2 
= 2\sum_{i>j}^{T-1} \left( \frac{T}{T-|i-j|} \right)^2 + T \\
&= 2 \sum_{k=1}^{T-1} \left( \frac{T}{T-k} \right)^2 \left( T - k \right) + T 
= 2 T^2 \sum_{k=1}^{T-1} \frac{1}{T-k} + T 
= 2 T^2 \left(\log T + C \right).
\end{align*}
Inequality \eqref{norme_B} is then obtained upon noticing that 
$\norme{B_T} \leq \sqrt{\tr B_T^2}$. 

We now show (\ref{sum_sigma2}). 
Using twice the inequality $\tr (FG) \leq \norme{F} \tr(G)$ when 
$F,G \in \C^{m \times m}$ and $G$ is nonnegative definite \cite{HornJoh'91}, 
we get
\begin{align*}
\sum_{t=0}^{T-1} \sigma_t^2(\lambda_i) &= \tr Q_T(\lambda_i)^2  
=  \tr  R_T D_T(\lambda_i)B_T D_T(\lambda_i)^{\sf H} R_T D_T(\lambda_i) 
B_T D_T(\lambda_i)^{\sf H} \\
&\leq \norme{R_T} \tr R_T (D_T(\lambda_i) B_T D_T(\lambda_i)^{\sf H})^2 \\
&\leq T^{-2} \norme{R_T}^2 \tr (B_T^2) \leq 2\norme{\bs\Upsilon}_{\infty}^2 \log T + C. 
\end{align*}

Inequality \eqref{sig_max} is immediate since $\norme{Q_T}^2 \leq \tr Q_T^2$.

As regards \eqref{sum_sigma3}, by the Cauchy--Schwarz inequality,
\begin{align*}
\sum_{t=0}^{T-1} |\sigma_t^3(\lambda_i)| &= 
\sum_{t=0}^{T-1} \sigma_t^2(\lambda_i) |\sigma_t(\lambda_i)|  
\leq \sqrt{\sum_{t=0}^{T-1} \sigma_t^4(\lambda_i) \sum_{t=0}^{T-1} \sigma_t^2(\lambda_i)} \\ 
&\leq \sqrt{\left(\sum_{t=0}^{T-1} \sigma_t^2(\lambda_i) \right)^2 \sum_{t=0}^{T-1} \sigma_t^2(\lambda_i)} = \left( \sum_{t=0}^{T-1} \sigma_t^2(\lambda_i) \right)^{3/2}\\
&= C ( (\log T)^{3/2} +1 ).
\end{align*}


\end{appendix}

\bibliographystyle{IEEEtran}
\bibliography{Bibliography} 

\end{document}